\theoremstyle:=definition,openquestion,remark,plain\do{%
	\expandafter\g@addto@macro\csname th@\theoremstyle\endcsname{%
		\addtolength\thm@preskip\parskip
	}%
}
\theoremstyle{plain}
\newtheorem{theorem}{Theorem}
\newtheorem*{lemma*}{Lemma}
\theoremstyle{definition}
\newcounter{defnctr}
\newtheorem{definition}[defnctr]{Definition}
\newcounter{propctr}
\newtheorem{proposition}[propctr]{Proposition}
\newcounter{lemmactr}
\newtheorem{lemma}[lemmactr]{Lemma}
\theoremstyle{openquestion}
\theoremstyle{remark}
\newcommand{\namedref}[2]{\hyperref[#2]{#1~\ref*{#2}}}
\newcommand{\subfigureref}[2]{\hyperref[#1]{Figure~\ref*{#1}#2}}
\newtheorem*{rep@theorem}{\rep@title}
\newcommand{\newreptheorem}[2]{%
\newenvironment{rep#1}[1]{%
 \def\rep@title{\normalfont \textbf{#2 \ref{##1}}} % Had to change this to avoid italicizing
 \begin{rep@theorem}}%
 {\end{rep@theorem}}}
\newcounter{relctr} %% <- counter for relations
\everydisplay\expandafter{\the\everydisplay\setcounter{relctr}{0}} %% <- reset every eq
\newcommand\labrel[2]{%
  \begingroup
    \refstepcounter{relctr}%
    \stackrel{\textnormal{(\alph{relctr})}}{\mathstrut{#1}}%
    \originallabel{#2}%
  \endgroup
}
\title{A Theory of Universal Rate-Distortion-Classification Representations for Lossy Compression}
\author{
  Nam Nguyen \\
  Electrical and Computer Engineering \\
  Oregon State University \\
  \texttt{nguynam4@oregonstate.edu} \\
  % examples of more authors
  \And
  Thinh Nguyen \\
  Electrical and Computer Engineering \\
  Oregon State University \\
  \texttt{thinhq@oregonstate.edu} \\
  \And
  Bella Bose \\
  Electrical and Computer Engineering \\
  Oregon State University \\
  \texttt{bella.bose@oregonstate.edu} \\
}
\begin{document}
\maketitle

\begin{abstract}  
In lossy compression, Blau and Michaeli \cite{blau2019rethinking} introduced the information rate-distortion-perception (RDP) function, extending traditional rate-distortion theory by incorporating perceptual quality. More recently, this framework was expanded by defining the rate-distortion-perception-classification (RDPC) function, integrating multi-task learning that jointly optimizes generative tasks such as perceptual quality and classification accuracy alongside reconstruction tasks \cite{Wang2024}. To that end, motivated by the concept of a universal RDP encoder introduced in \cite{UniversalRDPs}, we investigate universal representations that enable diverse distortion-classification tradeoffs through a single fixed encoder combined with multiple decoders. Specifically, theoretical analysis and numerical experiment demonstrate that for the Gaussian source under mean squared error (MSE) distortion, the entire distortion-classification tradeoff region can be achieved using one universal encoder. In addition, this paper characterizes achievable distortion-classification regions for fixed universal representations in general source distributions, identifying conditions that ensure minimal distortion penalty when reusing encoders across varying tradeoff points. Experimental results using MNIST and SVHN datasets validate our theoretical insights, showing that universal encoders can obtain distortion performance comparable to task-specific encoders, thus supporting the practicality and effectiveness of our proposed universal representations.
\end{abstract}

%===========================================================================================
\section{Introduction}
Rate-distortion theory has long served as the foundation for lossy compression, characterizing the minimum distortion achievable at a given bit rate~\cite{cover1999elements}. Conventional systems are typically evaluated using full-reference distortion metrics such as MSE, PSNR, SSIM, and MS-SSIM~\cite{wang2004image, wang2003multiscale}. However, recent research has demonstrated that minimizing distortion alone is insufficient to yield perceptually convincing reconstructions. This limitation is particularly evident in deep learning-based image compression, where empirical evidence suggests that improvements in perceptual quality often come at the expense of increased distortion~\cite{agustsson2019generative,blau2018perception}. 

To address this limitation, Blau and Michaeli~\cite{blau2019rethinking} introduced the rate-distortion-perception (RDP) framework, which incorporates perceptual quality, measured via distributional divergence, as an independent optimization axis. The RDP formulation reveals a fundamental tradeoff among rate, distortion fidelity, and perceptual realism. Practical implementations, particularly those using GANs~\cite{goodfellow2014generative}, have demonstrated high perceptual quality at low bit rates~\cite{arjovsky2017wasserstein, tschannen2018deep, nowozin2016f, larsen2016autoencoding}. Common no-reference perceptual metrics include FID~\cite{heusel2017gans}, NIQE, PIQE, and BRISQUE~\cite{mittal2011blind, mittal2012making, venkatanath2015blind}.

In the context of signal restoration, the authors in \cite{CDP} pioneered the extension of the perception-distortion tradeoff by introducing the classification-distortion-perception (CDP) framework. Specifically, they incorporated the classification error rate of the restored signal as a third dimension, complementing distortion and perceptual quality. Their work rigorously established the inherent tradeoff within the CDP framework, demonstrating that it is impossible to simultaneously minimize distortion, perceptual difference, and classification error rate.

Further extending this perspective, recent studies have integrated classification tasks into lossy compression frameworks, enabling multi-task optimization that bridges the gap between image compression and visual analysis \cite{Zhang2023, Wang2024}. Initially proposed by Zhang et al. \cite{Zhang2023}, the rate-distortion-classification (RDC) model established a unified framework for optimizing the trade-off among rate, distortion, and classification accuracy in lossy image compression. Through extensive statistical analysis of multi-distribution sources, the RDC model was shown to exhibit favorable theoretical properties, including monotonic non-increasing behavior and convexity under specific conditions. Building upon this foundation, Wang et al. \cite{Wang2024} formalized the concept further by introducing the rate-distortion-perception-classification (RDPC) function. Their results rigorously demonstrated inherent trade-offs within the RDPC framework, revealing that enhancements in classification accuracy generally incur higher distortion or reduced perceptual quality.

Given these intricate tradeoffs in lossy compression scenarios, a fundamental question emerges: are these tradeoffs inherently determined by the encoder's chosen representation, or can a single encoder representation adapt to multiple objectives via different decoding strategies? To investigate this, the concept of a universal RDP framework was introduced in \cite{UniversalRDPs}, utilizing a single fixed encoder in conjunction with multiple decoders to achieve diverse points within the distortion-perception space without retraining the encoder.

In this paper, motivated by this universal approach, we investigate universal representations that enable diverse distortion-classification tradeoffs through a single fixed encoder combined with multiple decoders. Specifically, theoretical analysis and numerical experiment demonstrate that for the Gaussian source under mean squared error (MSE) distortion, the entire distortion-classification tradeoff region can be achieved using one universal encoder. In addition, this paper characterizes achievable distortion-classification regions for fixed universal representations in general source distributions, identifying conditions that ensure minimal distortion penalty when reusing encoders across varying tradeoff points. Experimental results using MNIST and SVHN datasets validate our theoretical insights, showing that universal encoders can obtain distortion performance comparable to task-specific encoders, thus supporting the practicality and effectiveness of our proposed universal representations.

%===========================================================================================
\section{Related Work}
Lossy compression has traditionally been studied through the framework of rate-distortion theory~\cite{cover1999elements}, which characterizes the minimum bit rate required to encode a source within a specified distortion level. Classical information theory has also explored distribution-preserving lossy compression, aiming to maintain statistical properties of the source in the reconstruction~\cite{saldi2015output,saldi2013randomized,zamir2001natural}. Recent advances in generative modeling~\cite{huang2020evaluating} have reignited interest in these foundations, particularly in the context of machine learning and representation learning. Modern frameworks increasingly leverage rate-distortion principles to learn compact, information-constrained representations~\cite{brekelmans2019exact,alemi2018fixing,tschannen2018recent}.

Within the rate-distortion-perception framework, distortion is typically evaluated using full-reference metrics that compare the reconstructed signal with the original. These include mean squared error (MSE), peak signal-to-noise ratio (PSNR), and structural similarity index (SSIM)~\cite{wang2004image}. In contrast, perceptual quality is assessed using no-reference metrics, which rely on the statistical properties of the output alone. Examples include Fréchet Inception Distance (FID)~\cite{heusel2017gans}, PIQE~\cite{venkatanath2015blind}, NIQE~\cite{mittal2012making}, and BRISQUE~\cite{mittal2011blind}.

The emergence of generative adversarial networks (GANs) has significantly advanced the field of perceptual compression. GAN-based models can produce visually realistic reconstructions, motivating the use of trained discriminators as perceptual quality evaluators~\cite{larsen2016autoencoding}. Theoretically, this is supported by the correspondence between GAN objectives and statistical divergence measures~\cite{nowozin2016f,arjovsky2017wasserstein,mroueh2017mcgan}. Building on these insights, several works have integrated GAN-based regularization into compressive autoencoder frameworks~\cite{agustsson2019generative,blau2019rethinking}, enabling high perceptual quality even at extremely low bitrates~\cite{mentzer2020high}. 

Blau and Michaeli~\cite{blau2018perception} first formalized the perception-distortion tradeoff, which was later extended into the RDP framework~\cite{blau2019rethinking}. These contributions inspired a broader class of models for distribution-preserving lossy compression~\cite{tschannen2018deep}, underscoring the necessity of jointly optimizing for perceptual realism and distortion fidelity.

Further extending this perspective, the classification-distortion-perception framework~\cite{CDP} introduced classification accuracy as an additional optimization axis. This formulation revealed that perceptual quality, distortion, and classification performance are fundamentally at odds: improving one often degrades the others. In multi-task learning settings, recent work~\cite{Wang2024,Zhang2023} proposed the rate-distortion-classification and rate-distortion-perception-classification functions, formalizing the joint optimization of these competing objectives. Their analyses revealed structural properties such as convexity and monotonicity of the tradeoff regions and empirically confirmed that improving classification accuracy typically incurs higher distortion or reduced perceptual quality.

To alleviate the need for retraining encoders for every task-specific operating point, the concept of universal RDP representations was proposed in~\cite{UniversalRDPs}. This framework fixes the encoder and trains multiple decoders to support diverse perceptual-distortion tradeoffs. The authors proved that such universal representations are approximately optimal under certain conditions, offering practical advantages for multi-objective compression without sacrificing performance.

%===========================================================================================
\section{Rate-Distortion-Classification Representations}
Consider a source generating observable data $X \sim p_X$, which inherently contains multiple target labels represented by variables $S_1, \dots, S_K \sim p_{S_1, \dots, S_K}$. The observable data $X$ and these intrinsic variables are correlated, following a joint probability distribution $p_{X, S_1,\dots, S_K}$ over the space $\mathcal{X} \times \mathcal{S}_1 \times \dots \times \mathcal{S}_K$. While the target variables are not directly observable, they can be inferred from $X$.  For example, if \( X \) is an image, classification tasks can be object recognition or scene understanding.
\begin{figure}[h]
\centering
\vspace{-0.2cm}
\includegraphics[width=0.7\textwidth]{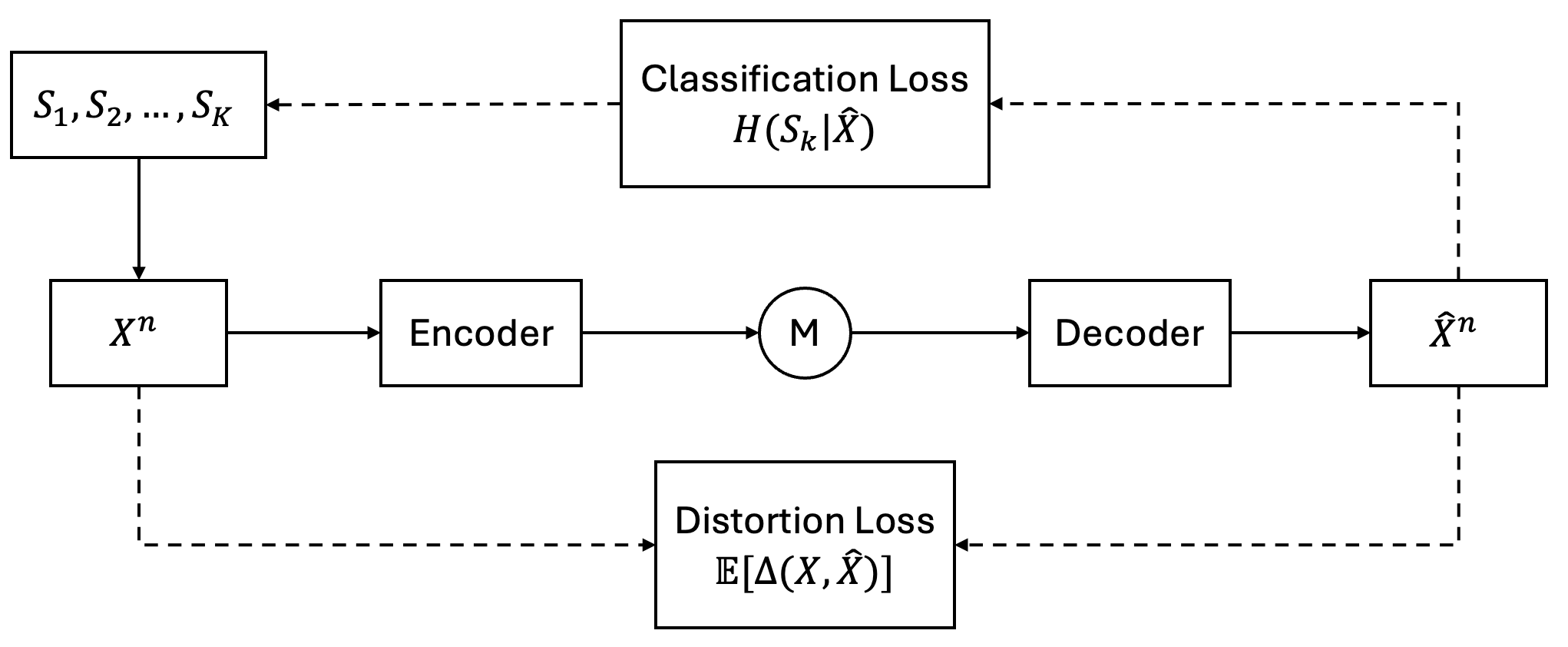}
\caption{Illustration of task-oriented lossy compression framework.}
\label{Lossy_Compression_Framework}
\end{figure}
As illustrated in Fig. \ref{Lossy_Compression_Framework}, the lossy compression process consists of an encoder and a decoder. Consider a source that generates an independent and identically distributed (i.i.d.) sequence $X_1, X_2, \dots, X_n \sim p_X$. The encoder, represented by the function $f: \mathcal{X}^n \to \{1,2,\dots,2^{nR}\}$, maps the input sequence $X^n$ into a compressed message $M$ at a rate of $R$ bits. This message is then processed by the decoder, defined as $g: \{1,2,\dots,2^{nR}\} \to \mathcal{\hat X}^n$, which reconstructs the sequence $\hat X^n$. The goal of this process is to preserve essential information while efficiently compressing data to meet the requirements of downstream tasks.

\textbf{Distortion constraint.} The reconstructed output $\hat{X}$ must satisfy the following distortion constraint:
\begin{equation}
    \mathbb{E}(\Delta(X,\hat{X})) \leq D,
\end{equation}
where $\Delta: \mathcal X \times \hat{\mathcal X} \to \mathbb{R}^+$ represents a distortion metric such as Hamming distortion or mean squared error. The expectation is computed over the joint distribution $p_{X, \hat X} = p_{\hat X|X} p_X$.

\textbf{Classification constraint.} We impose the following classification constraint:
\begin{equation}
    H(S_k|\hat{X}) \leq C_k, \qquad k \in [K],
    \label{ClassificationConstraint}
\end{equation}
for some $C_k > 0$. This ensures that the uncertainty of the classification variable $S_k$ given the reconstructed source $\hat{X}$ does not exceed $C_k$.

\textbf{Justification for the classification constraint.} To illustrate the relevance of constraint (\ref{ClassificationConstraint}), consider a single classification variable $S$. The system follows the Markov chain: $S \to X \to \hat{X} \to \hat{S}$. Applying Fano's inequality as:
\begin{equation}
    P(S \neq \hat{S}) \geq \frac{H(S | \hat{X}) - 1}{\log |S|},
\end{equation}
For a binary classification scenario ($|S| = 2$), this simplifies to: $P(S \neq \hat{S}) \geq H(S | \hat{X}) - 1$. Thus, instead of directly constraining the classification error probability $P(S \neq \hat{S})$, we impose a condition on its lower bound, namely $H(S | \hat{X})$. This justifies why the classification constraint $H(S | \hat{X}) \leq C$ serves as a meaningful and effective means to regulate classification error.

\textbf{Information RDC function.} To quantify the achievable rate under both distortion and classification constraints, the information rate-distortion-classification function for a source \( X \sim p_X \) is defined as follows:

\begin{definition}[Information Rate-Distortion-Classification Function]~\cite{Wang2024}
For a source \( X \sim p_X \) and classification variable \( S \), the \emph{information rate-distortion-classification function} is:
\begin{mini!}|s|[2] % mini! = minimize
{p_{\hat{X}|X}} % Optimization variable
{I(X;\hat{X})} % Objective function
{\label{RDC}} % Label for the optimization problem
{R(D,C) =} % Optimization result
\addConstraint{\mathbb{E}[\Delta(X, \hat{X})]}{\leq D}{\label{RDC-D}} % Distortion constraint
\addConstraint{H(S | \hat{X})}{\leq C.}{\label{RDC-C}} % Classification constraint
\end{mini!}
where $S$ represents the classification variable.
\end{definition}

%===========================================================================================
\section{Gaussian Source Case} \label{sec-RDC-Gaussian}
This section examines the rate-distortion-classification trade-off for a scalar Gaussian source. Leveraging the analytical tractability of the Gaussian setting, we derive closed-form expressions for both the RDC and DCR functions, offering valuable insights into the relationship between compression, distortion, and classification accuracy.

The closed-form expression of \( R(D,C) \) is given in the following theorem for a scalar Gaussian source.
\begin{figure}[!htbp]
\centering
\includegraphics[width=0.55\textwidth]{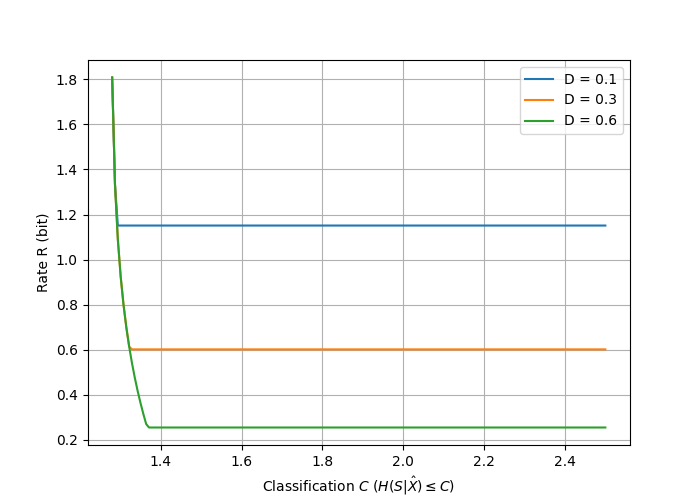}
\caption{Illustration of the information rate-distortion-classification function of a Gaussian source.}
\label{RDC}
\end{figure}
\begin{theorem}[Information Rate-Distortion-Classification Function for a Gaussian Source]\label{TheoremRDCGS}
~\cite{Wang2024} Let \( X\sim \mathcal{N}(\mu_X,\sigma_X^2) \) be a Gaussian source and \( S\sim \mathcal{N}(\mu_S,\sigma_S^2) \) be an associated classification variable, with a covariance of \( \text{Cov}(X,S) = \theta_1 \). The problem is feasible if $C \geq \frac{1}{2} \log\left(1 - \frac{\theta_1^2}{\sigma_S^2 \sigma_X^2}\right) + h(S)$. For the mean squared error distortion (i.e., \( \mathbb{E}[\Delta(X, \hat{X})] = \mathbb{E}[(X-\hat{X})^2] \)), the information rate-distortion-classification function is achieved by a jointly Gaussian estimator \( \hat{X} \) and is given by
  \begin{align*}
    R(D,C) =
    \begin{cases}
        \frac{1}{2} \log \frac{\sigma_X^2}{D}, \quad
        & D \leq \sigma_X^2 \left( 1 - \frac{1}{\rho^2} \left( 1 - e^{-2h(S) + 2C} \right) \right), \\
        -\frac{1}{2} \log \left(1 - \frac{1}{\rho^2} \left( 1 - e^{-2h(S) + 2C} \right) \right) \quad
        & D > \sigma_X^2 \left( 1 - \frac{1}{\rho^2} \left( 1 - e^{-2h(S) + 2C} \right) \right) \\
        0, \quad 
        & C > h(S) \text{ and } D > \sigma_X^2.
    \end{cases}
  \end{align*}
where \( \rho = \frac{\theta_1}{\sigma_S \sigma_X} \) represents the correlation coefficient between \( X \) and \( S \), while \( h(\cdot) \) denotes the differential entropy of a continuous random variable.
\end{theorem}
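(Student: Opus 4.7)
The plan is to prove the theorem by establishing matching achievability and converse bounds, parametrized by the three operating regimes in the statement. Without loss of generality assume $\mu_X = \mu_S = 0$. The first step is to note that the feasibility threshold $C \geq h(S)+\tfrac{1}{2}\log(1-\rho^2) = h(S\mid X)$ is forced by data processing: since $S-X-\hat X$ is Markov we have $h(S\mid\hat X)\geq h(S\mid X)$, so any $\hat X$ satisfying $h(S\mid\hat X)\leq C$ must lie above this threshold.

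For achievability, I would restrict to jointly Gaussian test channels and parametrize them by $v=\mathrm{Var}(\hat X)$ and $r=\mathrm{Cov}(X,\hat X)$. Three quantities can then be written in closed form: the MSE distortion $\sigma_X^2+v-2r$; the mutual information $\tfrac{1}{2}\log\tfrac{\sigma_X^2 v}{\sigma_X^2 v-r^2}$; and, using the Markov chain to derive $\mathrm{Cov}(S,\hat X)=\theta_1 r/\sigma_X^2$, the conditional entropy $h(S\mid\hat X)=\tfrac{1}{2}\log\bigl(2\pi e(\sigma_S^2-\theta_1^2 r^2/(\sigma_X^4 v))\bigr)$. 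Rewriting the classification constraint in these coordinates reduces to $r^2/v\geq K$, where $K=\tfrac{\sigma_X^2}{\rho^2}(1-e^{-2h(S)+2C})$. Since $\sigma_X^2$ is fixed, minimizing $I(X;\hat X)$ is equivalent to minimizing $r^2/v$, so I would check that either the classical rate-distortion optimizer $\hat X=(1-D/\sigma_X^2)X+Z$ already satisfies $r^2/v\geq K$ (Case 1), or, if not, set $r^2/v = K$ and pick $v$ so the distortion constraint is tight (Case 2); Case 3 is trivial with $\hat X$ a constant.

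For the converse I would combine two lower bounds on $I(X;\hat X)=h(X)-h(X\mid\hat X)$. The standard rate-distortion argument based on $h(X\mid\hat X) = h(X-\hat X\mid\hat X)\leq h(X-\hat X)\leq\tfrac{1}{2}\log(2\pi e D)$ yields the $\tfrac{1}{2}\log(\sigma_X^2/D)$ branch. For the classification branch I would decompose $S=aX+N$ with $a=\theta_1/\sigma_X^2$ and $N\sim\mathcal N(0,\sigma_S^2(1-\rho^2))$ independent of $X$, and apply the conditional entropy power inequality
\begin{equation*}
e^{2h(S\mid\hat X)} \;\geq\; a^2\,e^{2h(X\mid\hat X)} + e^{2h(N)},
\end{equation*}
which is valid because $N$ is independent of the pair $(X,\hat X)$, forcing $X\perp N\mid\hat X$. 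Substituting $h(S\mid\hat X)\leq C$ yields an explicit upper bound on $h(X\mid\hat X)$ that simplifies, via $\rho^2=\theta_1^2/(\sigma_S^2\sigma_X^2)$ and $e^{2h(S)}=2\pi e\sigma_S^2$, to $2\pi e(\sigma_X^2-K)$, producing the second branch of the claimed rate. Taking the maximum of the two lower bounds and matching with the three regimes completes the converse.

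The main obstacle is justifying the conditional EPI rigorously: one must argue that $\hat X$ is obtained from $X$ via an encoder/decoder whose internal randomness is independent of the noise $N$ in the decomposition $S=aX+N$, so that $X$ and $N$ are conditionally independent given $\hat X$. Once this independence is pinned down, the algebra collapses cleanly, and matching achievability with the converse in each of the three cases is routine: one checks whether $K$ is negative (making the classification constraint vacuous, Case 3), whether the classical rate-distortion optimum already satisfies $r^2/v\geq K$ (Case 1), or whether the classification constraint is binding and tight (Case 2).
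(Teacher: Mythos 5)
Your proposal is correct, and the converse argument is genuinely different from the paper's. The paper proceeds by a Gaussian-extremality reduction: it invokes two lemmas (that the MMSE of a jointly Gaussian estimator with matched first- and second-order statistics upper-bounds that of any other estimator, and that $H(S\mid\hat X)\geq H(S\mid\hat X_G)$ under matched moments) to argue that restricting to jointly Gaussian $\hat X$ is without loss of optimality, then solves the resulting finite-dimensional optimization over $(\mu_{\hat X},\sigma_{\hat X}^2,\theta_2)$ to obtain both achievability and the converse simultaneously. You instead prove the converse directly from first principles: the $\tfrac12\log(\sigma_X^2/D)$ branch from the standard maximum-entropy bound on $h(X-\hat X)$, and the classification branch from the conditional entropy power inequality applied to $S=aX+N$ with $N$ independent of $(X,\hat X)$. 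Your algebra checks out: with $a^2=\rho^2\sigma_S^2/\sigma_X^2$, $e^{2h(N)}=e^{2h(S)}(1-\rho^2)$, and $h(S\mid\hat X)\le C$, the EPI bound $e^{2C}\ge a^2e^{2h(X\mid\hat X)}+e^{2h(N)}$ rearranges exactly to $I(X;\hat X)\ge -\tfrac12\log\bigl(1-\tfrac{1}{\rho^2}(1-e^{-2h(S)+2C})\bigr)$. The conditional independence $N\perp(X,\hat X)$ that the EPI requires follows cleanly from the Markov chain $S\to X\to\hat X$ together with $N\perp X$, as you note. Your route is arguably tighter and more self-contained: the paper's Lemma stating that Gaussian maximizes $I(S;\hat X)$ for fixed second moments is asserted as a ``standard result'' without detailed justification, whereas the conditional EPI is a classical theorem with a clean independence hypothesis that your Markov argument verifies directly. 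The achievability side of your proposal (parametrize jointly Gaussian test channels by $(v,r)$, rewrite the classification constraint as $r^2/v\ge K$, and note that $I$ is increasing in $r^2/v$) is essentially the same computation the paper performs. One small imprecision: in Case 2 you say to pick $v$ so that the distortion constraint is \emph{tight}, but in fact the paper's choice $\sigma_{\hat X}^2=\theta_2=K$ makes the distortion strictly less than $D$; what is needed is only that the distortion constraint be \emph{satisfiable} once $r^2/v=K$ is fixed, which holds because $D>\sigma_X^2-K$ in this regime.
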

\begin{proof}
For completeness, we provide a detailed proof in Appendix~\ref{Appendix_Proof_RDC_GS}, extending the outline presented in~\cite{Wang2024}. This inclusion ensures the paper remains self-contained.
\end{proof}

The visualization of Theorem \ref{TheoremRDCGS} is provided in Figure \ref{RDC}, with parameters set to \( \sigma_X^2 = \sigma_S^2 = 1.0 \) and \( \theta_1 = 0.5 \). The trade-off between rate, distortion, and classification is clearly illustrated. 

This characterization highlights the intricate tradeoff between reconstruction fidelity and classification performance. In the first case of the theorem, the RDC function reduces to the classical rate-distortion formulation, with the classification constraint inactive. As the constraint tightens (i.e., \( C \) decreases), a higher rate is required to jointly satisfy both distortion and classification objectives. This dependency is reflected in the curvature of the RDC boundary illustrated in Fig.~\ref{RDC}.

Furthermore, we characterize the minimum achievable distortion as a function of \( C \) and \( R \) by the following theorem.

\begin{theorem}[Information Distortion-Classification-Rate Function for a Gaussian Source]\label{TheoremDCR_GS}
Consider a Gaussian source \( X\sim \mathcal{N}(\mu_X,\sigma_X^2) \) and an associated classification variable \( S\sim \mathcal{N}(\mu_S,\sigma_S^2) \) with covariance \( \text{Cov}(X,S) = \theta_1 \). The problem is feasible if the classification loss satisfies $ C \geq \frac{1}{2} \log\left(1 - \frac{\theta_1^2}{\sigma_S^2 \sigma_X^2}\right) + h(S)$. Under the mean squared error distortion, the information distortion-classification-rate function is given by
\begin{align*}
    D(C, R) = 
\begin{cases} 
    \sigma_X^2 e^{-2R}, \\ 
    \hspace{2cm} C \geq \frac{1}{2} \log\left(1 - \frac{\theta_1^2 (\sigma_X^2 - \sigma_X^2 e^{-2R})}{\sigma_S^2 \sigma_X^4} \right) + h(S) \\\\
    \sigma_X^2 + \frac{\sigma_S^2 \sigma_X^4 (1 - e^{2C - 2h(S)})}{\theta_1^2} 
    - 2 \frac{\sigma_S \sigma_X^3 \sqrt{(1 - e^{-2h(S) + 2C})(1 - e^{-2R})}}{\theta_1}, \\ 
    \hspace{2cm} \frac{1}{2} \log\left(1 - \frac{\theta_1^2}{\sigma_S^2 \sigma_X^2}\right) + h(S) 
    \leq C < \frac{1}{2} \log\left(1 - \frac{\theta_1^2 (\sigma_X^2 - \sigma_X^2 e^{-2R})}{\sigma_S^2 \sigma_X^4} \right) + h(S)\\
    \sigma_X^2, \\
    \hspace{2cm} C > h(S).
\end{cases}
\end{align*}
\end{theorem}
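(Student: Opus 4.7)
The plan is to solve the DCR problem directly along the same lines as \theoremref{TheoremRDCGS}, since a formal inversion of the rate-distortion-classification relation is degenerate where \theoremref{TheoremRDCGS}'s second branch applies (the minimum rate is constant in $D$ there). First I would rewrite the problem as the minimization of $\mathbb{E}[(X-\hat X)^2]$ over conditional distributions $p_{\hat X \mid X}$ subject to $I(X;\hat X) \le R$ and $H(S \mid \hat X) \le C$, with the induced Markov chain $S \to X \to \hat X$. I would then reduce to jointly Gaussian $(X, S, \hat X)$ by the same maximum-entropy surrogate argument used in the appendix proof of \theoremref{TheoremRDCGS}: replacing any feasible $\hat X$ by its Gaussian counterpart with matching second moments preserves MSE distortion and does not tighten either constraint.

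Once the problem is Gaussian, I would parameterize $\hat X = \alpha X + W$ with $W \sim \mathcal{N}(0, \sigma_W^2)$ independent of $(X, S)$, which captures every Markov jointly Gaussian $\hat X$. This yields closed forms
\[
D = (1-\alpha)^2 \sigma_X^2 + \sigma_W^2, \quad I(X;\hat X) = \tfrac{1}{2}\log\bigl(1 + \tfrac{\alpha^2 \sigma_X^2}{\sigma_W^2}\bigr), \quad H(S \mid \hat X) = \tfrac{1}{2}\log\bigl(2\pi e\bigl(\sigma_S^2 - \tfrac{\alpha^2 \theta_1^2}{\alpha^2 \sigma_X^2 + \sigma_W^2}\bigr)\bigr).
\]
The minimization is then a two-variable problem I would analyse via KKT, and the three cases in the statement drop out from which constraints are active. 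Case~3 ($C > h(S)$) has the classification constraint vacuous, leaving the trivial $D = \sigma_X^2$ at the degenerate end of the rate axis. Case~1 has the classification constraint slack at the classical RD optimum $\alpha = 1 - e^{-2R}$, so only the rate constraint binds and $D = \sigma_X^2 e^{-2R}$; the slack condition is precisely $C \ge \tfrac{1}{2}\log(1 - \rho^2(1-e^{-2R})) + h(S)$. Case~2 has both constraints tight, and enforcing $I(X;\hat X) = R$ and $H(S \mid \hat X) = C$ gives a coupled nonlinear system in $(\alpha, \sigma_W^2)$.

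The main obstacle is the algebra in Case~2: solving the two equations and collapsing the resulting distortion into the compact form stated in the theorem requires care. A clean route is to reparametrize by the correlation $r = \alpha \sigma_X / \sqrt{\alpha^2 \sigma_X^2 + \sigma_W^2}$, since the rate constraint becomes $r^2 \le 1 - e^{-2R}$ and the classification constraint becomes $\rho^2 r^2 \ge 1 - e^{2C - 2h(S)}$, each monotone in the single scalar $r$. The minimization then reduces to a one-dimensional quadratic in $r$ (together with the optimal variance of $\hat X$) on the feasible interval, and the three cases correspond to where that quadratic's unconstrained minimizer sits relative to the interval's endpoints. Plugging the binding values into the expression for $D$ and completing the square produces the stated closed form, and continuity at the Case~1/Case~2 boundary (both formulas collapse to $\sigma_X^2 e^{-2R}$) and at the feasibility limit $C = \tfrac{1}{2}\log(1 - \rho^2) + h(S)$ provides the final consistency check.
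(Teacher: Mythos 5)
Your Gaussian reduction and the parametrization $\hat X=\alpha X+W$ are sound, and the reparametrization by the correlation $r=\alpha\sigma_X/\sqrt{\alpha^2\sigma_X^2+\sigma_W^2}$ is genuinely cleaner than the paper's direct bookkeeping in $(\sigma_{\hat X},\theta_2)$. But it is precisely this reparametrization that shows your Case~2 plan cannot work as written. You correctly note that the rate constraint becomes $r^2\le 1-e^{-2R}=:u$ and the classification constraint becomes $r^2\ge(1-e^{2C-2h(S)})/\rho^2=:\ell$. Both inequalities constrain the \emph{same} scalar $r^2$, one from above and one from below, so the ``coupled nonlinear system in $(\alpha,\sigma_W^2)$'' you propose to solve --- both constraints as equalities --- is overdetermined: it forces $r^2=u$ and $r^2=\ell$ simultaneously, which holds only on the Case~1/Case~2 boundary $\ell=u$.

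Pushing your reparametrization to its conclusion exposes a larger issue that you need to confront before doing any algebra. For fixed $r$, minimizing $D=\sigma_X^2+\sigma_{\hat X}^2-2r\sigma_X\sigma_{\hat X}$ over $\sigma_{\hat X}$ gives $\sigma_{\hat X}=r\sigma_X$ and $D=\sigma_X^2(1-r^2)$, strictly decreasing in $r^2$; the classification constraint leaves $\sigma_{\hat X}$ free because it depends on $r$ alone. Hence the optimum over $r^2\in[\ell,u]$ sits at the \emph{rate} endpoint $r^2=u$, giving $D=\sigma_X^2 e^{-2R}$ --- the classification constraint affects feasibility of the interval (nonempty iff $\ell\le u$, which is exactly the theorem's Case~1 condition) but never the optimal value. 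For $(C,R)$ in the stated Case~2 range one has $\ell>u$, so every jointly Gaussian $\hat X$ with $I(X;\hat X)\le R$ satisfies $I(S;\hat X)=-\tfrac12\log(1-\rho^2 r^2)\le-\tfrac12\log(1-\rho^2 u)<h(S)-C$, and the problem is infeasible; your ``one-dimensional quadratic on the feasible interval'' has an empty domain there and never produces the theorem's second branch. (The paper's own appendix derivation of Case~2 simultaneously imposes $\sigma_{\hat X}^2=\theta_2=\sigma_X^2\ell$ and $\theta_2=\sigma_X\sigma_{\hat X}\sqrt{u}$, and the resulting $(\theta_2,\sigma_{\hat X})$ violates the classification constraint whenever $\ell>u$, so this tension is not an artifact of your method --- you should flag it rather than force the algebra through.)
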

\begin{proof} 
The proof is provided in Appendix \ref{Appendix_Proof_DCR_GS}.
\end{proof}

For any fixed \( R \), as \( C \) increases from \( \frac{1}{2} \log\left(1 - \frac{\theta_1^2}{\sigma_S^2 \sigma_X^2}\right) + h(S) \) to \( \frac{1}{2}\log\left(1 - \frac{\theta_1^2 (\sigma_X^2 - \sigma_X^2 e^{-2 R})}{\sigma_S^2\sigma_X^4} \right) +h(S) \), the distortion \( D(C,R) \) decreases monotonically from 
\[
\sigma_X^2 + \frac{\sigma_S^2\sigma_X^4(1-e^{2C-2h(S)})}{\theta_1^2} - 2 \frac{\sigma_S \sigma_X^3 \sqrt{(1 - e^{-2 h(S) + 2C})(1 - e^{-2 R})}}{\theta_1}
\]
to the optimal value \( \sigma^2_X e^{-2R} \). Increasing \( C \) beyond this point does not yield further improvements in distortion.
\begin{figure}[h] 
    \centering
    \subfigure[$D(C,R)$ function for a Gaussian source.]{% 
        \includegraphics[width=0.48\textwidth]{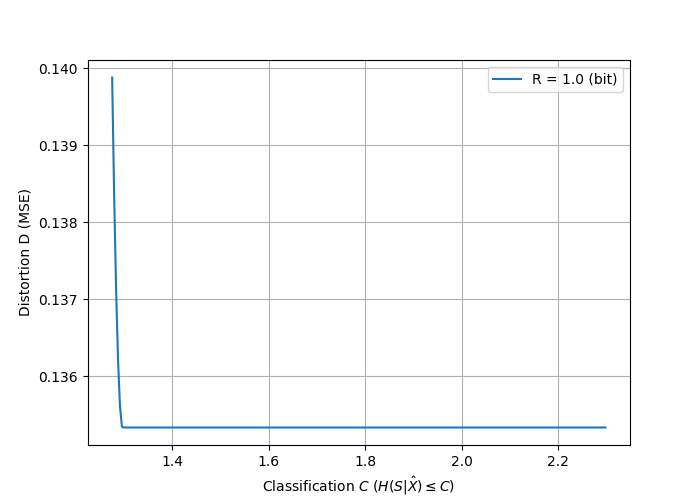}
        \label{fig:DCR}
    } 
    \subfigure[$D(C,R)$ functions across multiple rates.]{% 
        \includegraphics[width=0.48\textwidth]{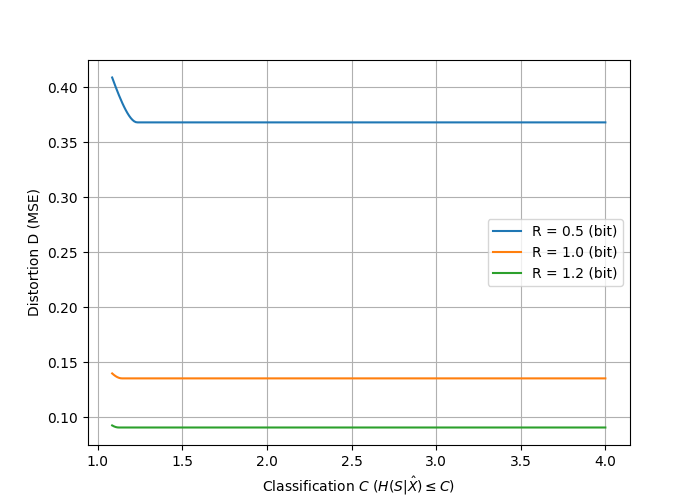}
        \label{fig:DCR_Mutltiple_Rate}
    } 
    \caption{Illustration of the distortion-classification-rate functions: 
    \subref{fig:DCR} shows the DCR function at a fixed rate, and \subref{fig:DCR_Mutltiple_Rate} shows how the function changes across multiple rates.}
    \label{fig:CDR_Function}
\end{figure}

Figure \ref{fig:CDR_Function} shows the information distortion-classification-rate function in Theorem \ref{TheoremDCR_GS}, with parameters set to \( \sigma_X^2 = \sigma_S^2 = 1.0 \) and \( \theta_1 = 0.7 \). Subfigure \ref{fig:CDR_Function}\subref{fig:DCR} depicts the DCR trade-off at a fixed rate, while subfigure \ref{fig:CDR_Function}\subref{fig:DCR_Mutltiple_Rate} illustrates how the function varies across multiple rate values.

These plots illustrate the operational interpretation of the DCR surface. At a fixed rate, relaxing the classification constraint (i.e., increasing \( C \)) enables lower achievable distortion. Conversely, for a fixed classification constraint, increasing the rate allows the encoder to preserve more source information, thereby reducing distortion. This visualization highlights the fundamental trade-off among rate, distortion, and classification, and identifies regimes where further relaxing classification requirements yields diminishing returns in distortion.

%===========================================================================================
\section{Convexity and Bounds of the Distortion-Classification-Rate Function for a General Source}
\label{sec:ConvexityBounds}
In this section, we investigate the fundamental properties of the distortion-classification-rate function for a general source. We begin by establishing its convexity and monotonicity, which are essential for understanding optimization behavior under joint distortion and classification constraints. We then derive a sharp upper bound on the distortion increase required to achieve optimal classification performance, providing insight into the inherent trade-offs between these objectives.

In \cite{Wang2024}, the authors showed that the rate-distortion-classification function \( R(D, C) \) is convex in \( D \) and \( C \) for a general source. Similarly, we establish the convexity of the distortion-classification-rate function \( D(C, R) \) in the following theorem.
\begin{theorem}\label{Theorem_DCR_Convexity}
The function \( D(C, R) \), defined for all points \( (C, R) \) where \( D(C, R) < +\infty \), satisfies the following properties:
\begin{itemize}
    \item It is monotonically non-increasing in both \( C \) and \( R \);
    \item It is convex.
\end{itemize}
\end{theorem}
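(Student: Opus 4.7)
The plan is to handle monotonicity via a simple feasibility argument and convexity via a time-sharing construction using an auxiliary variable.

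For monotonicity, I observe that whenever $C' \geq C$ and $R' \geq R$, every conditional $p_{\hat{X}|X}$ that is feasible for $(C, R)$ (i.e., satisfies $I(X;\hat{X}) \leq R$ and $H(S|\hat{X}) \leq C$) is automatically feasible for $(C', R')$. Taking the infimum of $\mathbb{E}[\Delta(X,\hat{X})]$ over a larger feasible set can only decrease it, so $D(C', R') \leq D(C, R)$, giving non-increase in each argument.

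For convexity, the natural first attempt is to take $\epsilon$-optimal conditionals $p_1^*, p_2^*$ at $(C_1, R_1), (C_2, R_2)$ and form the mixture channel $p := \lambda p_1^* + (1-\lambda) p_2^*$. This preserves the distortion (linear in $p_{\hat{X}|X}$) and the rate constraint (since $I(X;\hat{X})$ is convex in $p_{\hat{X}|X}$ for fixed $p_X$), but the classification constraint is delicate: via the Markov chain $S \to X \to \hat{X}$, the conditional $p_{\hat{X}|S}$ is linear in $p_{\hat{X}|X}$, so $I(S;\hat{X})$ is convex in $p_{\hat{X}|X}$, and therefore $H(S|\hat{X}) = H(S) - I(S;\hat{X})$ is \emph{concave} in $p_{\hat{X}|X}$. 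Thus the raw mixture can have $H(S|\hat{X}) > \lambda C_1 + (1-\lambda) C_2$, and this is the main obstacle to the argument.

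To resolve it, I would introduce a binary time-sharing random variable $U$ independent of $(X, S)$ with $P(U = 1) = \lambda$, and use the augmented reconstruction $\hat{Y} = (\hat{X}, U)$, where $\hat{X}$ conditional on $U = i$ is drawn from $p_i^*(\cdot | X)$ (embedding $U$ into an enlarged reconstruction alphabet with extended distortion $\Delta(x, (\hat{x}, u)) := \Delta(x, \hat{x})$). The three quantities then decompose via the chain rule: the distortion equals $\lambda D(C_1, R_1) + (1-\lambda) D(C_2, R_2) + \epsilon$; the rate satisfies $I(X; \hat{Y}) = I(X; U) + I(X; \hat{X}|U) = 0 + \lambda I(X;\hat{X}_1) + (1-\lambda) I(X;\hat{X}_2) \leq \lambda R_1 + (1-\lambda) R_2$; and the classification cost satisfies $H(S | \hat{Y}) = H(S | \hat{X}, U) = \lambda H(S|\hat{X}_1) + (1-\lambda) H(S|\hat{X}_2) \leq \lambda C_1 + (1-\lambda) C_2$, using independence of $U$ from $(X, S)$. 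Hence the point $(\lambda C_1 + (1-\lambda) C_2, \lambda R_1 + (1-\lambda) R_2)$ is achievable with distortion at most $\lambda D(C_1, R_1) + (1-\lambda) D(C_2, R_2) + \epsilon$; letting $\epsilon \to 0$ yields joint convexity of $D(C, R)$.
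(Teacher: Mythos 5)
Your proof is correct and takes a genuinely different route from the paper's. The paper invokes Witsenhausen's geometric framework: each reconstruction value $\alpha$ is tagged with the tuple $\bigl(\mathbf p_\alpha,\, H(\mathbf p_\alpha),\, H(\mathbf T\mathbf p_\alpha),\, \mathbf d_\alpha^T\mathbf p_\alpha\bigr)$, averaging over $p_{\hat X}$ sweeps out the convex hull $\mathcal C$ of these tuples, $D(C,R)$ is the lower envelope of a slice of $\mathcal C$, and convexity of $D$ is read off from convexity of $\mathcal C$; monotonicity is then argued from convexity together with the zero-distortion point at $(H(S|X),R_{\max})$. You instead argue by time-sharing, and you correctly diagnose why naive channel mixing fails --- $I(X;\hat X)$ is convex but $H(S|\hat X)$ is concave in $p_{\hat X|X}$ --- and repair it with the lift $\hat Y=(\hat X,U)$, which makes rate, distortion, and classification loss all decompose linearly via chain rules once $U$ is independent of $(X,S)$. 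Your route has two advantages: the nested-feasible-set argument gives coordinate-wise monotonicity directly, whereas the paper's convexity-plus-corner argument only controls $D$ along rays toward $(H(S|X),R_{\max})$ and does not by itself establish coordinate-wise non-increase without a supplementary step; and your time-sharing construction applies unchanged to continuous alphabets, whereas the simplex framework is written for finite $X$, $S$, $\hat X$. Both proofs implicitly allow the reconstruction alphabet to be enlarged (the paper through the size of $\Delta_k$, you through the explicit $(\hat X,U)$ lift), so that is not a point of difference.
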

\begin{proof}
A proof based on \cite{Wang2024} is provided in Appendix \ref{Appendix_Proof_DCR_Convexity}.
\end{proof}

Building on the proof of Theorem~\ref{Theorem_DCR_Convexity}, it can be similarly shown that the function \( C(D, R) \) is convex and monotonically non-increasing over the domain where \( C(D, R) < +\infty \). This duality reveals a structural symmetry between distortion and classification objectives under a fixed rate constraint. In the following, we establish bounds on the classification-distortion function.
\begin{figure}[!htbp]
\centering
\includegraphics[width=0.65\textwidth]{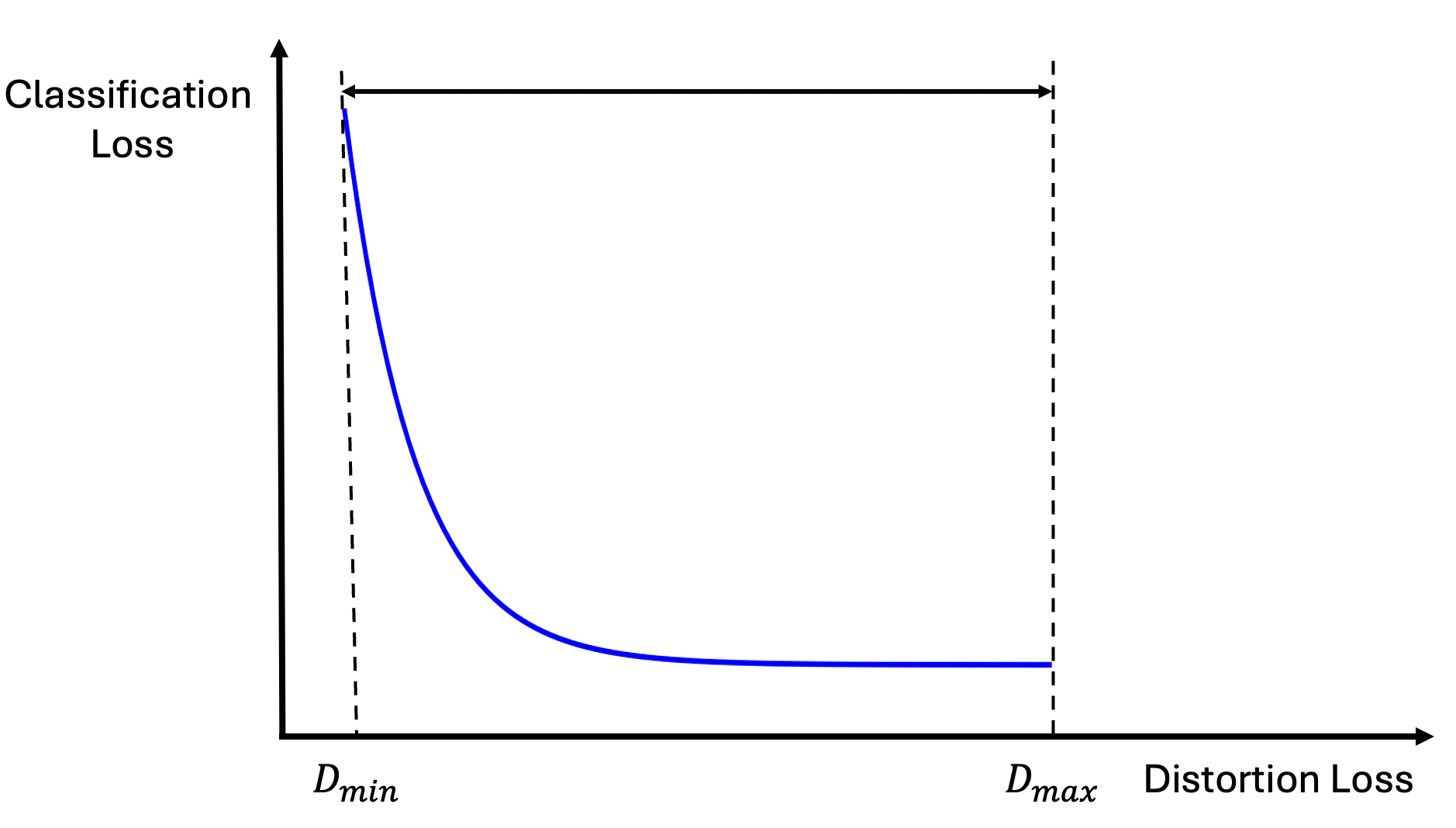}
\caption{Bounding the classification-distortion-rate function: the gap between \( D_{\min} \) and \( D_{\max} \) quantifies the additional distortion required to achieve the minimum classification loss. For the MSE distortion, Theorem \ref{Theorem_CDR_Bound} establishes that this increase is bounded by a factor of 2, corresponding to a 3~dB drop in PSNR.}
\label{Fig:Bound_CDR}
\end{figure}
First, for a given rate \( R(D, C) \), we seek to answer the following fundamental question: \emph{What is the minimal attainable distortion \( D_{\min} \) achievable by an optimal lossy compression scheme without considering the classification constraint?} This corresponds to the value:
\begin{mini!}|s|[2]
    {p_{\hat{X}|X}}
    {\mathbb{E}[\Delta(X, \hat{X})]}
    {\label{D_min}}
    {D_{\min} = }
\end{mini!}
which represents the horizontal coordinate of the leftmost point on the classification-distortion function.

Second, a complementary and critical question arises: \emph{What minimal distortion can be achieved if we strictly enforce the optimal classification constraint?} Formally, this corresponds to the optimization:
\begin{mini!}|s|[2]
    {p_{\hat{X}|X}}
    {\mathbb{E}[\Delta(X, \hat{X})]}
    {\label{D_max}}
    {D_{\max} = }
    \addConstraint{H(S|\hat{X}) = C_{\min}}
\end{mini!}
which identifies the horizontal coordinate of the rightmost point on the classification-distortion function, with \( C_{\min} \) representing the minimal achievable classification loss \( H(S | \hat{X}) \) (illustrated in Figure \ref{Fig:Bound_CDR}).

The quantities \( D_{\min} \) and \( D_{\max} \) characterize the extremes of achievable distortion at a fixed encoding rate. Specifically, \( D_{\min} \) corresponds to purely fidelity-driven compression, while \( D_{\max} \) reflects the distortion incurred when prioritizing classification accuracy. Understanding the gap between these extremes is essential for assessing the practical impact of classification constraints in lossy compression systems.
\begin{theorem}[Bound on the Classification-Distortion-Rate Function]\label{Theorem_CDR_Bound}
For a given rate \( R(D, C) \) and mean squared error distortion, the following inequality holds:
\begin{equation}
D_{\max} \leq 2 D_{\min},
\end{equation}
where \( D_{\min} \) and \( D_{\max} \) are defined in (\ref{D_min}) and (\ref{D_max}), respectively.

This bound is attained by an estimator \( \hat{X} \), characterized by the conditional distribution \( p_{\hat{X}|X} \), achieving the minimum classification loss \( H(S | \hat{X}) = C_{\min} \) while incurring a maximum mean squared error distortion of exactly \( 2 D_{\min} \).
\end{theorem}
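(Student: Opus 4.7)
The plan is to prove the upper bound constructively, by exhibiting a specific rate-$R$ reconstruction $\hat X^{\star}$ that achieves $H(S | \hat X^{\star}) = C_{\min}$ and mean-squared error exactly $2 D_{\min}$. Since $D_{\max}$ is an infimum, this immediately forces $D_{\max} \leq 2 D_{\min}$. The construction is a classification-aware analogue of the posterior-sampling decoder behind the Blau--Michaeli $2\times$ bound in rate-distortion-perception theory \cite{blau2019rethinking,UniversalRDPs}.

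First, I would fix an encoder producing a message $M$ at rate at most $R$ whose induced posterior $p_{X | M}$ satisfies both $\mathbb{E}[\mathrm{Var}(X | M)] = D_{\min}$ (MMSE-optimality) and $H(S | M) = C_{\min}$ (classification-optimality). For Gaussian sources this happens automatically: as seen from the calculations around Theorem~\ref{TheoremDCR_GS}, the classification entropy of any rate-$R$ jointly Gaussian reconstruction depends only on $R$, so the MMSE-optimal encoder is already classification-optimal. For a general source, I would try to argue existence using the convexity of $D(C, R)$ from Theorem~\ref{Theorem_DCR_Convexity} together with an extreme-point analysis of the RDC operating region in the spirit of \cite{Wang2024}.

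Second, I would define $\hat X^{\star}$ by conditional resampling: given $M$, draw $\hat X^{\star} \sim p_{X | M}$ independently of $X$. Conditional on $M$, the pair $(X, \hat X^{\star})$ is then two i.i.d.\ draws from $p_{X | M}$, so a direct calculation gives
\begin{equation*}
\mathbb{E}\bigl[(X - \hat X^{\star})^2 \bigm| M \bigr] \;=\; 2\,\mathrm{Var}(X \mid M),
\end{equation*}
and taking expectations yields $\mathbb{E}[(X - \hat X^{\star})^2] = 2\,\mathbb{E}[\mathrm{Var}(X|M)] = 2 D_{\min}$. Rate admissibility of $\hat X^{\star}$ follows from data processing: $I(X; \hat X^{\star}) \leq I(X; M) \leq R$. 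For the classification identity, the conditional independence $S \perp \hat X^{\star} \mid M$ (from the Markov chain $S \to X \to M \to \hat X^{\star}$ together with the independence of the resampling step) gives $H(S | \hat X^{\star}) \geq H(S | M) = C_{\min}$; equality would then be secured provided the posterior-sampled $\hat X^{\star}$ is a sufficient statistic for $S$ given $M$, so that $I(S; \hat X^{\star}) = I(S; M)$.

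The main obstacle lies precisely in the first step and in the sufficiency condition of the third step: realizing a single rate-$R$ encoder $M$ that is simultaneously MMSE-optimal and classification-optimal, while also ensuring that the posterior-sampled $\hat X^{\star}$ retains all the information about $S$ contained in $M$. For Gaussian sources all three conditions align effortlessly, but for general sources this compatibility is nontrivial and I expect it to be the heart of the proof; it will likely require either a time-sharing or convex-combination construction over the $(D, C, R)$ surface, or a direct convex-geometry argument placing the point $(D_{\min}, C_{\min}, R)$ on a face of the feasible region that is realized by a single optimal channel.
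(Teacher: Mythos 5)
Your proposal follows essentially the same route as the paper's appendix proof: posterior resampling from the MMSE encoder $M$, with the identical cross-moment calculation ($\mathbb{E}[X^{\top}\hat{X}_{D,C}]=\mathbb{E}[\|\mathbb{E}[X\mid M]\|^2]$ and $\mathbb{E}[\|\hat{X}_{D,C}\|^2]=\mathbb{E}[\|X\|^2]$, assembled into $\mathbb{E}[\|X-\hat{X}_{D,C}\|^2]=2\,\mathbb{E}[\|X-\mathbb{E}[X\mid M]\|^2]=2D_{\min}$). That computation is correct and is exactly the paper's.

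The obstacles you flag in your closing paragraph, however, are genuine gaps that the paper's argument does not close either. The appendix proof opens by taking as given an estimator $\hat{X}_{D,C}$ with $H(S\mid\hat{X}_{D,C})=C_{\min}$, then asserts ``by construction'' that $X$ and $\hat{X}_{D,C}$ are conditionally i.i.d.\ given $M$ and that $\mathbb{E}[X\mid M]$ is the global MMSE estimator --- which is precisely the unproven compatibility you name. One must show that the $C_{\min}$-achieving decoder can be chosen to (i) route through the $D_{\min}$-achieving encoder $M$ and (ii) be a posterior sample $p_{\hat{X}_{D,C}\mid M}=p_{X\mid M}$, and moreover that this posterior sample actually attains $C_{\min}$: along the chain $S\to X\to M\to\hat{X}_{D,C}$, data processing gives only $H(S\mid\hat{X}_{D,C})\ge H(S\mid M)$, and unless $\hat{X}_{D,C}$ retains all the information about $S$ that $M$ carries (for example, when the posteriors $p_{X\mid M=m}$ have disjoint supports so $\hat{X}_{D,C}$ determines $M$), the resampled reconstruction may fall short of $C_{\min}$. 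As written, both your outline and the paper verify that this particular reconstruction has distortion $2D_{\min}$; neither verifies the classification side. Your instinct that resolving this compatibility is the heart of the matter for non-Gaussian sources is correct, and it is also where the paper's own argument is incomplete.
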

\begin{proof}
The detailed proof is provided in Appendix \ref{Appendix_Proof_CDR_Bound}.
\end{proof}

This key theoretical result shows that, under MSE distortion, enforcing optimal classification increases distortion by at most a factor of two. In practical terms, this translates to a worst-case degradation of 3 dB in peak signal-to-noise ratio (PSNR). As a result, practitioners can balance classification performance and reconstruction fidelity with confidence that the loss in perceptual quality remains bounded.

%===========================================================================================
\section{Universal Representations}
In many practical scenarios, designing separate encoding schemes tailored to each specific combination of distortion and classification constraints is inefficient and costly. This motivates the exploration of \emph{universal representations}, where a single encoder is utilized to serve multiple distortion-classification constraints simultaneously. This section formally defines the universal representation framework, characterizes the rate penalty associated with universality, and provides detailed theoretical results for Gaussian and general sources.

\subsection{Definitions}
When the rate-distortion-classification function is interpreted as the minimum rate required to satisfy a given distortion-classification constraint pair \( (D, C) \) through joint optimization of both the encoder and the decoder, \emph{universal} RDC framework extends this setting by fixing the encoder and allowing only the decoder to adapt. This enables the system to accommodate multiple pairs of constraint \( (D, C) \in \Theta \), where \( \Theta \) is a specified set of such pairs, as illustrated in Figure~\ref{fig:Universal}.
\begin{figure}[h]
\centering
\includegraphics[width=0.6\textwidth]{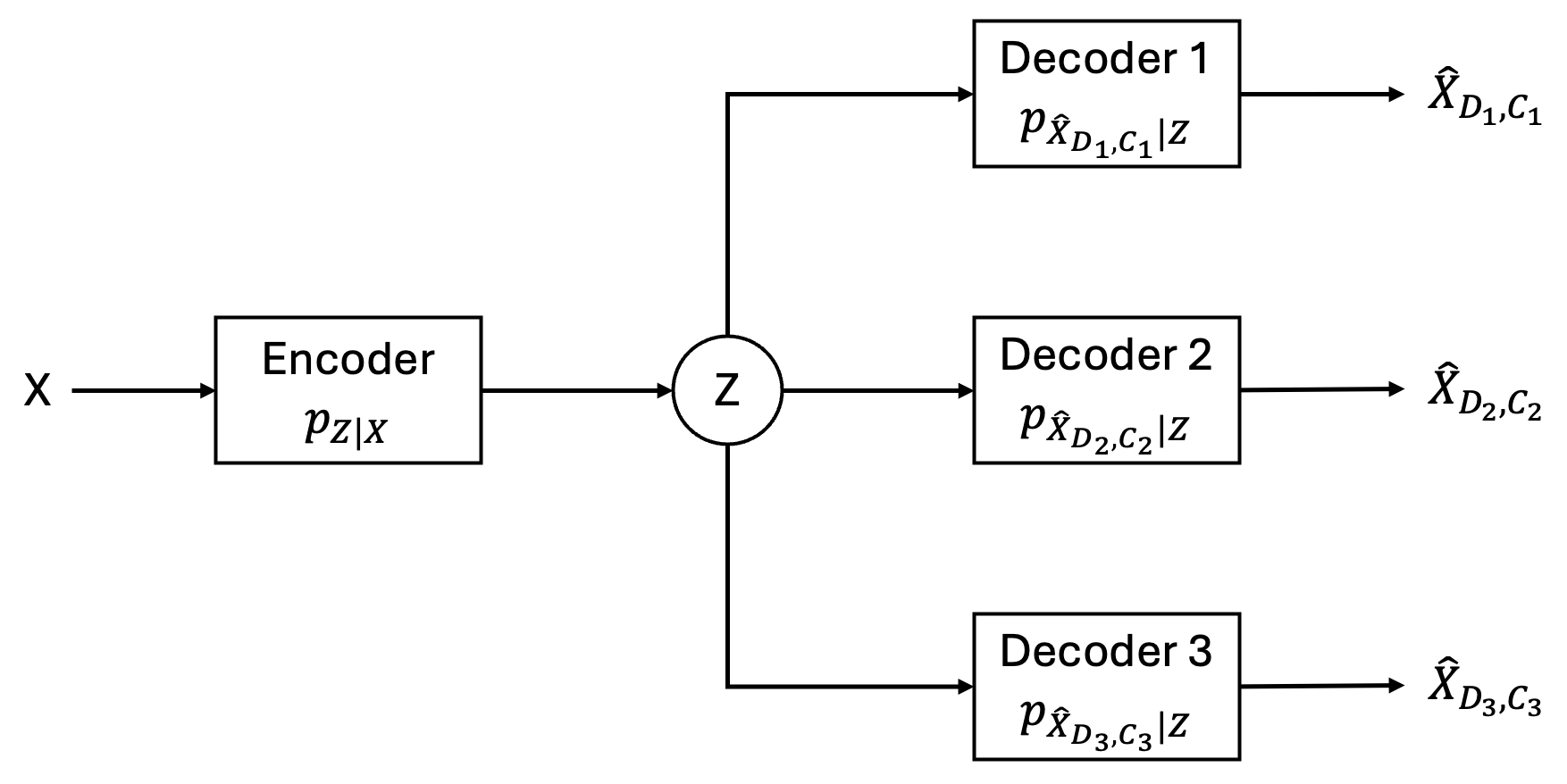}
\caption{Illustration of the universal representation framework.}
\label{fig:Universal}
\end{figure}

A fascinating scenario arises when \( \Theta \) contains all \( (D, C) \) pairs that lie along the RDC curve at a fixed rate. This leads to a fundamental question: \emph{How much additional rate is required to simultaneously satisfy all constraints in \( \Theta \) using a single encoder, rather than customizing the encoder for each target pair?} Ideally, the cost of such universality is minimal, meaning that the additional rate required to support all tasks is close to that needed for the most demanding one.

To formalize this notion, we introduce \emph{information universal rate-distortion-classification function}. We assume that \( X \) is a random variable and \( \Theta \) is a non-empty set of constraint pairs \( (D, C) \).

\begin{definition}[Information Universal Rate-Distortion-Classification Function]
Let \( Z \) be a representation of \( X \), generated through a conditional distribution \( p_{Z|X} \). Define \( \mathcal{P}_{Z|X}(\Theta) \) as the set of such transformations for which, for every \( (D, C) \in \Theta \), there exists a decoder \( p_{\hat{X}_{D,C}|Z} \) such that $\mathbb{E}[\Delta(X, \hat{X}_{D,C})] \leq D \quad \text{and} \quad H(S | \hat{X}_{D,C}) \leq C$, with the Markov chain \( X \leftrightarrow Z \leftrightarrow \hat{X}_{D,C} \) holding. The \emph{information universal RDC function} is defined as
\begin{equation}\label{eqn:R_phi}
R(\Theta) = \inf_{p_{Z|X} \in \mathcal{P}_{Z|X}(\Theta)} I(X; Z).
\end{equation}
\end{definition}

We refer to \( R(\Theta) \) as the information universal RDC function. A representation \( Z \) is said to be \(\Theta\)-universal with respect to \( X \) if it allows all constraints in \( \Theta \) to be satisfied using appropriate decoders. In this context, each decoder \( p_{\hat{X}_{D,C}|Z} \) maps the shared representation \( Z \) to a reconstruction \( \hat{X}_{D,C} \) tailored to the specific constraint pair.

The quantity \( R(\Theta) \) thus captures the minimal rate needed to meet \emph{all} constraints in \( \Theta \) with a shared encoder. In contrast, the term $
\sup_{(D, C) \in \Theta} R(D, C)$
represents the rate required to satisfy only the most demanding individual constraint. The difference between these two defines the \emph{rate penalty}.
\begin{definition}[Rate Penalty]
The \emph{rate penalty} for a constraint set \( \Theta \) is defined as
\begin{equation}
    A(\Theta) = R(\Theta) - \sup_{(D, C) \in \Theta} R(D, C),
\end{equation}
which quantifies the additional rate incurred when using a single encoder to satisfy all constraints in \( \Theta \).
\end{definition}

Let \( \Omega(R) = \{ (D, C) : R(D, C) \leq R \} \) denote the set of distortion-classification pairs that are achievable at rate \( R \). Ideally, we would like \( A(\Omega(R)) = 0 \) for all \( R \), indicating that the entire trade-off curve can be achieved without incurring any additional cost for universality. This would eliminate the need to design separate encoders for different distortion-classification goals at the same rate.

Given a representation \( Z \), we define the associated \emph{achievable distortion-classification region} as
\[
\Omega(p_{Z|X}) = \left\{ (D, C) : \exists\, p_{\hat{X}_{D,C}|Z} \text{ such that } \mathbb{E}[\Delta(X, \hat{X}_{D,C})] \leq D,\quad H(S | \hat{X}_{D,C}) \leq C \right\}.
\]
Intuitively, \( \Omega(p_{Z|X}) \) describes the set of constraint pairs that can be satisfied using the fixed representation \( Z \). If a representation \( Z \) satisfies \( I(X; Z) = R \) and \( \Omega(p_{Z|X}) = \Omega(R) \), then \( Z \) is said to achieve the \emph{maximal distortion-classification region} at rate \( R \). That is, for any other representation \( Z' \) with \( I(X; Z') \leq R \), we have $
\Omega(p_{Z'|X}) \subseteq \Omega(p_{Z|X})$.

%-------------------------------------------------------------------------------------------
\subsection{Main Results}
\begin{theorem}[No Rate Penalty for a Gaussian Source]\label{Theorem_gaussian_universality}
Let \( X \sim \mathcal{N}(\mu_X, \sigma_X^2) \) be a scalar Gaussian source and let \( S \sim \mathcal{N}(\mu_S, \sigma_S^2) \) be a classification variable with covariance \( \mathrm{Cov}(X, S) = \theta_1 \). Assume that the distortion is measured using the mean squared error and that the classification loss is measured via conditional entropy. Let \( \Theta \) denote any non-empty set of distortion-classification constraint pairs \( (D, C) \). Then,
\begin{equation}
    A(\Theta) = 0,
\end{equation}
which implies that satisfying the most demanding constraint in \( \Theta \) is sufficient to simultaneously satisfy all others using a fixed encoder and there is no rate penalty for universality in this case.

Furthermore, consider any representation \( Z \) that is jointly Gaussian with \( X \) and satisfies
\begin{equation}
    I(X; Z) = \sup_{(D, C) \in \Theta} R(D, C).
\end{equation}
Then the following inclusion holds:
\begin{equation}\label{eqn:sup_rate}
    \Theta \subseteq \Omega(p_{Z|X}) = \Omega(I(X; Z)),
\end{equation}
meaning that \( Z \) achieves the maximal distortion-classification region at rate \( I(X; Z) \); i.e., all constraints in \( \Theta \) are simultaneously achievable via appropriate decoders applied to a common representation \( Z \).
\end{theorem}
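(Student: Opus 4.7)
The plan is to exploit the fact that, by Theorem~\ref{TheoremRDCGS}, for each $(D,C)\in\Theta$ the optimal reconstruction $\hat{X}^*_{D,C}$ is jointly Gaussian with $(X,S)$, and that both the MSE distortion and the conditional entropy $H(S|\hat{X})$ depend on $\hat{X}$ only through the covariance structure of $(X,S,\hat{X})$. Consequently, it suffices to build, from one fixed representation $Z$, a family of decoders whose outputs reproduce those Gaussian covariance structures. I would first set $R^\star = \sup_{(D,C)\in\Theta} R(D,C)$ and take as the universal encoder the canonical additive-noise representation $Z = X + N_Z$, with $N_Z \sim \mathcal{N}(0,\sigma_{N_Z}^2)$ independent of $(X,S)$ and $\sigma_{N_Z}^2$ chosen so that $I(X;Z) = R^\star$; any jointly Gaussian representation at this mutual-information level is equivalent to this one up to an invertible linear transformation, so this choice is without loss of generality.

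\textbf{Main construction.} For each $(D,C)\in\Theta$, Theorem~\ref{TheoremRDCGS} guarantees a jointly Gaussian $\hat{X}^*_{D,C} = \alpha X + V$, where $V$ is independent zero-mean Gaussian, that achieves the pair $(D,C)$ with $I(X;\hat{X}^*_{D,C}) = R(D,C) \leq R^\star$. I would then define the decoder
\[
\hat{X}_{D,C} \;=\; \alpha Z + W, \qquad W \sim \mathcal{N}\!\bigl(0,\; \mathrm{Var}(V) - \alpha^2 \sigma_{N_Z}^2\bigr),
\]
with $W$ independent of $(X,S,Z)$. The Gaussian mutual-information formulas give $I(X;\hat{X}^*_{D,C}) \leq I(X;Z) \;\Longleftrightarrow\; \alpha^2 \sigma_{N_Z}^2 \leq \mathrm{Var}(V)$, which is precisely the non-negativity of $\mathrm{Var}(W)$. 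A direct computation then shows $\mathrm{Cov}(X,\hat{X}_{D,C}) = \alpha\sigma_X^2$, $\mathrm{Cov}(S,\hat{X}_{D,C}) = \alpha\theta_1$ and $\mathrm{Var}(\hat{X}_{D,C}) = \alpha^2\sigma_X^2 + \mathrm{Var}(V)$, matching every entry of the covariance of $(X,S,\hat{X}^*_{D,C})$. Because all three random vectors are zero-mean and jointly Gaussian, $(X,S,\hat{X}_{D,C})$ and $(X,S,\hat{X}^*_{D,C})$ are equal in distribution, so $\mathbb{E}[(X-\hat{X}_{D,C})^2] \leq D$ and $H(S|\hat{X}_{D,C}) \leq C$. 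The Markov chain $X \leftrightarrow Z \leftrightarrow \hat{X}_{D,C}$ holds by construction, so $\Theta \subseteq \Omega(p_{Z|X})$ and $A(\Theta) = 0$.

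\textbf{Maximality of the region and main obstacle.} Applying the same construction to every $(D,C)$ with $R(D,C) \leq I(X;Z)$ yields $\Omega(I(X;Z)) \subseteq \Omega(p_{Z|X})$, while the data-processing inequality $R(D,C) \leq I(X;\hat{X}) \leq I(X;Z)$ for any $\hat{X}$ decoded from $Z$ gives the reverse inclusion, proving $\Omega(p_{Z|X}) = \Omega(I(X;Z))$. The crux of the argument is the variance-positivity check in the decoder construction: one must verify that the scalar inequality on mutual information translates \emph{exactly} into the scalar inequality on variances needed to define the auxiliary noise $W$, and this is precisely where Gaussianity is essential, since a single scalar bottleneck parameter governs both information and covariance. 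Finally, the degenerate branches of Theorem~\ref{TheoremRDCGS} (the zero-rate case, where $\hat{X}$ may be chosen independent of $X$, and the inactive-classification case, where the problem reduces to ordinary Gaussian rate-distortion) should be handled as small separate checks so that all three pieces of the closed-form expression are covered.
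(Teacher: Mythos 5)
Your proof is correct in its essentials, and it takes a genuinely different and more robust route than the paper's own argument. The paper uses a \emph{deterministic} affine decoder $\hat{X}_{D,C} = \mathrm{sign}(\rho_{XZ})\,\gamma(Z-\mu_Z)+\mu_X$, tuning the scalar $\gamma$ so that $\sigma_{\hat{X}}^2$ hits a target value, and then verifies the distortion and classification constraints by direct computation. But a deterministic invertible affine map of $Z$ necessarily has $I(X;\hat{X}_{D,C})=I(X;Z)=R^\star$ and hence $\rho_{X\hat{X}}^2 = 1-e^{-2R^\star}$, which is independent of $\gamma$; consequently $H(S|\hat{X}_{D,C}) = h(S) + \tfrac{1}{2}\log\bigl(1-\rho_{SX}^2(1-e^{-2R^\star})\bigr)$ is a single constant, and varying $\gamma$ only sweeps a one-dimensional curve of distortion values at that fixed $C$, not the full two-dimensional region $\Omega(R^\star)$. (The appendix's assertion that $\theta_2^2/\sigma_{\hat X}^2$ equals $\kappa:=\sigma_S^2\sigma_X^4(1-e^{-2h(S)+2C})/\theta_1^2$ is an arithmetic slip; it actually equals $\sigma_X^2(1-e^{-2R^\star})$ regardless of $\gamma$.) Your stochastic decoder $\hat{X}_{D,C}=\alpha Z + W$, with the auxiliary noise $W$ chosen so that the full covariance of $(X,S,\hat{X}_{D,C})$ matches that of the optimal conventional estimator $\hat{X}^*_{D,C}$, is exactly what is needed: the added noise lets $I(X;\hat{X}_{D,C})$ drop to $R(D,C)\le R^\star$, so the classification level $C$ genuinely varies with $(D,C)$, and the non-negativity of $\mathrm{Var}(W)$ is equivalent, as you correctly observe, to $R(D,C)\le R^\star$. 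Distributional matching then gives both constraints in one stroke, and the data-processing direction handles $\Omega(p_{Z|X})\subseteq\Omega(I(X;Z))$.

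Two small points to tighten when writing this up. First, the residual $V$ in $\hat{X}^*_{D,C}=\alpha X + V$ is not zero-mean in general: Theorem~\ref{TheoremRDCGS} normalizes $\mu_{\hat X}=\mu_X$, so $\mathbb{E}[V]=(1-\alpha)\mu_X$; simply add the matching constant to the decoder, $\hat{X}_{D,C}=\alpha Z + W + (1-\alpha)\mu_X$, so the means as well as the covariances agree. Second, the "WLOG $Z = X + N_Z$" reduction deserves one sentence: any scalar $Z$ jointly Gaussian with $X$ and with $I(X;Z)=R^\star$ differs from $X+N_Z$ (for the appropriate $\sigma_{N_Z}^2$) by an invertible affine map, which can be absorbed into each decoder, and the degenerate endpoints ($R^\star=0$, where $Z\perp X$ and one takes $\alpha=0$ throughout, and $D>\sigma_X^2$ or $C>h(S)$, where the constant decoder suffices) need the separate checks you already flagged.
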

\begin{proof} 
The proof is provided in Appendix \ref{Appendix_Proof_GS_Universality}.
\end{proof}

\begin{proposition}[Equivalence Between Zero Rate Penalty and Full Distortion-Classification Region]
\label{Prop_equivalence}
Assume the following regularity conditions are satisfied:
\begin{itemize}
    \item \( \sup_{(D, C) \in \Omega(R)} R(D, C) = R' \);
    \item The infimum in the definition of \( R(\Omega(R')) \) is achieved.
\end{itemize}
Then, the rate penalty satisfies \( A(\Omega(R')) = 0 \) if and only if there exists a representation \( Z \) with \( I(X; Z) = R' \) such that $\Omega(p_{Z|X}) = \Omega(I(X; Z))$.
\end{proposition}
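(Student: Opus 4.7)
The plan is to reduce the claim to two elementary structural facts about universal representations. First, for any conditional distribution $p_{Z|X}$, the data-processing inequality gives $\Omega(p_{Z|X}) \subseteq \Omega(I(X;Z))$: any decoder $p_{\hat{X}_{D,C}|Z}$ witnessing a pair $(D,C) \in \Omega(p_{Z|X})$ satisfies $X \leftrightarrow Z \leftrightarrow \hat{X}_{D,C}$, so $I(X;\hat{X}_{D,C}) \leq I(X;Z)$, and since $(X, \hat{X}_{D,C})$ is feasible for the ordinary RDC problem at $(D,C)$, we obtain $R(D,C) \leq I(X;Z)$. Second, taking the supremum over $(D,C) \in \Theta$ and the infimum over $p_{Z|X} \in \mathcal{P}_{Z|X}(\Theta)$ in the same chain of inequalities yields $R(\Theta) \geq \sup_{(D,C) \in \Theta} R(D,C)$ for any non-empty $\Theta$. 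I would record both of these at the outset.

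Next, I would use the first regularity condition to show that $\Omega(R)$ and $\Omega(R')$ coincide, which aligns the supremum appearing in the rate-penalty definition. Since $\sup_{(D,C) \in \Omega(R)} R(D,C) = R'$, every pair in $\Omega(R)$ has rate at most $R'$, so $\Omega(R) \subseteq \Omega(R')$; the reverse inclusion follows from $R' \leq R$. Consequently $\sup_{(D,C) \in \Omega(R')} R(D,C) = R'$ as well.

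With these preliminaries, the forward direction proceeds as follows. Assuming $A(\Omega(R')) = 0$, the rate-penalty definition forces $R(\Omega(R')) = R'$. By the second regularity condition the infimum defining $R(\Omega(R'))$ is attained by some $Z^{*} \in \mathcal{P}_{Z|X}(\Omega(R'))$ with $I(X;Z^{*}) = R'$. Membership in $\mathcal{P}_{Z|X}(\Omega(R'))$ supplies $\Omega(R') \subseteq \Omega(p_{Z^{*}|X})$, while the first structural fact supplies $\Omega(p_{Z^{*}|X}) \subseteq \Omega(I(X;Z^{*})) = \Omega(R')$; sandwiching gives $\Omega(p_{Z^{*}|X}) = \Omega(I(X;Z^{*}))$. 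Conversely, if $Z$ with $I(X;Z) = R'$ and $\Omega(p_{Z|X}) = \Omega(R')$ exists, then $Z$ lies in $\mathcal{P}_{Z|X}(\Omega(R'))$, so $R(\Omega(R')) \leq I(X;Z) = R'$; combined with the lower bound $R(\Omega(R')) \geq \sup_{(D,C) \in \Omega(R')} R(D,C) = R'$ from the second structural fact, this gives $R(\Omega(R')) = R'$ and therefore $A(\Omega(R')) = 0$.

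The main obstacle is not mathematical depth but careful bookkeeping: the proof hinges on sandwiching four distinct-looking sets, $\Omega(p_{Z|X})$, $\Omega(I(X;Z))$, $\Omega(R)$, and $\Omega(R')$, into one common set, and one must verify that the two regularity conditions are each doing exactly what is needed. Without the first, the supremum in the rate-penalty definition could strictly undershoot $R'$ at a discontinuity of the RDC function and the sandwich would not close; without the second, the witnessing $Z^{*}$ in the forward direction might fail to exist and only an approximate conclusion could be drawn.
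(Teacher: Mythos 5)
Your proof is correct and follows essentially the same route as the paper: both directions reduce to sandwiching $\Omega(p_{Z|X})$ against $\Omega(R')$, using the data-processing inclusion $\Omega(p_{Z|X}) \subseteq \Omega(I(X;Z))$ on one side and membership in $\mathcal{P}_{Z|X}(\Omega(R'))$ on the other, together with the identity $R(\Omega(R')) = R'$ extracted from $A(\Omega(R'))=0$ and the two regularity conditions. The only difference is that you are more explicit than the paper in verifying $\Omega(R) = \Omega(R')$ and in stating the non-negativity of $A(\Theta)$ as a standing fact, which tightens the bookkeeping but does not change the argument.
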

\begin{proof}
The proof is provided in Appendix~\ref{Appendix_Proof_prop_equivalence}.
\end{proof}

In addition, we consider a general source \( X \sim p_X \) and characterize the distortion-classification region induced by an arbitrary representation \( Z \) under MSE distortion.

\begin{theorem}[Achievable Universality Region for a General Source]\label{Theorem_general_universality}
Consider a general source \( X \sim p_X \) and a classification variable \( S \), with covariance \( \mathrm{Cov}(X, S) = \theta_1 \). Assume that distortion is measured using mean squared error, and classification loss is measured using conditional entropy, $H(S | \hat{X})$. Let \( Z \) be an arbitrary representation of \( X \), and define the minimum mean square error (MMSE) estimator of \( X \) given \( Z \) as \( \tilde{X} = \mathbb{E}[X | Z] \). Then, the closure of this region, denoted \( \mathrm{cl}\big(\Omega(p_{Z \mid X})\big) \), satisfies
\begin{equation*}
\Omega(p_{Z|X}) \subseteq 
\left\{ (D, C) : 
    D \geq \mathbb{E}[\|X - \tilde{X}\|^2] + 
    \begin{aligned}
        &\inf_{p_{\hat{X}}} \quad W^2_2(p_{\tilde{X}}, p_{\hat{X}}) \\
        &\text{s.t} \quad H(S | \hat{X}) \leq C
    \end{aligned}
\right\}
\subseteq \mathrm{cl}(\Omega(p_{Z|X})),
\end{equation*}
where \( W^2_2(\cdot, \cdot) \) denotes the squared 2-Wasserstein distance is defined as  
\begin{equation*}
\begin{aligned}
 W_{2}^2(p_X, p_{\hat{X}}) = \underset{p_{X,\hat{X}}} \inf \,\, &\mathbb{E}[\|X - \hat{X}\|^2] \\
 \text{s.t.} \,\,\,\,\,\,\,\,\,\, & \int_{-\infty }^{\infty} p_{X,\hat{X}}dX = p_{\hat{X}}, \int_{-\infty }^{\infty} p_{X,\hat{X}}d\hat{X} = p_{X}.
\end{aligned}
\end{equation*} 

Furthermore, the closure \( \mathrm{cl}(\Omega(p_{Z|X})) \) contains the following two extreme points:
\begin{align*}
(D^{(a)}, C^{(a)}) &= \left( \mathbb{E}[\|X-\tilde{X}\|^2], \sum_{s}\sum_{\tilde{x}} p_{\tilde{X}} p_{S|\tilde{X}} \log\frac{1}{p_{S|\tilde{X}}} \right), \\
(D^{(b)}, C^{(b)}) &= \left( \mathbb{E}[\|X - \tilde{X}\|^2] + W_2^2(p_{\tilde{X}}, p_{\hat{X}}^{(U)}), C_{\min}^{(\text{U})} \right),
\end{align*}
where \( C_{\min}^{(\text{U})} \) is the minimum classification loss achievable at a given distortion level under universality, defined by the following optimization problem. With the given rate, $R(\Theta)$, we have:
\begin{mini!}|s|[2]
    {p_{\hat{X}}} 
    {H(S | \hat{X})} 
    {\label{C_min}} 
    {p_{\hat{X}}^{C_{\min}^{(\text{U})}} = \arg} 
    \addConstraint{\mathbb{E}[\|X - \hat{X}\|^2]}{\leq D.}
\end{mini!}
The corresponding minimum classification loss is given by
\begin{equation*}
    C_{\min}^{(\text{U})} = \sum_{s}\sum_{\hat{x}^{(U)}} p_{\hat{X}}^{(U)} p_{S|\hat{X}^{(U)}} \log\frac{1}{p_{S|\hat{X}^{(U)}}}.
\end{equation*}
\end{theorem}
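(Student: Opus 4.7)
The plan rests on the MMSE orthogonality decomposition together with the variational definition of the 2-Wasserstein distance. Since every reconstruction in $\Omega(p_{Z|X})$ must satisfy the Markov chain $X \leftrightarrow Z \leftrightarrow \hat{X}$, and $\tilde{X} = \mathbb{E}[X|Z]$ is $\sigma(Z)$-measurable, the standard orthogonality argument gives
\[
\mathbb{E}[\|X-\hat{X}\|^2] = \mathbb{E}[\|X-\tilde{X}\|^2] + \mathbb{E}[\|\tilde{X}-\hat{X}\|^2],
\]
because the cross term $\mathbb{E}[(X-\tilde{X})^{\top}(\tilde{X}-\hat{X})]$ vanishes after conditioning on $Z$ and using the tower rule. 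This identity is what drives everything.

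For the converse inclusion $\Omega(p_{Z|X}) \subseteq \{\cdots\}$, I would fix any $(D,C) \in \Omega(p_{Z|X})$ with associated decoder $p_{\hat{X}|Z}$. The orthogonality identity above yields $\mathbb{E}[\|\tilde{X}-\hat{X}\|^2] \le D - \mathbb{E}[\|X-\tilde{X}\|^2]$, and since the joint law of $(\tilde{X},\hat{X})$ is a valid coupling of the marginals $p_{\tilde{X}}$ and $p_{\hat{X}}$, the definition of $W_2^2$ gives $W_2^2(p_{\tilde{X}}, p_{\hat{X}}) \le \mathbb{E}[\|\tilde{X}-\hat{X}\|^2]$. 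Combining these with the classification constraint $H(S|\hat{X}) \le C$ places $(D,C)$ in the set on the right-hand side.

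For the achievability inclusion (up to closure), I would start with a pair $(D,C)$ in the middle set and a feasible marginal $p_{\hat{X}}$ attaining (or approximating) the infimum in $W_2^2$. Let $\pi^{*}(\tilde{x},\hat{x})$ be an optimal transport plan between $p_{\tilde{X}}$ and $p_{\hat{X}}$, and define a stochastic decoder by $p_{\hat{X}|Z}(\hat{x}|z) = \pi^{*}(\hat{x}\mid \tilde{X}(z))$, possibly enriched with independent auxiliary randomness so that the Markov chain $X \leftrightarrow Z \leftrightarrow \hat{X}$ is preserved. By construction the output has marginal $p_{\hat{X}}$, the joint $(\tilde{X},\hat{X})$ realizes $\pi^{*}$, so the orthogonal decomposition gives exactly $D$, and the classification constraint is inherited from the marginal. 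Since the infimum defining $W_2^2$ and the infimum over $p_{\hat{X}}$ may only be approached, I would invoke a limiting/approximation argument to land inside $\mathrm{cl}(\Omega(p_{Z|X}))$; this measurability-plus-limit step is where I expect the main technical friction.

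Finally, I would derive the two extreme points as corollaries. For $(D^{(a)},C^{(a)})$, take the deterministic decoder $\hat{X} = \tilde{X}$, which trivially makes the Wasserstein term vanish, so $D = \mathbb{E}[\|X-\tilde{X}\|^2]$ and $H(S|\hat{X}) = H(S|\tilde{X})$ unfolds into the stated double sum by definition of conditional entropy. For $(D^{(b)},C^{(b)})$, instantiate the achievability construction with the minimizer $p_{\hat{X}}^{(U)}$ of the optimization problem defining $C_{\min}^{(U)}$; the resulting distortion is the sum in the claim, and the classification loss equals the stated expression. The main obstacle, as noted, is the achievability step: ensuring that the transport-plan-based decoder is well-defined and stays within the Markov structure, and that the infimum in $W_2^2$ can be approached by actual decoders, which is exactly why closure appears in the statement rather than equality.
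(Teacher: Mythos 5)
Your proof follows the same approach as the paper: the MMSE orthogonality (Pythagorean) decomposition for the outer bound, the observation that $(\tilde{X},\hat{X})$ is a feasible coupling so $W_2^2(p_{\tilde{X}},p_{\hat{X}})\le\mathbb{E}[\|\tilde{X}-\hat{X}\|^2]$, an $\epsilon$-approximate transport-plan decoder factoring through $\tilde{X}$ for the closure inclusion, and the two instantiations $\hat{X}=\tilde{X}$ and $p_{\hat{X}}=p_{\hat{X}}^{(U)}$ for the extreme points. In fact your achievability step is somewhat more explicit than the paper's (which merely asserts existence of a Markov-compatible $\hat{X}'$ with $\mathbb{E}[\|\tilde{X}-\hat{X}'\|^2]\le W_2^2+\epsilon$), but the underlying argument is the same.
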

\begin{proof} 
The proof is provided in Appendix \ref{Appendix_Proof_General_Universality}.
\end{proof}
\begin{figure}[!htbp]
\centering
\includegraphics[width=0.7\textwidth]{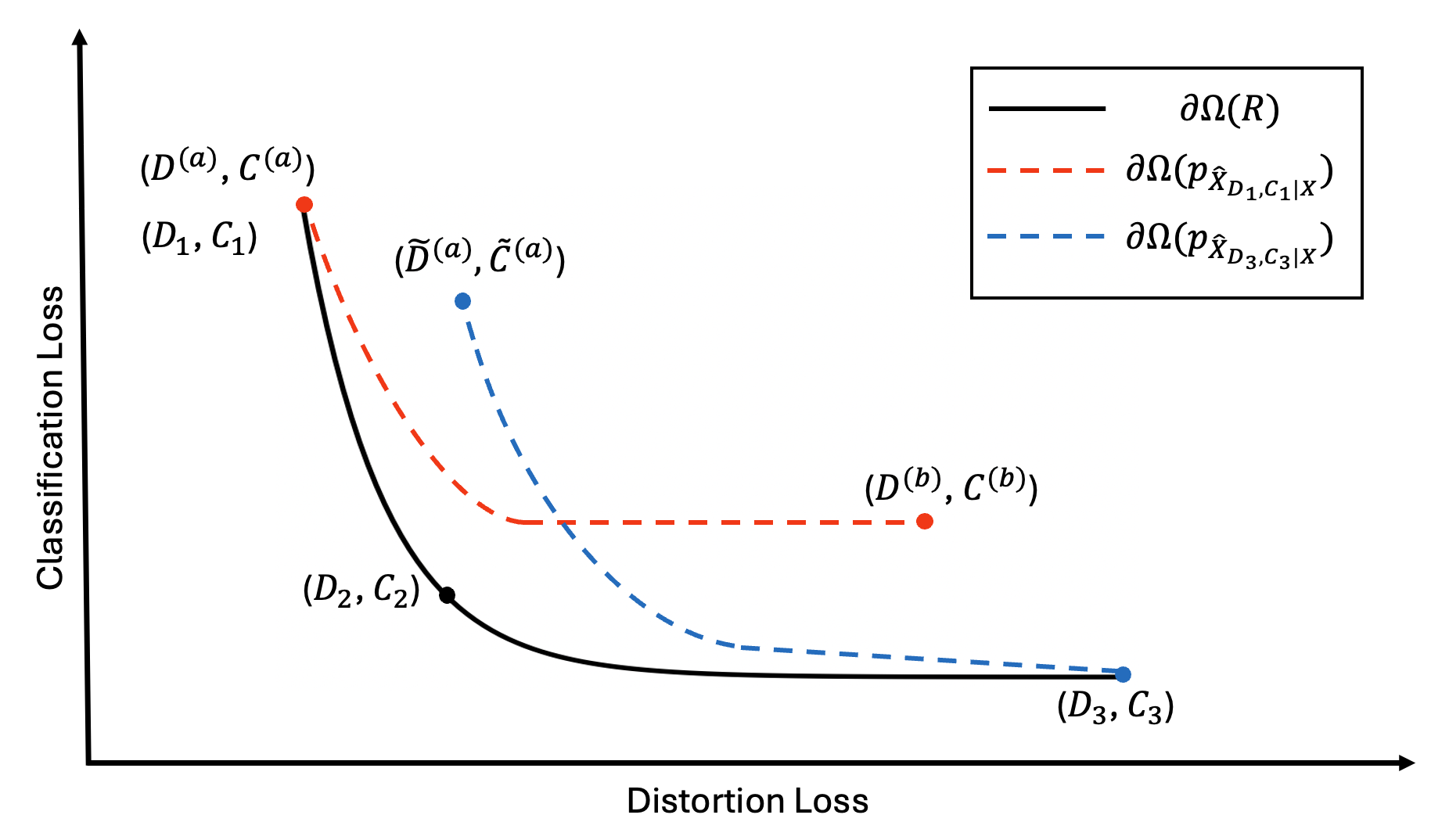}
\caption{Universality for a general source. Shown are the boundaries of achievable distortion-classification regions corresponding to three different representations: the minimal distortion point \( (D_1, C_1) \), where \( R(D_1, C_1) = R(D_1, \infty) \); the midpoint \( (D_2, C_2) \); and the minimal classification loss point \( (D_3, C_3) \), where \( C_3 = C_{\min}^{\text{(Conv)}} \). Also illustrated are the two extreme points \( (D^{(a)}, C^{(a)}) \) and \( (D^{(b)}, C^{(b)}) \) associated with the representation \( \hat{X}_{D_1,C_1} \), where \( (D^{(a)}, C^{(a)}) \) coincides with \( (D_1, C_1) \).}
\label{Fig:Universal_GernalSource_CDR}
\end{figure}

In addition, to gain further insight into the structure of achievable regions, let \( Z \) be the optimal reconstruction \( \hat{X}_{D,C} \) associated with a point \( (D, C) \) on the distortion-classification trade-off curve for a given rate \( R \); that is, $
I(X; \hat{X}_{D,C}) = R(D, C) = R$,
and the distortion and classification loss satisfy $
\mathbb{E}[\|X - \hat{X}_{D,C}\|^2] = D, \quad H(S | \hat{X}_{D,C}) = C$.
We assume, for simplicity, that such an optimal reconstruction \( \hat{X}_{D,C} \) exists for every \( (D, C) \) on the tradeoff boundary, and that decreasing either \( D \) or \( C \) would violate the constraint \( R(D, C) = R \). Under this assumption, the point \( (D, C) \) lies on the boundary of the closure \( \mathrm{cl}(\Omega(p_{\hat{X}_{D,C} \mid X})) \).

According to Theorem~\ref{Theorem_general_universality}, the set \( \mathrm{cl}(\Omega(p_{\hat{X}_{D,C} \mid X})) \) includes two extreme points: the upper-left corner \( (D^{(a)}, C^{(a)}) \), corresponding to minimal distortion, and the lower-right corner \( (D^{(b)}, C^{(b)}) \), corresponding to minimal classification loss. Moreover, this closure defines a convex region that includes all these points. Figure~\ref{Fig:Universal_GernalSource_CDR} illustrates both \( \Omega(R) \) and the achievable region \( \Omega(p_{\hat{X}_{D,C} \mid X}) \) for various representative points \( (D, C) \) on the trade-off curve. The following theorem provides a quantitative characterization of this structure.

\begin{theorem}[Quantitative Characterization for a General Source]
\label{Theorem_Quantitative_Results}
Let \( \hat{X}_{D_1, C_1} \) be the optimal reconstruction associated with the point \( (D_1, C_1) \) on the rate-distortion-classification trade-off curve, i.e., in the conventional sense where $
I(X; \hat{X}_{D_1, C_1}) = R(D_1, C_1)$.
Then, the upper-left extreme point of the achievable region \( \Omega(p_{\hat{X}_{D_1, C_1} \mid X}) \) coincides with that of \( \Omega(R) \); specifically, $
(D^{(a)}, C^{(a)}) = (D_1, C_1)$. Consider the lower-right extreme points \( (D^{(b)}, C^{(b)}) \in \Omega(p_{\hat{X}_{D_1, C_1} \mid X}) \) and \( (D_3, C_3) \in \Omega(R) \), where \( C_3 = C_{\min}^{\text{(Conv)}} \) and \( R(D_3, C_3) = R(D_1, \infty) \). The distortion gap between these two points satisfies the following inequality:
\begin{equation}
D_3 - D^{(b)} \geq \sigma_X^2 + \sigma_{\hat{X}_{D_3, C_3}}^2 
- 2 \sigma_{\hat{X}_{D_3, C_3}} \sqrt{\sigma_X^2 - D_1} - 2D_1,
\end{equation}
and the corresponding distortion ratio is lower bounded as:
\begin{equation}
\frac{D_3}{D^{(b)}} \geq 
\frac{\sigma_X^2 + \sigma_{\hat{X}_{D_3, C_3}}^2 
- 2 \sigma_{\hat{X}_{D_3, C_3}} \sqrt{\sigma_X^2 - D_1}}{2D_1}.
\end{equation}

Moreover, in the special case where the squared 2-Wasserstein distance between \( p_X \) and \( p_{\hat{X}_{D_3, C_3}} \) is zero, i.e., \( W_2^2(p_X, p_{\hat{X}_{D_3, C_3}}) = 0 \), which implies \( \sigma_X^2 = \sigma_{\hat{X}_{D_3, C_3}}^2 \), the distortion gap becomes negligible under the following conditions:
\begin{equation}
D_3 - D^{(b)} \approx 0 \quad \text{when } D_1 \approx 0 \text{ or } D_1 \approx \sigma_X^2,
\end{equation}
\begin{equation}
\frac{D_3}{D^{(b)}} \approx 1 \quad \text{when } D_1 \approx \sigma_X^2.
\end{equation}
\end{theorem}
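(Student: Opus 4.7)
The plan is to split the proof into three layers: a self-consistency step that identifies the upper-left corner, a Wasserstein--Gelbrich step that controls $D^{(b)}$ against $D_3$, and a routine substitution for the special cases. A crucial auxiliary identity threads through the argument: $\sqrt{\sigma_X^2 - D_1} = \sigma_{\hat{X}_{D_1,C_1}}$, which converts the geometric quantity in the statement into an MMSE-variance.

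For the first claim $(D^{(a)}, C^{(a)}) = (D_1, C_1)$, I would argue that the MMSE estimator $\tilde{X} = \mathbb{E}[X \mid \hat{X}_{D_1,C_1}]$ coincides with $\hat{X}_{D_1,C_1}$. Since $(D_1,C_1)$ is taken on the RDC curve at rate $R$ with $R(D_1, C_1) = R(D_1, \infty)$, the classification constraint is slack at that corner and $\hat{X}_{D_1,C_1}$ can be taken to be the classical MSE-optimal rate-distortion reconstruction. The standard orthogonality argument then yields self-consistency: if $\tilde{X} \neq \hat{X}_{D_1,C_1}$, replacing the reconstruction by $\tilde{X}$ would strictly reduce MSE without increasing $I(X;\hat{X})$ (by data processing), contradicting optimality. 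Plugging $\tilde{X} = \hat{X}_{D_1,C_1}$ into the formulas of Theorem \ref{Theorem_general_universality} gives $D^{(a)} = D_1$ and $C^{(a)} = H(S \mid \hat{X}_{D_1,C_1}) = C_1$. As a by-product, expanding $D_1 = \mathbb{E}[(X - \hat{X}_{D_1,C_1})^2]$ via the orthogonality identity $\mathbb{E}[X\,\hat{X}_{D_1,C_1}] = \mathbb{E}[\hat{X}_{D_1,C_1}^2]$ (under matching means) yields $\sigma^2_{\hat{X}_{D_1,C_1}} = \sigma_X^2 - D_1$.

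For the main gap, I would use Theorem \ref{Theorem_general_universality} to write $D^{(b)} = D_1 + W_2^2(p_{\hat{X}_{D_1,C_1}}, p_{\hat{X}}^{(U)})$, and decompose $D_3 = \sigma_X^2 + \sigma_{\hat{X}_{D_3,C_3}}^2 - 2\,\mathrm{Cov}(X, \hat{X}_{D_3,C_3})$. Applying the Gelbrich lower bound $W_2^2(p_1,p_2) \geq (\sigma_1 - \sigma_2)^2$ (under equal means) to the Wasserstein term, combined with a Cauchy--Schwarz-type covariance bound that exploits the universality requirement that the relevant reconstruction factors through $Z = \hat{X}_{D_1,C_1}$ --- so that the covariance is controlled by $\sigma_{\hat{X}_{D_1,C_1}} \sigma_{\hat{X}_{D_3,C_3}}$ rather than the weaker $\sigma_X \sigma_{\hat{X}_{D_3,C_3}}$ --- should, after substituting $\sigma_{\hat{X}_{D_1,C_1}} = \sqrt{\sigma_X^2 - D_1}$ and rearranging, produce the stated lower bound on $D_3 - D^{(b)}$. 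The ratio bound then follows by an algebraic rearrangement of the gap inequality, keeping $D^{(b)}$ in the denominator as prescribed.

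The special cases are routine: under $W_2^2(p_X, p_{\hat{X}_{D_3,C_3}}) = 0$ one has $\sigma_{\hat{X}_{D_3,C_3}} = \sigma_X$, so the RHS of the gap inequality collapses to $2\sigma_X^2 - 2\sigma_X\sqrt{\sigma_X^2 - D_1} - 2 D_1$, which is readily checked to vanish as $D_1 \to 0$ or $D_1 \to \sigma_X^2$; the ratio approximation $D_3/D^{(b)} \approx 1$ as $D_1 \to \sigma_X^2$ follows from the same asymptotics. The main obstacle I anticipate is the middle step: making rigorous the covariance bound by $\sigma_{\hat{X}_{D_1,C_1}} \sigma_{\hat{X}_{D_3,C_3}}$ (rather than the looser $\sigma_X \sigma_{\hat{X}_{D_3,C_3}}$) given that $p_{\hat{X}}^{(U)}$ is characterized only implicitly through the minimum-classification optimization. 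This is precisely where the hypothesis that the encoder is fixed to $\hat{X}_{D_1,C_1}$ --- rather than re-optimized per operating point --- does the real work, and the Markov structure $X \leftrightarrow \hat{X}_{D_1,C_1} \leftrightarrow \hat{X}^{(U)}$ will be essential in carrying it through.
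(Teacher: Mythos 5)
Your first step and your special-case verification align with the paper, but the middle step---the one you yourself flag as ``the main obstacle''---does not work as sketched, and it is precisely where the paper does something you have not anticipated.

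First, the direction problem. To lower bound $D_3 - D^{(b)}$ you need an \emph{upper} bound on $D^{(b)}$ together with a \emph{lower} bound on $D_3$. Writing $D^{(b)} = D_1 + W_2^2(p_{\hat{X}_{D_1,C_1}}, p_{\hat{X}}^{(U)})$ and then applying the Gelbrich inequality $W_2^2 \geq (\sigma_1 - \sigma_2)^2$ gives you a \emph{lower} bound on $D^{(b)}$, which pushes the difference $D_3 - D^{(b)}$ in the wrong direction. The paper avoids this entirely: it never touches the Wasserstein term in $D^{(b)}$ analytically, but instead invokes \theoremref{Theorem_CDR_Bound} to get $D^{(b)} \leq 2D_1$ directly. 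That factor-of-two bound is the missing ingredient on the $D^{(b)}$ side.

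Second, the Markov structure you want to exploit does not apply to the variable you need it for. You propose a Cauchy--Schwarz bound on $\mathrm{Cov}(X,\hat{X}_{D_3,C_3})$ using the chain $X \leftrightarrow \hat{X}_{D_1,C_1} \leftrightarrow \hat{X}^{(U)}$, but the covariance that appears in the expansion of $D_3$ involves $\hat{X}_{D_3,C_3}$, which is the \emph{conventional} reconstruction at a different operating point with its own jointly re-optimized encoder-decoder. It does not factor through $Z = \hat{X}_{D_1,C_1}$, so that Markov chain is unavailable. What the paper uses instead is purely information-theoretic: from $I(X; \mathbb{E}[X\mid\hat{X}_{D_3,C_3}]) \leq I(X;\hat{X}_{D_3,C_3}) = R(D_1,\infty)$ and the distortion-rate function it concludes $\mathbb{E}[\|X - \mathbb{E}[X\mid\hat{X}_{D_3,C_3}]\|^2] \geq D_1$; it then upper bounds this conditional-mean MSE by the MSE of the best linear predictor in $\hat{X}_{D_3,C_3}$ (orthogonality principle), and after rearranging obtains exactly $D_3 \geq \sigma_X^2 + \sigma_{\hat{X}_{D_3,C_3}}^2 - 2\sigma_{\hat{X}_{D_3,C_3}}\sqrt{\sigma_X^2 - D_1}$. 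Note that $\sqrt{\sigma_X^2 - D_1}$ emerges from this rate-constraint argument, not from the identity $\sigma_{\hat{X}_{D_1,C_1}} = \sqrt{\sigma_X^2 - D_1}$ that you lean on, so the argument does not require $\hat{X}_{D_1,C_1}$ to have a particular Gaussian form. Combining this lower bound on $D_3$ with $D^{(b)} \leq 2D_1$ then yields both the additive and the ratio claim. Your self-consistency argument for $(D^{(a)}, C^{(a)}) = (D_1, C_1)$ and your treatment of the $W_2^2 = 0$ special cases are fine and match what the paper does, but the crux of the theorem---the two inequalities---needs the rate-constraint/linear-predictor step and the factor-of-two bound, neither of which is in your proposal.
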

\begin{proof} 
The proof is provided in Appendix \ref{Appendix_Proof_Quantitative_Results}.
\end{proof}

We previously focused on the one-shot setting. We now extend our analysis to the asymptotic regime, where an i.i.d. sequence \( X^n \) is jointly encoded, with each symbol drawn from the marginal distribution \( p_X \).

\begin{definition}[Asymptotic Universal Rate-Distortion-Classification Function]
Let \( \Theta \) be a non-empty set of distortion-classification constraint pairs \( (D, C) \). A representation is said to be \emph{\( \Theta \)-universal} with asymptotic rate \( R \) if there exists a sequence of random variables \( U^{(n)} \), encoding functions \( f^{(n)}_{U^{(n)}} : \mathcal{X}^n \rightarrow \mathcal{C}^{(n)}_U \), and decoding functions \( g^{(n)}_{U^{(n)}, D, C} : \mathcal{C}^{(n)}_U \rightarrow \hat{\mathcal{X}}^n \) for each \( (D, C) \in \Theta \), such that:
\begin{align}
    \frac{1}{n} \sum_{i=1}^n \mathbb{E}[\Delta(X(i), \hat{X}_{D,C}(i))] &\leq D, \label{eq:constraintD}\\
    H\left( S \Big| \frac{1}{n} \sum_{i=1}^n \hat{X}_{D,C}(i) \right) &\leq C, \label{eq:constraintP}
\end{align}
and the expected code length satisfies:
\begin{equation*}
    \limsup_{n \to \infty} \frac{1}{n} \mathbb{E}[\ell(f^{(n)}_{U^{(n)}}(X^n))] \leq R,
\end{equation*}
where \( \hat{X}^n_{D,C} \triangleq g^{(n)}_{U^{(n)}, D, C}(f^{(n)}_{U^{(n)}}(X^n)) \). The infimum of such rates \( R \) over all valid constructions is denoted as \( R^{(\infty)}(\Theta) \).
\end{definition}

We now state the following result, which establishes that the asymptotic formulation is consistent with the information-theoretic definition of universality.

\begin{theorem}[Asymptotic Universality] \label{theorem_asymptotic_universality}
For any set \( \Theta \), the asymptotic universal rate-distortion-classification function satisfies
\begin{equation*}
    R^{(\infty)}(\Theta) = R(\Theta).
\end{equation*}
\end{theorem}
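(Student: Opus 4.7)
The plan is to establish $R^{(\infty)}(\Theta) = R(\Theta)$ by proving matching achievability and converse bounds, using the single-letter representation $Z$ from the information-theoretic definition as the bridge to the operational block-coding formulation. The structural match is favorable: in both formulations a single encoding object is shared across all $(D,C) \in \Theta$ while each decoder is tailored to its target pair, so the universality constraint is built into the random variable $Z$ on one side and into the fixed $f^{(n)}$ on the other.

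For achievability ($R^{(\infty)}(\Theta) \leq R(\Theta)$), fix $\epsilon > 0$ and choose a near-optimal $p_{Z|X} \in \mathcal{P}_{Z|X}(\Theta)$ with $I(X;Z) \leq R(\Theta) + \epsilon$, together with reconstruction kernels $p_{\hat{X}_{D,C}|Z}$ for each $(D,C) \in \Theta$. I would apply a standard random coding argument: generate a codebook of $2^{n(I(X;Z)+2\epsilon)}$ codewords $Z^n \sim \prod_i p_Z$, have the encoder select any codeword jointly $\epsilon$-typical with $X^n$ (existence guaranteed by the covering lemma for $n$ large), and have each decoder apply the single-letter kernel $p_{\hat{X}_{D,C}|Z}$ symbolwise to produce $\hat{X}_{D,C}^n$. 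The typical-average lemma then forces $\frac{1}{n}\sum_i \mathbb{E}[\Delta(X_i, \hat{X}_{D,C,i})] \to \mathbb{E}[\Delta(X,\hat{X}_{D,C})] \leq D$, and concentration of the empirical joint type of $(X^n, \hat{X}_{D,C}^n)$ near $p_{X,\hat{X}_{D,C}}$, combined with continuity of conditional entropy, transfers the classification constraint. Sending $\epsilon \downarrow 0$ and using a prefix-free description of the chosen codeword index gives rate at most $R(\Theta)$.

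For the converse ($R^{(\infty)}(\Theta) \geq R(\Theta)$), take any $\Theta$-universal scheme of rate $R$ and introduce a time-sharing index $Q$ uniform on $\{1,\dots,n\}$, independent of $X^n$. Define the single-letter variables $X := X_Q$, $Z := (f^{(n)}(X^n), Q)$, and $\hat{X}_{D,C} := \hat{X}_{D,C}(Q)$; the Markov chain $X \leftrightarrow Z \leftrightarrow \hat{X}_{D,C}$ then holds for every $(D,C) \in \Theta$. Using the i.i.d. structure of $X^n$,
\begin{equation*}
nR \;\geq\; \mathbb{E}[\ell(f^{(n)}(X^n))] \;\geq\; H(f^{(n)}(X^n)) \;\geq\; I(X^n; f^{(n)}(X^n)) \;\geq\; n\,I(X;Z),
\end{equation*}
and linearity of expectation immediately transports the per-letter distortion constraint to $\mathbb{E}[\Delta(X,\hat{X}_{D,C})] \leq D$. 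The classification constraint is pushed through by invoking the data processing inequality along $S \leftrightarrow X^n \leftrightarrow \hat{X}_{D,C}^n$ and convexity of $H(S\mid\cdot)$ in the conditioning distribution, so that $H(S\mid \hat{X}_{D,C}) \leq C$. The resulting $p_{Z|X}$ lies in $\mathcal{P}_{Z|X}(\Theta)$ and yields $R \geq I(X;Z) \geq R(\Theta)$.

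The main obstacle is the classification constraint on the converse side, because $H(S\mid\cdot)$ is neither additive across block coordinates nor trivially preserved under the averaging $\frac{1}{n}\sum_i \hat{X}_{D,C}(i)$ appearing in the asymptotic definition. I would handle this by first using the data processing inequality to dominate $H(S\mid \tfrac{1}{n}\sum_i \hat{X}_{D,C}(i))$ from below by a block-level conditional entropy involving $\hat{X}_{D,C}^n$, and then using the time-sharing trick together with convexity of $H(S\mid\cdot)$ viewed as a functional of the reconstruction distribution to reduce this to the single-letter constraint $H(S\mid \hat{X}_{D,C}) \leq C$. The achievability side is largely routine typicality-based coding; the only subtlety there is verifying that a \emph{single} codebook simultaneously supports every $(D,C)$ pair, which is automatic because every decoder operates on the common $Z^n$, mirroring the universality structure already built into $\mathcal{P}_{Z|X}(\Theta)$.
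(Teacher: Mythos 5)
Your converse follows the paper's own route: the time-sharing index $T$ (your $Q$), the chain $\frac{1}{n}\mathbb{E}[\ell(f^{(n)}(X^n))] \geq \frac{1}{n}I(X^n; f^{(n)}(X^n)) \geq I(X(T); f^{(n)}(X^n), T)$, and the observation that the averaged per-letter constraints map onto the single-letter ones. Two things to tighten: the definition allows shared randomness $U^{(n)}$, so the entropy lower bound should be conditioned on $U^{(n)}$ and the single-letter $Z$ should be $(f^{(n)}_{U^{(n)}}(X^n), U^{(n)}, T)$; and the classification step, which you describe as ``DPI plus convexity of $H(S\mid\cdot)$,'' needs to be spelled out --- once $T$ is absorbed into $Z$ it comes down to the identity $H(S \mid \hat X_{D,C}(T), T) = \frac{1}{n}\sum_i H(S\mid\hat X_{D,C}(i))$, and one has to decide whether the decoder is allowed to emit the pair $(\hat X_{D,C}(T),T)$ or only the bare $\hat X_{D,C}(T)$ (the paper is itself terse here).

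Your achievability is a genuinely different route, and it is the place where the gap lies. You propose covering-lemma random coding with joint typicality and then symbolwise application of $p_{\hat X_{D,C}\mid Z}$. The paper instead invokes the \emph{strong functional representation lemma} of Li and El Gamal: write $Z^n = \phi(X^n, V^{(n)})$ with $V^{(n)}$ independent of $X^n$, encode $Z^n$ losslessly given $V^{(n)}$ at cost $I(X^n;Z^n) + \log(I(X^n;Z^n)+1) + 5$, then use ordinary functional representation per pair $(D,C)$ to synthesize $\hat X_{D,C}(i) = \psi_{D,C}(Z(i), V_{D,C})$. The advantage of the SFRL route is that it reproduces the target per-letter joint law of $(X(i), Z(i), \hat X_{D,C}(i))$ \emph{exactly} at every finite $n$, so the distortion and classification constraints hold with equality for each $n$; only the rate is asymptotic. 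That matters because in the paper's definition the two constraints in \eqref{eq:constraintD}--\eqref{eq:constraintP} are stated without a $\limsup$ --- they are supposed to hold at each block length. Covering-lemma coding gives you the constraints only up to additive $\epsilon$ for $n$ large (encoding failures, joint types only approximately matching $p_{X,Z}$), so your ``send $\epsilon \downarrow 0$'' step needs a continuity-of-the-boundary argument to convert $(D+\epsilon, C+\epsilon)$-achievability into $(D,C)$-achievability, and the continuity of $H(S\mid\hat X)$ in the reconstruction law is the delicate piece. The SFRL sidesteps all of this at the cost of introducing shared randomness --- which the paper's definition of $R^{(\infty)}(\Theta)$ explicitly permits via $U^{(n)}$, and which your codebook-based scheme does not exploit.
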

\begin{proof}
The proof is provided in Appendix~\ref{Appendix_Proof_asymptotic_universality}.
\end{proof}

%===========================================================================================
\section{Experimental Results}
\subsection{Illustration of No Rate Penalty for a Gaussian Source}
\begin{figure*}[h] 
    \centering
    \includegraphics[width=0.65\textwidth]{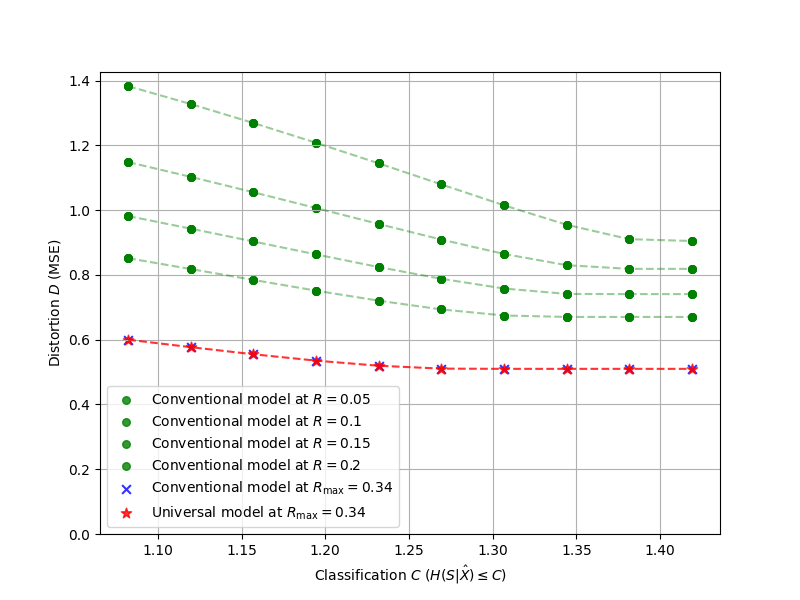}
    \caption{Classification-distortion-rate function at various rates for a Gaussian source. The universal model achieves the same region as the conventional model, verifying Theorem~\ref{Theorem_gaussian_universality}.}
    \label{fig:DCR_Gaussian_NoPenalty}
\end{figure*}
We numerically validate the absence of a rate penalty when transitioning from multiple rate-specific Gaussian encoders (conventional model) to a single universal encoder. The experiment considers a scalar Gaussian source \(X \sim \mathcal{N}(0, \sigma_X^2)\) and a classification variable \(S \sim \mathcal{N}(0, \sigma_S^2)\), correlated via coefficient \(\rho\). The goal is to evaluate the classification-distortion-rate function under mean squared error distortion and classification constraint \(C = H(S \mid \hat{X})\).

We fix the variances \(\sigma_X^2 = \sigma_S^2 = 1.0\) and set \(\rho = 0.7\), resulting in a maximum achievable rate of \( R_{\max} = \frac{1}{2} \log \left( \frac{1}{1 - \rho^2} \right) = 0.34 \). The conventional model is evaluated at five rate values: \( [0.05, 0.1, 0.15, 0.2, 0.34] \), with the corresponding \((C, D)\) tradeoffs derived using Theorem~\ref{TheoremDCR_GS}. For each rate, we compute the achievable region by varying the decoder and measuring the resulting classification and distortion performance. The universal model, guided by Theorem~\ref{Theorem_gaussian_universality}, constructs a single high-rate Gaussian representation at \(R_{\max}\) and varies the decoder to explore the entire \((C,D)\) space.

Figure~\ref{fig:DCR_Gaussian_NoPenalty} displays the achievable \((C,D)\) regions under three settings. The green points correspond to the union of all conventional models, each operating at a different rate. The blue points denote a conventional encoder at \(R = R_{\max}\). The red points represent the universal encoder, also fixed at \(R_{\max}\), with varying decoders. Notably, the red points align precisely with the blue boundary, demonstrating that the universal model achieves the full CDR region without additional rate overhead. This confirms Theorem~\ref{Theorem_gaussian_universality} and affirms the feasibility of universal representations for Gaussian sources under classification constraints.

%-------------------------------------------------------------------------------------------
\subsection{Universal Representation for Lossy Compression}
The rate-distortion-classification tradeoff was observed in deep learning-based image compression when classifier regularization was integrated into the training pipeline~\cite{Wang2024, Zhang2023}. In such settings, achieving a specific point in the tradeoff space typically requires training a dedicated end-to-end model that jointly optimizes the encoder and decoder for that objective. However, this approach is often computationally expensive and inflexible for practical deployment.

To address this limitation, an alternative strategy is to reuse a pre-trained encoder while adapting only the decoder or classifier to meet varying task requirements. We refer to models that jointly train both encoder and decoder for each objective as \textit{conventional models}, and those that retain a fixed encoder and retrain only the decoder as \textit{universal models}. In our framework, universal models leverage encoders originally trained under the conventional paradigm. Under identical datasets and hyperparameters, the only difference between the two lies in whether the encoder parameters are updated during training.
\begin{figure}[!htbp]
\center \includegraphics[width=0.95\textwidth]{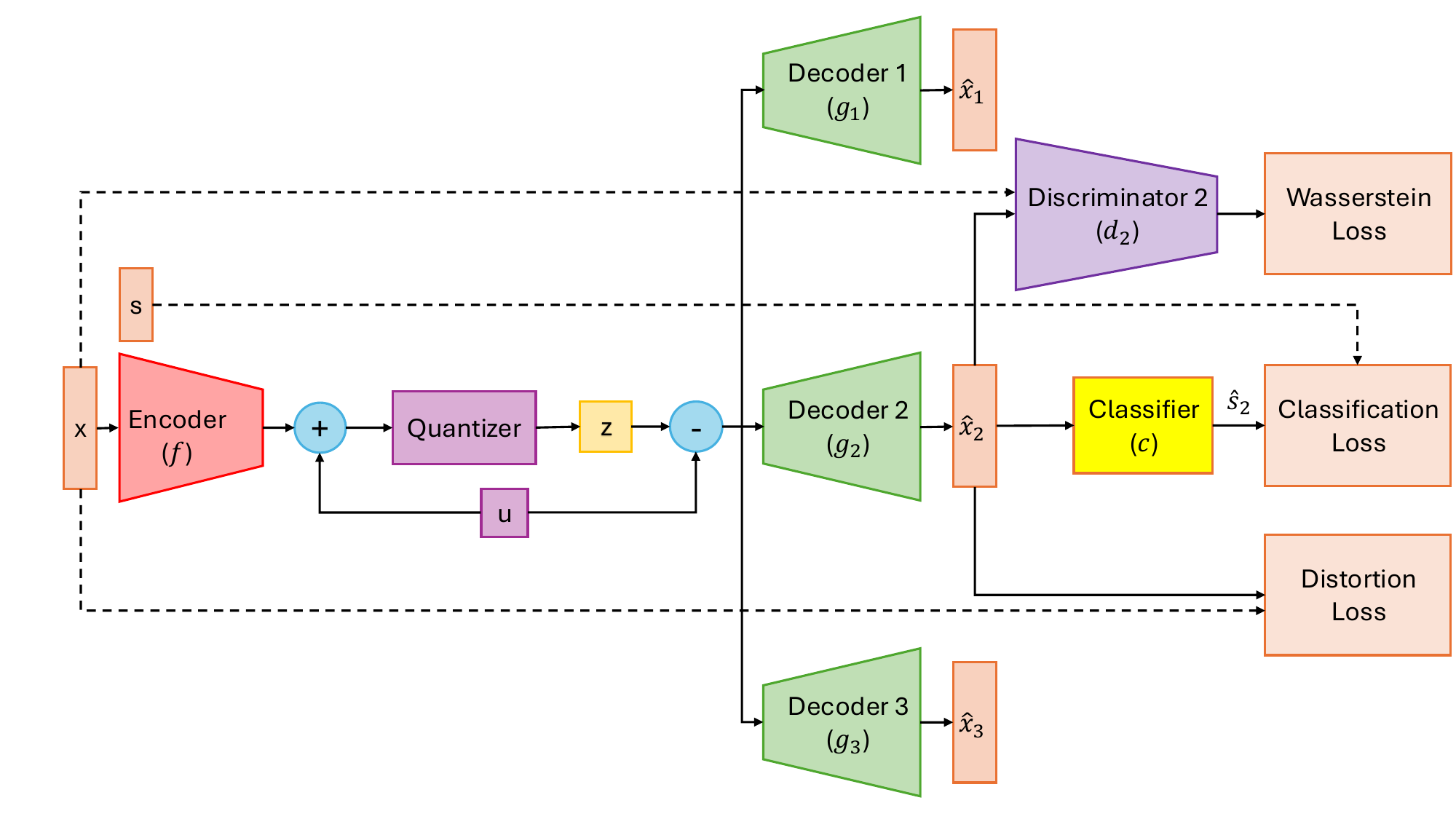}
\caption{Figure illustrates the universal model setup. A single encoder \( f \), trained for a specific classification-distortion tradeoff, is frozen and reused. Multiple decoders \( \{g_i\} \) and discriminators $\{d_i\}$ are then trained independently using the fixed latent representation \( z \). A shared randomness source \( u \) enables universal quantization, and a pre-trained classifier \( C \) evaluates classification performance.}
\label{fig:Scheme}
\end{figure}

%-------------------------------------------------------------------------------------------
\subsubsection{Training}
We use a stochastic autoencoder with a pre-trained classifier and GAN-based discriminator, consisting of an encoder \( f \), decoder \( g \), classifier \( c \), and discriminator \( d \). In the conventional setup, \( f \), \( g \), and \( d \) are trainable. The encoder maps input \( X \) to a latent representation \( f(X) \in [-1,1]^{\text{dim}} \) via a final \texttt{tanh} activation. This output is uniformly quantized into \( L \) levels per dimension, yielding an upper-bound compression rate of \( R = \text{dim} \times \log_2 L \), as established in~\cite{agustsson2019generative}. To perform quantization, we use dithered quantization~\cite{gray1993dithered, ziv1985universal}, assuming shared randomness \( U \sim \mathcal{U}[-1/(L{-}1), 1/(L{-}1)]^{\text{dim}} \). The encoder outputs: $ Z = \mathrm{Quantize}(f(X) + U)$ and the decoder reconstructs \( \hat{X} = g(Z - U) \). This approach centers quantization noise and enables gradient flow using the soft estimator from~\cite{mentzer2018conditional}.

The distortion loss is measured by MSE. The output \( \hat{X} \) is passed through the classifier \( c \) to produce the predicted label distribution \( \hat{S} = c(\hat{X}) \), with classification loss computed via cross-entropy \( \text{CE}(S, \hat{S}) \), an upper bound on conditional entropy, i.e. $H(S | \hat{X}) \leq \mathrm{CE}(S, \hat{S})$ ~\cite{boudiaf2021unifying_cross_entropy}. To estimate the squared 2-Wasserstein distance, both \( X \) and \( \hat{X} \) are input to \( d \), with Wasserstein-1 loss following the GAN strategy in~\cite{UniversalRDPs}.

The compression rate is upper bounded by \( \text{dim} \times \log_2(L) \), where \( \text{dim} \) is the encoder output size and \( L \) the quantization level. The total loss is:
\begin{equation}
\label{eqn:experimental_loss}
\mathcal{L} = \lambda_d \, \mathbb{E}[\|X - \hat{X}\|^2] + \lambda_c \, \text{CE}(S, \hat{S}) + \lambda_p \, W_1(p_X, p_{\hat{X}}),
\end{equation}
where \( \lambda_d \), \( \lambda_c \), and \( \lambda_p \) controlling the tradeoffs.

To construct the universal model, the trained encoder \( f \) is frozen, and a new decoder \( g_1 \) and discriminator \( d_1 \) are trained using:
\begin{equation}
\label{eqn:experimental_loss2}
\mathcal{L}_1 = \lambda_d^1 \, \mathbb{E}[\|X - \hat{X}_1\|^2] + \lambda_c^1 \, \text{CE}(S, \hat{S}) + \lambda_p^1 \, W_1(p_X, p_{\hat{X}}),
\end{equation}
where \( \lambda_d^1 \), \( \lambda_c^1 \), and \( \lambda_p^1 \) adjust the task-specific tradeoffs. A schematic of the full system is shown in Figure~\ref{fig:Scheme}.

%-------------------------------------------------------------------------------------------
\subsection{Results}
\begin{figure}[h]
    \centering
    \subfigure[The RDC curve under rate $R = 4.75$.]{
        \includegraphics[width=0.48\textwidth]{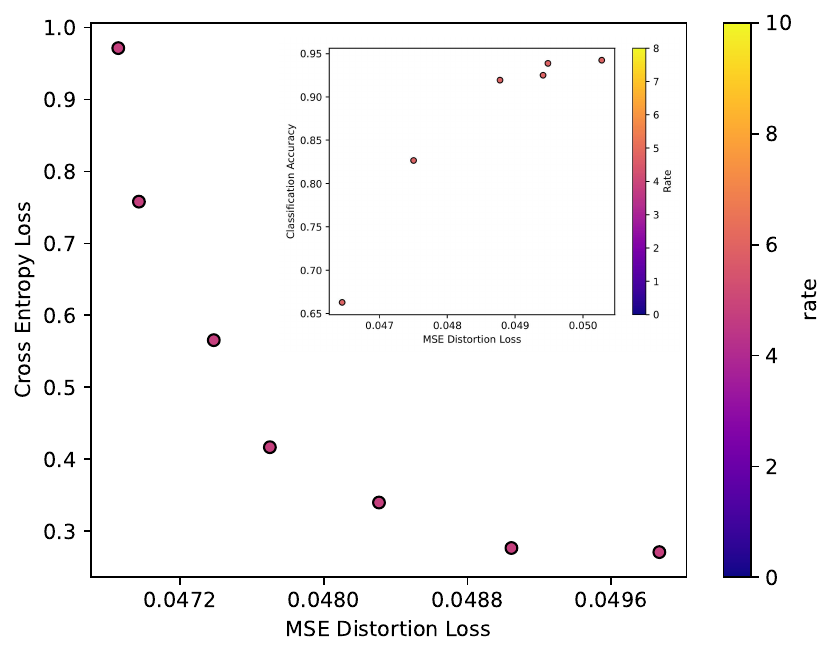}
        \label{fig:CDR_MNIST_Add_Accuracy}
    } 
    \hfill
    \subfigure[The RDC curves at multiple rates.]{
        \includegraphics[width=0.48\textwidth]{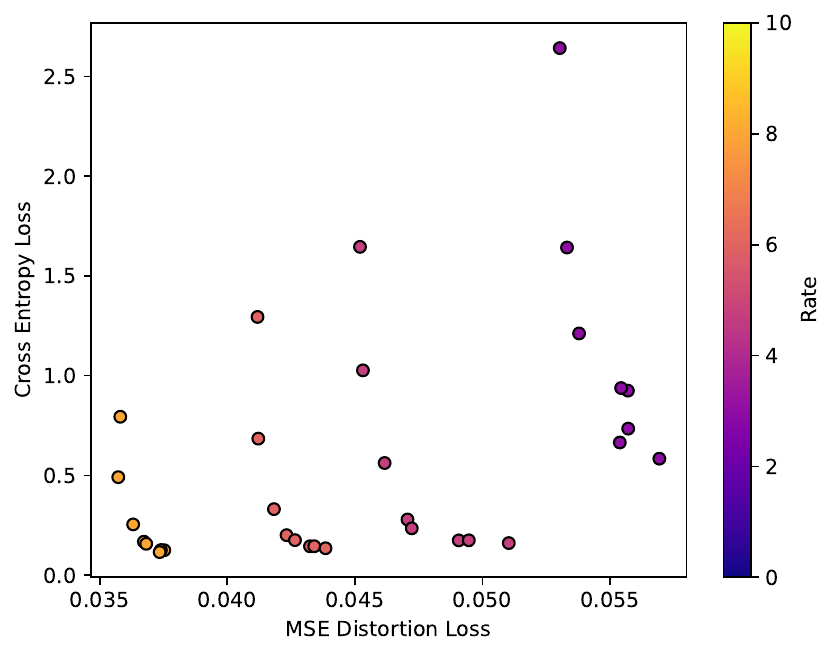}
        \label{fig:RDC_Mutiple_Rates_MNIST}
    } 
    \caption{The rate-distortion-classification tradeoffs on the MNIST dataset using conventional models. 
    \subref{fig:CDR_MNIST_Add_Accuracy} shows the tradeoff between distortion and classification, evaluated via either cross-entropy loss or classification accuracy. 
    \subref{fig:RDC_Mutiple_Rates_MNIST} illustrates the RDC curves across multiple encoding rates.}
    \label{fig:RDC_Conventional}
\end{figure}
Figure~\ref{fig:RDC_Conventional} illustrates the rate-distortion-classification functions of conventional models on the MNIST dataset. The tradeoff between distortion and classification performance, quantified via cross-entropy loss and classification accuracy, is clearly depicted in Figure~\ref{fig:CDR_MNIST_Add_Accuracy} at a fixed rate of \( R = 4.75 \), corresponding to an encoder output dimension of \(\text{dim} = 3\) and quantization level \( L = 3 \). Figure~\ref{fig:RDC_Mutiple_Rates_MNIST} presents RDC curves across multiple rate configurations. The evaluated (dim, \( L \)) combinations include (3,2), (3,3), (3,4), and (4,4), providing a range of bit budgets for comparison.

\begin{figure}[h]
    \centering
    \subfigure[RDC curve on the MNIST dataset.]{
        \includegraphics[width=0.48\textwidth]{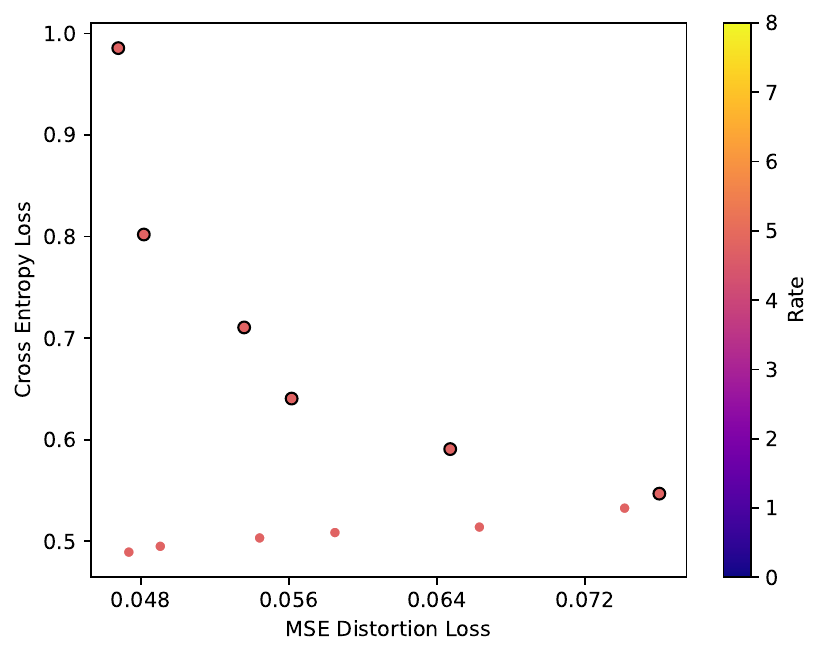}
        \label{fig:CDR_Comparision_Perception_MNIST}
    } 
    \hfill
    \subfigure[RDC curve on the SVHN dataset.]{
        \includegraphics[width=0.48\textwidth]{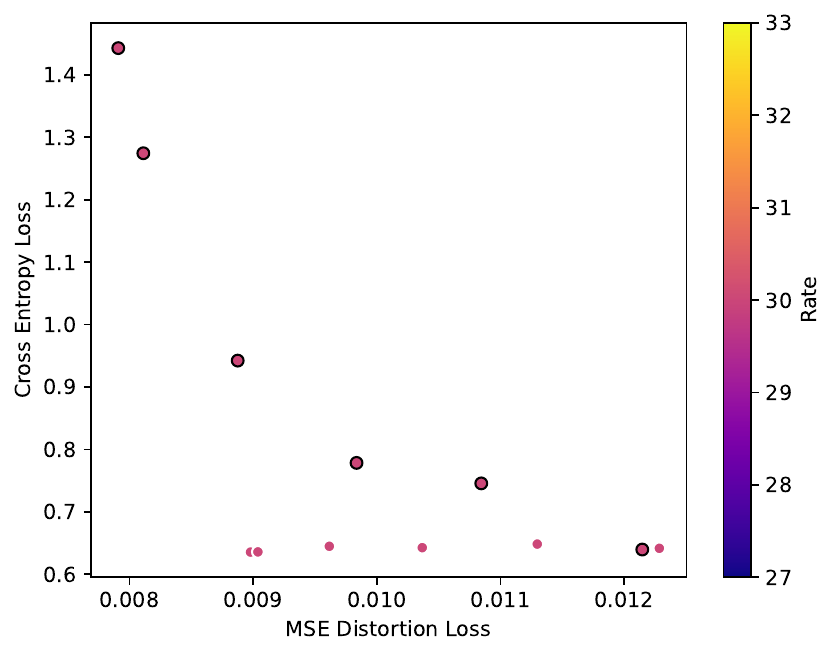}
        \label{fig:CDR_Comparision_Perception_SVHN}
    } 
    \caption{Rate-distortion-classification curves on the \subref{fig:CDR_Comparision_Perception_MNIST} MNIST ($R = 4.75$) dataset and \subref{fig:CDR_Comparision_Perception_SVHN} SVHN dataset ($R=30$). Points with black outlines correspond to conventional models jointly trained for specific classification-distortion objectives. The remaining points represent universal models, where decoders are trained on top of a frozen encoder optimized for low classification loss \( C \). 
    Despite the encoder being fixed, universal models achieve distortion levels comparable to conventional models, confirming that an encoder trained for low \( C \) can support diverse trade-offs via decoder retraining. These results support Theorem~\ref{Theorem_Quantitative_Results}.}
    \label{fig:RDC_No_Distortion}
\end{figure}
Each point in these figures corresponds to an encoder-decoder pair trained with a specific rate and loss configuration \( (\lambda_d, \lambda_c, \lambda_p) \). Points sharing the same color indicate models trained under the same rate. The results consistently reveal a tradeoff: achieving lower classification loss often comes at the cost of increased distortion. Moreover, as the rate \( R \) increases, the entire distortion-classification curve shifts downward and to the left, signifying that higher-rate encoders are able to achieve both lower distortion and better classification accuracy. This trend highlights the value of allocating more bits to support multi-task compression objectives.
\begin{figure}[h]
    \centering
    \subfigure[Decompressed outputs on the MNIST dataset.]{
        \includegraphics[width=0.48\textwidth]{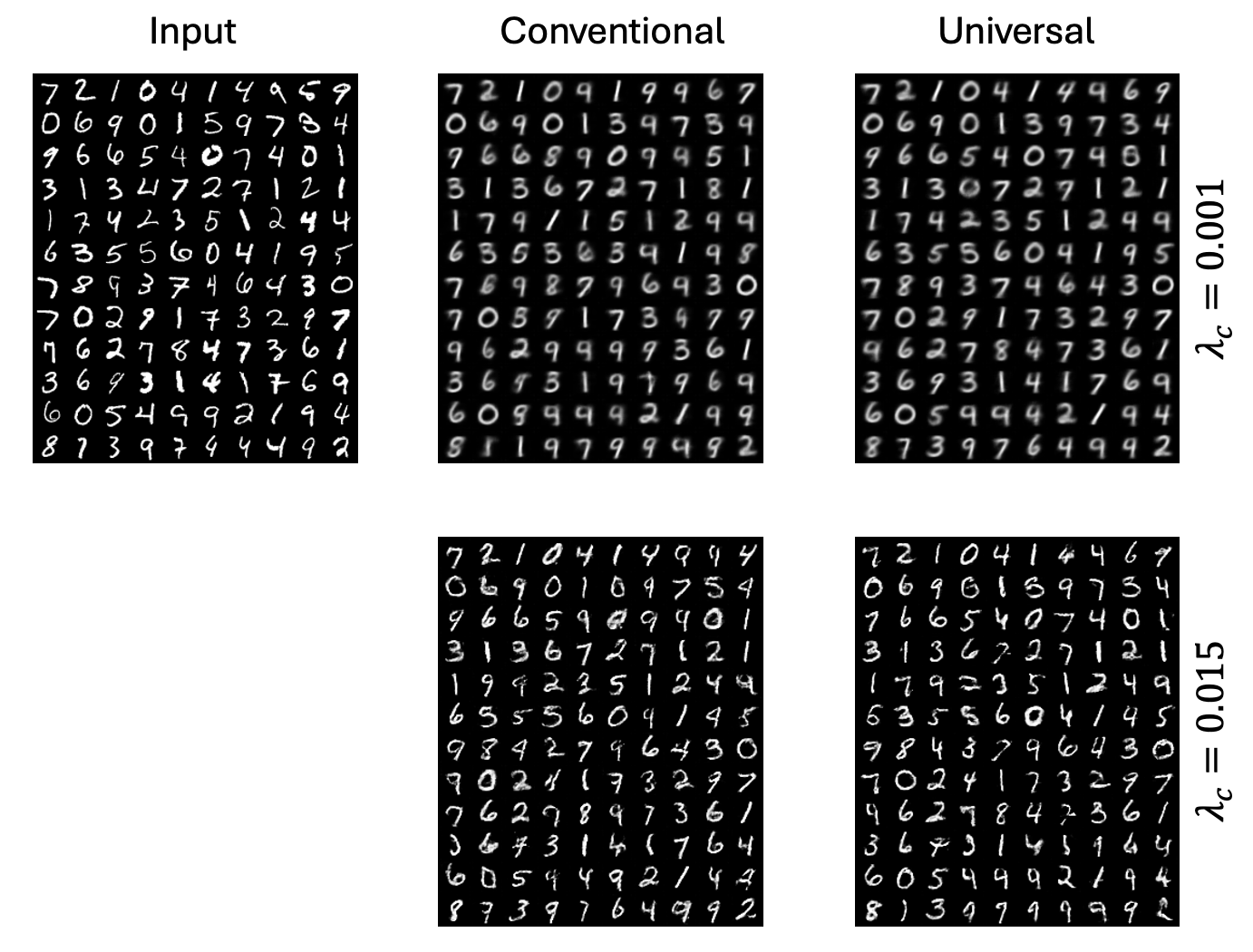}
        \label{fig:Decompression_MNIST}
    } 
    \hfill
    \subfigure[Decompressed outputs on the SVHN dataset.]{
        \includegraphics[width=0.48\textwidth]{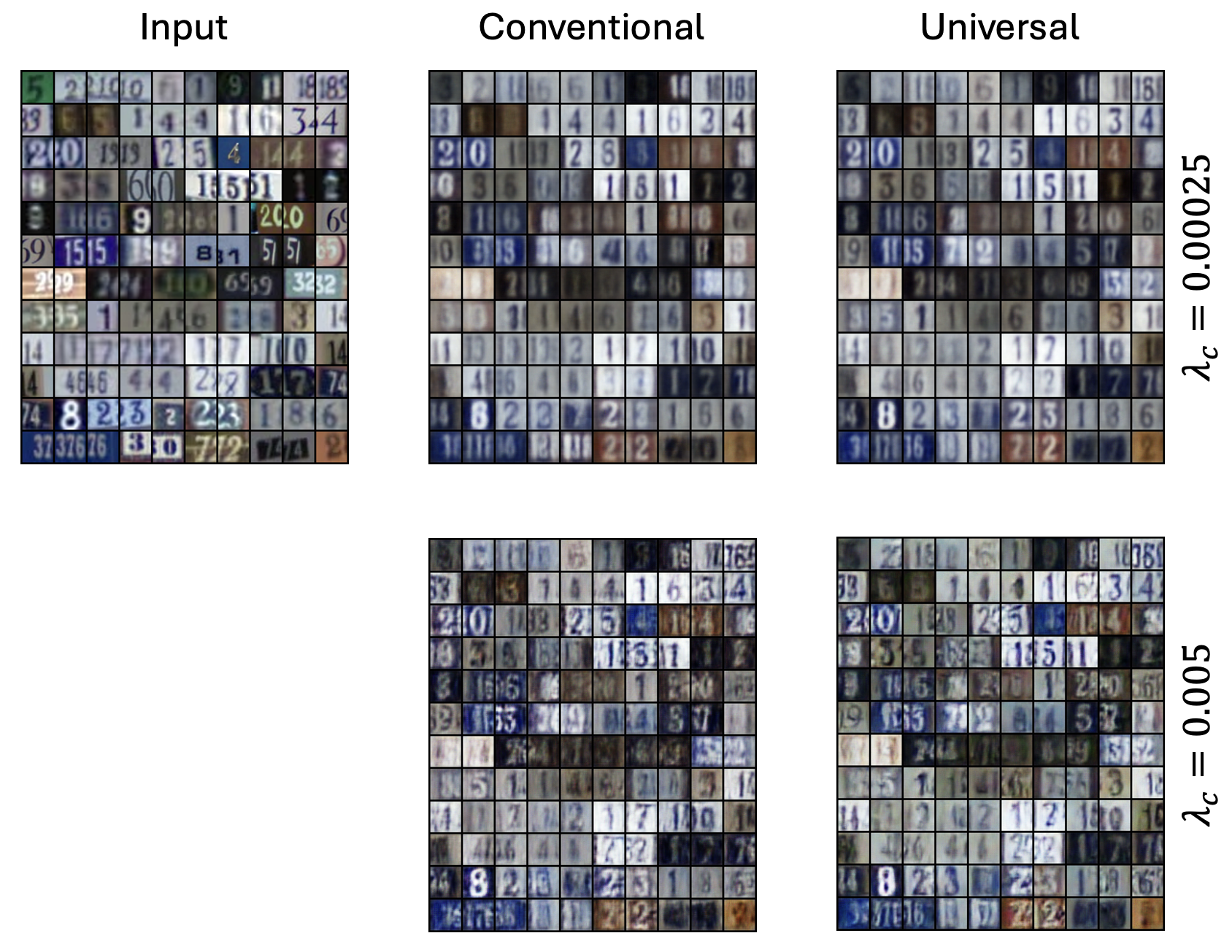}
        \label{fig:Decompression_SVHN}
    } 
    \caption{Visualizations of decompressed outputs from selected models on the MNIST dataset ($R = 4.75$) and SVHN dataset ($R = 30$). As the weight on the classification loss \( \lambda_c \) increases, the reconstructions become progressively blurrier, reflecting the tradeoff between classification accuracy and reconstruction fidelity.}
    \label{fig:Decompression}
\end{figure}

Figure~\ref{fig:RDC_No_Distortion} shows the rate-distortion-classification tradeoff on the  MNIST dataset ($R = 4.75$) and SVHN dataset ($R = 30$), obtained by varying loss coefficients. Black-outlined points represent the conventional model trained jointly for specific classification-distortion objectives. Other points correspond to the universal model with decoders trained on a fixed encoder optimized for low classification loss \( C \). As expected, reducing \( C \) increases distortion, reflecting a tradeoff between classification accuracy and reconstruction fidelity. Despite using a fixed encoder, the universal model achieves distortion levels comparable to the conventional model, confirming that an encoder trained for low \( C \) can still support diverse tradeoffs through decoder retraining. These observations support the validity of Theorem~\ref{Theorem_Quantitative_Results}.

However, a noticeable classification gap remains: universal decoders cannot recover low \( C \) performance if the encoder is trained only for high-distortion objectives. This highlights the decoder's limited generative capacity when the encoder fails to preserve the classification-task information. 

Figure~\ref{fig:Decompression} presents qualitative decompression results from selected models on the MNIST and SVHN datasets, using fixed compression rates of \( R = 4.75 \) and \( R = 30 \), respectively. These examples highlight the visual impact of increasing the classification loss weight \( \lambda_c \) during training. As \( \lambda_c \) increases, the encoder and decoder prioritize preserving semantic features relevant to classification, which leads to a noticeable degradation in image quality. In particular, the reconstructed digits and digits-over-backgrounds appear increasingly blurred, indicating a tradeoff between classification accuracy and perceptual fidelity. This visualization supports the notion that strong classification constraints may impair reconstruction quality, emphasizing the need for carefully balancing multiple objectives in learned lossy compression frameworks.

% % %===========================================================================================
% \section{Discussion}

%===========================================================================================
\section{Conclusion}
We proposed a universal rate-distortion-classification framework that enables a single encoder to support multiple task objectives through specialized decoders, removing the need for separate encoders per distortion-classification tradeoff. For the Gaussian source with MSE distortion, we proved that the full rate-distortion classification region is achievable with zero rate penalty using a fixed encoder. For the general source, we characterized the achievable region using MMSE estimation and the 2-Wasserstein distance, identifying conditions under which encoder reuse incurs negligible distortion penalty. Empirical results on the MNIST and SVHN datasets support our theory, showing that universal encoders, trained with Wasserstein loss regularization, achieve distortion performance comparable to task-specific models. These findings highlight the practicality and effectiveness of universal representations for multi-task lossy compression.

%===========================================================================================
\nocite{liu2019classification}
\bibliographystyle{plainnat}
\bibliography{main}

%===========================================================================================
\newpage
\appendix

\begin{center}
    \Large\bfseries
    Universal Rate-Distortion-Classification
    Representations \\ for Lossy Compression
\end{center}

%\vspace{0.1em}

\begin{center}
    \large\bfseries
    Supplementary Material
\end{center}

\vspace{2em}

%===========================================================================================
\section{Theoretical Results}
\subsection{Proof of Theorem \ref{TheoremRDCGS}}\label{Appendix_Proof_RDC_GS}

\begin{reptheorem}{TheoremRDCGS}[Information Rate-Distortion-Classification Function for Gaussian Source]
~\cite{Wang2024} Let \( X\sim \mathcal{N}(\mu_X,\sigma_X^2) \) be a Gaussian source and \( S\sim \mathcal{N}(\mu_S,\sigma_S^2) \) be an associated classification variable, with a covariance of \( \text{Cov}(X,S) = \theta_1 \). The problem is feasible if $C \geq \frac{1}{2} \log\left(1 - \frac{\theta_1^2}{\sigma_S^2 \sigma_X^2}\right) + h(S)$. For the mean squared error distortion (i.e., \( \mathbb{E}[\Delta(X, \hat{X})] = \mathbb{E}[(X-\hat{X})^2] \)), the information rate-distortion-classification function is achieved by a jointly Gaussian estimator \( \hat{X} \) and is given by
  \begin{align*}
    R(D,C) =
    \begin{cases}
        \frac{1}{2} \log \frac{\sigma_X^2}{D}, \quad
        & D \leq \sigma_X^2 \left( 1 - \frac{1}{\rho^2} \left( 1 - e^{-2h(S) + 2C} \right) \right), \\
        -\frac{1}{2} \log \left(1 - \frac{1}{\rho^2} \left( 1 - e^{-2h(S) + 2C} \right) \right) \quad
        & D > \sigma_X^2 \left( 1 - \frac{1}{\rho^2} \left( 1 - e^{-2h(S) + 2C} \right) \right) \\
        0, \quad 
        & C > h(S) \text{ and } D > \sigma_X^2.
    \end{cases}
  \end{align*}
where \( \rho = \frac{\theta_1}{\sigma_S \sigma_X} \) represents the correlation coefficient between \( X \) and \( S \), while \( h(\cdot) \) denotes the differential entropy of a continuous random variable.
\end{reptheorem}

\begin{proof}
The proof is an extension of the argument presented in~\cite{Wang2024}, with additional details provided to ensure completeness. Consider the $R(D,C)$ problem with MSE distortion as follows
\begin{mini!}|s|[2] % mini! = minimize
{p_{\hat{X}|X}} % optimization variable
{I(X;\hat{X})} % objective function
{} % label for optimization problem
{R(D,C) =} % optimization result} % optimization result
\addConstraint{\mathbb{E}[(X-\hat{X})^2]}{\leq D}{} % constraint 1
\addConstraint{H(S | \hat{X})}{\leq C.}{} % constraint 3
\end{mini!}
where $(X,S)$ are jointly Gaussian variables with covariance $\text{Cov}(X,S)=\theta_1$. 

We begin with a lemma from estimation theory that compares the performance of Gaussian and non-Gaussian estimators with matched second-order statistics.
\begin{lemma}
\label{lem:guassian_var}
\cite{wornell2005inference} Let \( \hat{X} \) be a random variable with mean \( \mathbb{E}[\hat{X}] = \mu_{\hat{X}} \), variance \( \mathrm{Var}(\hat{X}) = \sigma_{\hat{X}}^2 \), and covariance \( \mathrm{Cov}(X, \hat{X}) = \theta_2 \). Let \( \hat{X}_G \) be jointly Gaussian with \( X \) and share the same mean, variance, and covariance as \( \hat{X} \). Then,
\begin{equation*}
    \mathbb{E}[(X - \mathbb{E}[X | \hat{X}_G])^2] 
    \geq 
    \mathbb{E}[(X - \mathbb{E}[X | \hat{X}])^2].
\end{equation*}
\end{lemma}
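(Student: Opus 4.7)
The plan is to chain two classical facts from estimation theory: (i) when $(X,\hat{X}_G)$ are jointly Gaussian, the MMSE estimator $\mathbb{E}[X\mid\hat{X}_G]$ coincides with the linear MMSE (LMMSE) estimator and hence depends only on the first two moments and covariance; and (ii) for any random variable $\hat{X}$, the conditional expectation $\mathbb{E}[X\mid\hat{X}]$ is globally MSE-optimal among all measurable functions of $\hat{X}$. Combining these, the MMSE in the Gaussian case will equal an explicit expression in $\sigma_X^2,\sigma_{\hat{X}}^2,\theta_2$, and the same expression will upper-bound the non-Gaussian MMSE via the LMMSE estimator applied to $\hat{X}$.

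Concretely, I would first write down the LMMSE estimator of $X$ from $\hat{X}$,
$$\hat{X}_L = \mu_X + \frac{\theta_2}{\sigma_{\hat{X}}^2}\bigl(\hat{X} - \mu_{\hat{X}}\bigr),$$
and compute its MSE by a direct expansion to get $\sigma_X^2 - \theta_2^2/\sigma_{\hat{X}}^2$. The key observation is that this quantity depends only on the statistics $(\mu_X,\mu_{\hat{X}},\sigma_X^2,\sigma_{\hat{X}}^2,\theta_2)$ that are shared between $\hat{X}$ and $\hat{X}_G$. Since $(X,\hat{X}_G)$ are jointly Gaussian with exactly these second-order statistics, the conditional mean $\mathbb{E}[X\mid\hat{X}_G]$ is itself an affine function of $\hat{X}_G$, so $\mathbb{E}[(X - \mathbb{E}[X\mid\hat{X}_G])^2] = \sigma_X^2 - \theta_2^2/\sigma_{\hat{X}}^2$ as well.

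To close the argument, I would invoke the universal MMSE optimality of the conditional expectation:
$$\mathbb{E}\bigl[(X - \mathbb{E}[X\mid\hat{X}])^2\bigr] \leq \mathbb{E}\bigl[(X - \hat{X}_L)^2\bigr] = \sigma_X^2 - \frac{\theta_2^2}{\sigma_{\hat{X}}^2} = \mathbb{E}\bigl[(X - \mathbb{E}[X\mid\hat{X}_G])^2\bigr],$$
since $\hat{X}_L$ is merely one particular measurable function of $\hat{X}$. There is no substantive computational obstacle; the only conceptual point to stress is that joint Gaussianity of $(X,\hat{X}_G)$ is precisely what collapses the conditional expectation to a linear function, whereas for the non-Gaussian $\hat{X}$ all that is needed is finite second moments for the LMMSE computation and the optimality-of-conditional-expectation bound to apply.
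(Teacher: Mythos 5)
The paper states this lemma as a cited result from \cite{wornell2005inference} and does not supply its own proof, so there is no internal argument to compare against. Your proof is correct and is the standard estimation-theoretic argument: the LMMSE estimator $\hat{X}_L$ and its error $\sigma_X^2 - \theta_2^2/\sigma_{\hat{X}}^2$ depend only on the matched second-order statistics; joint Gaussianity of $(X,\hat{X}_G)$ forces $\mathbb{E}[X\mid\hat{X}_G]$ to be affine and hence to attain exactly this error; and for the general $\hat{X}$, the conditional mean is MMSE-optimal over all measurable functions of $\hat{X}$, in particular beating $\hat{X}_L$. One small point worth making explicit is that ``jointly Gaussian with $X$'' presupposes that $X$ itself is Gaussian, which holds in the theorem's setting ($X\sim\mathcal{N}(\mu_X,\sigma_X^2)$) and is what licenses the affine form of $\mathbb{E}[X\mid\hat{X}_G]$; stating this would make the proof fully self-contained.
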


This result implies that the MMSE of a general (possibly non-Gaussian) estimator \( \hat{X} \) is always less than or equal to that of a Gaussian estimator with the same first and second-order moments.

We next show that restricting \( \hat{X} \) to be jointly Gaussian with \( X \) incurs no loss of generality. Since \( \hat{X} \) and its jointly Gaussian counterpart \( \hat{X}_G \) are matched in mean, variance, and covariance with \( X \), it follows that $\mathbb{E}[(X - \hat{X})^2] = \mathbb{E}[(X - \hat{X}_G)^2]$. The following lemma will be used to formalize this result.
\begin{lemma}
\label{lem:guassian_var_mutual_information}
Given $\mu_{\hat{X}}$, $\sigma_{\hat{X}}^2$, $\theta_2$, and $(S \to X \to \hat{X})$, we have that
\begin{equation*}
    H(S|\hat{X}) \geq H(S|\hat{X}_G).
\end{equation*}
\end{lemma}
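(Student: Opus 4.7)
My plan is to reduce the conditional-entropy comparison to a comparison of MMSEs and then translate back to differential entropies, using the fact that the Gaussian surrogate achieves the max-entropy bound with equality. The central tool is Lemma~\ref{lem:guassian_var}, paired with a conditional entropy-power inequality.

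First, I would exploit the joint Gaussianity of $(X,S)$ to decompose $S = \alpha X + N$, where $\alpha = \theta_1/\sigma_X^2$ and $N \sim \mathcal{N}(0, \sigma_S^2 - \theta_1^2/\sigma_X^2)$ is independent of $X$. The Markov chain $S \to X \to \hat{X}$ then makes $N$ independent of $\hat{X}$, which yields $\mathrm{Cov}(S,\hat{X}) = \alpha\,\theta_2$. The same identity holds for $\hat{X}_G$ by construction, so the triples $(S,X,\hat{X})$ and $(S,X,\hat{X}_G)$ share the same second-order statistics, and $\hat{X}$, $\hat{X}_G$ play the matched-moments roles of Lemma~\ref{lem:guassian_var} with respect to both $X$ and $S$.

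Next, I would apply the conditional entropy-power inequality to $S = \alpha X + N$ conditioned on $\hat{X}$, using that $X$ and $N$ remain conditionally independent given $\hat{X}$ (since $N \perp (X,\hat{X})$), to obtain
\begin{equation*}
e^{2H(S\mid\hat{X})} \;\geq\; \alpha^2\,e^{2H(X\mid\hat{X})} + 2\pi e\,\sigma_N^2.
\end{equation*}
Because the Gaussian surrogate $(S,X,\hat{X}_G)$ is jointly Gaussian with independent Gaussian components, the analogous relation holds with equality:
\begin{equation*}
e^{2H(S\mid\hat{X}_G)} \;=\; \alpha^2\,e^{2H(X\mid\hat{X}_G)} + 2\pi e\,\sigma_N^2.
\end{equation*}
The desired inequality $H(S\mid\hat{X}) \geq H(S\mid\hat{X}_G)$ then reduces to comparing $H(X\mid\hat{X})$ and $H(X\mid\hat{X}_G)$ in the appropriate direction, which I would obtain from Lemma~\ref{lem:guassian_var} applied directly to $X$ together with the Gaussian identity $H(X\mid\hat{X}_G) = \tfrac{1}{2}\log(2\pi e\,\mathrm{MMSE}(X\mid\hat{X}_G))$.

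The main obstacle is making the direction of these inequalities align correctly. Lemma~\ref{lem:guassian_var} states that the Gaussian surrogate has the \emph{larger} MMSE, while the Gaussian maximum-entropy principle only produces an \emph{upper} bound on the non-Gaussian conditional entropy via $H(S\mid\hat{X}) \leq \tfrac{1}{2}\log(2\pi e\,\mathrm{MMSE}(S\mid\hat{X}))$; these two facts, naively combined, point the wrong way. The role of the conditional EPI above is precisely to replace that upper bound with a lower bound on $e^{2H(S\mid\hat{X})}$ whose sharpness is attained exactly in the Gaussian case, so that the MMSE comparison from Lemma~\ref{lem:guassian_var} can be pulled through to the entropy domain without reversing sign. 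Tracking this sharpness through the Markov decomposition of $S$ into $\alpha X + N$ is the delicate step on which the argument rests.
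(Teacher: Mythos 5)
Your EPI route does not close, and it is worth tracing exactly where. The conditional entropy-power inequality gives
\begin{equation*}
e^{2H(S\mid\hat{X})} \;\geq\; \alpha^2\,e^{2H(X\mid\hat{X})} + 2\pi e\,\sigma_N^2,
\end{equation*}
with equality for the jointly Gaussian surrogate $\hat{X}_G$. Comparing with the Gaussian identity, the target $H(S\mid\hat{X}) \geq H(S\mid\hat{X}_G)$ would follow only if you could establish $H(X\mid\hat{X}) \geq H(X\mid\hat{X}_G)$. But \lemmaref{lem:guassian_var} combined with the maximum-entropy principle produces the \emph{opposite} ordering,
\begin{equation*}
H(X\mid\hat{X}) \;\leq\; \tfrac{1}{2}\log\bigl(2\pi e\,\mathrm{MMSE}(X\mid\hat{X})\bigr) \;\leq\; \tfrac{1}{2}\log\bigl(2\pi e\,\mathrm{MMSE}(X\mid\hat{X}_G)\bigr) \;=\; H(X\mid\hat{X}_G),
\end{equation*}
which is exactly the chain the paper itself uses a few lines later in the same appendix to show $I(X;\hat{X}) \geq I(X;\hat{X}_G)$. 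Feeding $e^{2H(X\mid\hat{X})} \leq e^{2H(X\mid\hat{X}_G)}$ through the EPI leaves you with a lower bound on $e^{2H(S\mid\hat{X})}$ by a quantity that is itself \emph{at most} $e^{2H(S\mid\hat{X}_G)}$, which is vacuous. So the EPI does not reverse the sign as you hoped; it merely repackages the same one-sided maximum-entropy bound in exponential form, and the directional obstacle you flagged in your final paragraph is precisely where the argument breaks.

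The paper's own proof is a one-line appeal to the assertion that, with all second-order statistics fixed, the jointly Gaussian coupling maximizes $I(S;\hat{X})$ and hence minimizes $H(S\mid\hat{X})$; no EPI is used. Your computation is therefore not just a different route, it is a derivation that pulls in the opposite direction: the direct chain $h(S\mid\hat{X}) \leq \tfrac{1}{2}\log(2\pi e\,\mathrm{MMSE}(S\mid\hat{X})) \leq \tfrac{1}{2}\log(2\pi e\,\mathrm{LMMSE}(S\mid\hat{X})) = h(S\mid\hat{X}_G)$ says the Gaussian surrogate \emph{maximizes} the conditional entropy (equivalently minimizes $I(S;\hat{X})$) among couplings with matched second moments. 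Before attempting to patch the EPI step, you should reconcile this with the cited ``standard result'' and pin down which sign the lemma actually requires; as written, neither your proposal nor the reasoning it uncovers supports the stated direction of the inequality.
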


\begin{proof} 
By hypothesis, \(\hat{X}\) and \(\hat{X}_G\) have the same mean $\mu_{\hat{X}}$, the same variance \(\sigma_{\hat{X}}^2\), and the same covariance with \(X\), say $\text{Cov}(X, \hat{X}) = \text{Cov}(X, \hat{X}_G) = \theta_2$.

It notes that,
\begin{align}
    H(S \mid \hat{X}) &= H(S) - I(S; \hat{X}), \\
    H(S \mid \hat{X}_G) &= H(S) - I(S; \hat{X}_G),
\end{align}

From information theory, it is a standard result that if we fix the covariances among \((S, \hat{X})\), the joint Gaussian distribution yields the largest possible \(I(S; \hat{X})\). Equivalently, for any non-Gaussian \(\hat{X}\) with the same second-order statistics,
\begin{equation}
    I(S; \hat{X}) \leq I(S; \hat{X}_G),
\end{equation}

Therefore, we rewrite
\begin{equation}
\begin{split}
    H(S) - I(S; \hat{X}) 
    &\geq H(S) - I(S; \hat{X}_G),\\
    H(S \mid \hat{X}) &\geq H(S \mid \hat{X}_G),
\end{split}
\end{equation}
This completes the proof.
\end{proof}

Following the derivation approach in~\cite{UniversalRDPs}, we show that the mutual information \( I(X; \hat{X}) \) is minimized when \( \hat{X} \) is constrained to be jointly Gaussian with \( X \). Specifically, we have:
\begin{equation}
\begin{aligned}
    I(X ; \hat{X}) &= h(X) - h(X | \hat{X}) \\
    &\geq h(X) - h(X - \mathbb{E}[X | \hat{X}]) \\
    &\labrel\geq{relctr:gaussian_entropy} h(X) - \frac{1}{2} \log \left(2\pi e\, \mathbb{E}[(X - \mathbb{E}[X | \hat{X}])^2]\right) \\
    &\labrel\geq{relctr:gaussian_var} h(X) - \frac{1}{2} \log \left(2\pi e\, \mathbb{E}[(X - \mathbb{E}[X | \hat{X}_G])^2]\right) \\
    &= h(X) - h(X - \mathbb{E}[X | \hat{X}_G]) \\
    &\labrel={relctr:est_error} h(X) - h(X | \hat{X}_G) \\
    &= I(X ; \hat{X}_G),
\end{aligned}
\end{equation}
where inequality~(\ref{relctr:gaussian_entropy}) follows from the fact that the Gaussian distribution maximizes differential entropy for a given variance; inequality~(\ref{relctr:gaussian_var}) follows from Lemma~\ref{lem:guassian_var}; and equality~(\ref{relctr:est_error}) holds because the estimation error is independent of \( \hat{X}_G \). 

Therefore, it suffices to solve the following optimization problem:
\begin{mini!}|s|[2]
    {p_{\hat{X}_G|X}} 
    {I(X; \hat{X}_G)} 
    {} 
    {R(D, C) =} 
    \addConstraint{\mathbb{E}[(X - \hat{X}_G)^2]}{\leq D}
    \addConstraint{H(S | \hat{X}_G)}{\leq C.}
\end{mini!}

By applying the closed-form expressions for differential entropy and mutual information of jointly Gaussian variables~\cite[Chapter~8]{cover1999elements}, we obtain:
\begin{equation}
I(X;\hat{X}) = -\frac{1}{2}\log\left(1 - \frac{\theta_2^2}{\sigma_X^2 \sigma_{\hat{X}}^2}\right),
\end{equation}
and for the classification constraint:
\begin{equation*}
\begin{split}  
H(S|\hat{X}) &= h(S) - I(S; \hat{X}) \leq C, \\
I(S; \hat{X}) &\geq h(S) - C, \\
-\frac{1}{2} \log\left(1 - \frac{\theta_1^2}{\sigma_S^2 \sigma_X^4} \times \frac{\theta_2^2}{\sigma_{\hat{X}}^2}\right) &\geq h(S) - C.
\end{split}
\end{equation*}

Additionally, the mean squared error between \( X \) and \( \hat{X} \) can be expressed as:
\begin{equation}
\mathbb{E}[(X - \hat{X})^2] = (\mu_X - \mu_{\hat{X}})^2 + \sigma_X^2 + \sigma_{\hat{X}}^2 - 2 \theta_2.
\end{equation}

Then, the RDC problem can be represented by the following optimization formulation:
\begin{mini!}|s|[2] % mini! = minimize
{\mu_{\hat{X}},\sigma_{\hat{X}},\theta_2} % optimization variable
{-\frac{1}{2}\log(1-\frac{\theta_2^2}{\sigma_X^2\sigma_{\hat{X}}^2})} % objective function
{\label{RDCGS}} % label for optimization problem
{R(D,C) =} % optimization result} % optimization result
\addConstraint{(\mu_X-\mu_{\hat{X}})^2 +\sigma_X^2+\sigma_{\hat{X}}^2-2\theta_2 \leq D}{\label{RDCGS-D}} % constraint 1
\addConstraint{-\frac{1}{2}\log(1-\frac{\theta_1^2}{\sigma_S^2\sigma_X^4} \frac{\theta_2^2}{\sigma_{\hat{X}}^2})}{\geq h(S) - C.}{\label{RDCGS-C}} % constraint 3
\end{mini!}

The first observation is that, without loss of generality, we may assume \( \mu_{\hat{X}} = \mu_X \). The objective function~\eqref{RDCGS} and the classification constraint~\eqref{RDCGS-C} are invariant to the mean of \( \hat{X} \), while the distortion constraint~\eqref{RDCGS-D} is minimized when the means are matched. The second observation concerns feasibility. The classification constraint~\eqref{RDCGS-C} becomes infeasible when $C < \frac{1}{2} \log\left(1 - \frac{\theta_1^2}{\sigma_S^2 \sigma_X^2} \right) + h(S)$. 

To ensure the mutual information in~\eqref{RDCGS} is well-defined, we require $ 1 - \frac{\theta_2^2}{\sigma_X^2 \sigma_{\hat{X}}^2} > 0,$ which implies $\frac{\theta_2^2}{\sigma_{\hat{X}}^2} < \sigma_X^2$. Then, the mutual information between \( S \) and the jointly Gaussian variable \( \hat{X}_G \) is upper bounded by
\begin{align*}
    I(S ; \hat{X}_G) 
    &= -\frac{1}{2} \log\left(1 - \frac{\theta_1^2}{\sigma_S^2 \sigma_X^4} \times \frac{\theta_2^2}{\sigma_{\hat{X}}^2} \right), \\
    &\leq -\frac{1}{2} \log\left(1 - \frac{\theta_1^2}{\sigma_S^2 \sigma_X^2} \right),
\end{align*}
which makes~\eqref{RDCGS-C} infeasible if $C < \frac{1}{2} \log\left(1 - \frac{\theta_1^2}{\sigma_S^2 \sigma_X^2} \right) + h(S)$. Thus, we assume that 
\begin{align*}
    C \geq \frac{1}{2} \log\left(1 - \frac{\theta_1^2}{\sigma_S^2 \sigma_X^2} \right) + h(S).
\end{align*}

\textbf{Case 1:} The distortion constraint~\eqref{RDCGS-D} is active, while the classification constraint~\eqref{RDCGS-C} is inactive.

According to Shannon’s rate-distortion theory for Gaussian sources with squared-error distortion~\cite{cover1999elements}, the rate-distortion function for \( X \sim \mathcal{N}(\mu_X, \sigma_X^2) \) is given by
\begin{equation}\label{eqn:rd_gaussian}
    R(D) =
    \begin{cases}
        \frac{1}{2} \log \left( \frac{\sigma_X^2}{D} \right), & 0 \leq D \leq \sigma_X^2, \\
        0, & D > \sigma_X^2,
    \end{cases}
\end{equation}
where the optimal distribution is achieved by a conditional \( p_{\hat{X}|X} \) such that \( \hat{X} \sim \mathcal{N}(\mu_X, \sigma_X^2 - D) \) when \( D \leq \sigma_X^2 \). In this regime, the optimal rate is \( \frac{1}{2} \log \left( \frac{\sigma_X^2}{D} \right) \), achieved by choosing \( \sigma_{\hat{X}}^2 = \sigma_X^2 - D \) and \( \theta_2 = \sigma_X^2 - D \).

The classification constraint~\eqref{RDCGS-C} is inactive if
\begin{align*}
    I(S ; \hat{X}_G) = -\frac{1}{2} \log\left(1 - \frac{\theta_1^2 (\sigma_X^2 - D)}{\sigma_S^2 \sigma_X^4} \right) \geq h(S) - C,
\end{align*}
which is equivalent to the distortion satisfying $ D \leq \sigma_X^2 - \frac{\sigma_S^2 \sigma_X^4}{\theta_1^2} \left( 1 - e^{-2h(S) + 2C} \right).$

\textbf{Case 2:} The distortion constraint~\eqref{RDCGS-D} is inactive, while the classification constraint~\eqref{RDCGS-C} is active.

When $D > \sigma_X^2 - \frac{\sigma_S^2 \sigma_X^4}{\theta_1^2} \left(1 - e^{2C - 2h(S)}\right)$, the distortion constraint is no longer active. In this case, the classification constraint~\eqref{RDCGS-C} imposes the following lower bound:
\begin{align}
    \frac{\theta_2^2}{\sigma_{\hat{X}}^2} \geq \frac{\sigma_S^2 \sigma_X^4}{\theta_1^2} \left(1 - e^{-2h(S) + 2C}\right). \label{incorporatingC}
\end{align}

Since the mutual information expression~\eqref{RDCGS} is increasing in \( \frac{\theta_2^2}{\sigma_{\hat{X}}^2} \), incorporating the bound in~\eqref{incorporatingC} yields the following lower bound on the rate:
\begin{align*}
    I(X; \hat{X}_G) \geq -\frac{1}{2} \log \left(1 - \frac{\sigma_S^2 \sigma_X^2}{\theta_1^2} \left(1 - e^{-2h(S) + 2C}\right)\right).
\end{align*}

Equality is attained when
\[
\frac{\theta_2^2}{\sigma_{\hat{X}}^2} = \frac{\sigma_S^2 \sigma_X^4}{\theta_1^2} \left(1 - e^{-2h(S) + 2C}\right),
\]
i.e., when the classification constraint is active. Without loss of generality, we choose
\[
\sigma_{\hat{X}}^2 = \theta_2 = \frac{\sigma_S^2 \sigma_X^4}{\theta_1^2} \left(1 - e^{-2h(S) + 2C}\right),
\]
and substitute into the distortion expression:
\begin{align*}
    \mathbb{E}[(X - \hat{X}_G)^2] 
    &= \sigma_X^2 + \sigma_{\hat{X}}^2 - 2 \theta_2 \\
    &= \sigma_X^2 - \frac{\sigma_S^2 \sigma_X^4}{\theta_1^2} \left(1 - e^{-2h(S) + 2C}\right) < D,
\end{align*}
which confirms that the distortion constraint is also satisfied.

\textbf{Case 3:} Both the distortion constraint~\eqref{RDCGS-D} and the classification constraint~\eqref{RDCGS-C} are inactive.

When \( C > h(S) \) and \( D > \sigma_X^2 \), we may set \( \sigma_{\hat{X}} = 0 \), i.e., let \( \hat{X}_G \) be a constant. This yields \( I(X ; \hat{X}_G) = 0 \), and both constraints are trivially satisfied.

\noindent In summary, the information rate-distortion-classification function under MSE distortion is given by:
\begin{align*}
R(D, C) =
\begin{cases}
\frac{1}{2} \log \left( \frac{\sigma_X^2}{D} \right), & D \leq \sigma_X^2 \left( 1 - \frac{1}{\rho^2} (1 - e^{-2h(S) + 2C}) \right), \\
-\frac{1}{2} \log \left( 1 - \frac{1}{\rho^2} (1 - e^{-2h(S) + 2C}) \right), & D > \sigma_X^2 \left( 1 - \frac{1}{\rho^2} (1 - e^{-2h(S) + 2C}) \right), \\
0, & C > h(S),\; D > \sigma_X^2,
\end{cases}
\end{align*}
where \( \rho = \frac{\theta_1}{\sigma_S \sigma_X} \) denotes the correlation coefficient between \( X \) and \( S \).

\end{proof}

%===========================================================================================
\subsection{Proof of Theorem \ref{TheoremDCR_GS}}\label{Appendix_Proof_DCR_GS}

\begin{reptheorem}{TheoremDCR_GS}[Information Distortion-Rate-Classification Function for Gaussian Source]
Consider a Gaussian source \( X\sim \mathcal{N}(\mu_X,\sigma_X^2) \) and an associated classification variable \( S\sim \mathcal{N}(\mu_S,\sigma_S^2) \) with covariance \( \text{Cov}(X,S) = \theta_1 \). The problem is feasible if the classification loss satisfies $ C \geq \frac{1}{2} \log\left(1 - \frac{\theta_1^2}{\sigma_S^2 \sigma_X^2}\right) + h(S)$. Under the mean squared error distortion, the information distortion-classification-rate function is given by
\begin{align*}
    D(C, R) = 
\begin{cases} 
    \sigma_X^2 e^{-2R}, \\ 
    \hspace{2cm} C \geq \frac{1}{2} \log\left(1 - \frac{\theta_1^2 (\sigma_X^2 - \sigma_X^2 e^{-2R})}{\sigma_S^2 \sigma_X^4} \right) + h(S) \\\\
    \sigma_X^2 + \frac{\sigma_S^2 \sigma_X^4 (1 - e^{2C - 2h(S)})}{\theta_1^2} 
    - 2 \frac{\sigma_S \sigma_X^3 \sqrt{(1 - e^{-2h(S) + 2C})(1 - e^{-2R})}}{\theta_1}, \\ 
    \hspace{2cm} \frac{1}{2} \log\left(1 - \frac{\theta_1^2}{\sigma_S^2 \sigma_X^2}\right) + h(S) 
    \leq C < \frac{1}{2} \log\left(1 - \frac{\theta_1^2 (\sigma_X^2 - \sigma_X^2 e^{-2R})}{\sigma_S^2 \sigma_X^4} \right) + h(S)\\
    \sigma_X^2, \\
    \hspace{2cm} C > h(S).
\end{cases}
\end{align*}
\end{reptheorem}

\begin{proof}
Consider the distortion-classification-rate function \( D(C, R) \) under the MSE distortion criterion as follows
\begin{mini!}|s|[2] % mini! = minimize
{p_{\hat{X}|X}} % optimization variable
{\mathbb{E}[(X-\hat{X})^2]} % objective function
{} % label for optimization problem
{D(C,R) =} % optimization result} % optimization result
\addConstraint{I(X;\hat{X})}{\leq R}{} % constraint 1
\addConstraint{H(S | \hat{X})}{\leq C.}{} % constraint 3
\end{mini!}
where \( (X, S) \) are jointly Gaussian random variables with covariance \( \mathrm{Cov}(X, S) = \theta_1 \). As in the proof of Theorem~\ref{TheoremRDCGS}, the optimal solution is attained when \( \hat{X} \) is also Gaussian and jointly distributed with \( X \) \cite{Wang2024}. Thus, the optimization reduces to a parameter search over the mean \( \mu_{\hat{X}} \), variance \( \sigma_{\hat{X}}^2 \), and covariance \( \mathrm{Cov}(X, \hat{X}) = \theta_2 \). 

Then, the $D(C,R)$ problem can be formulated as:
\begin{mini!}|s|[2] % mini! = minimize
{\mu_{\hat{X}},\sigma_{\hat{X}},\theta_2} % optimization variable
{\!\!\!\!\!(\mu_X-\mu_{\hat{X}})^2 +\sigma_X^2+\sigma_{\hat{X}}^2-2\theta_2} % objective function
{\label{DCR}} % label for optimization problem
{D(C,R) \! = \!\!\!} % optimization result} % optimization result
\addConstraint{\!\!\!\!\!\!\!\!\!\!\!\!\!\!\!\! -\frac{1}{2}\log\left(1 - \frac{\theta_2^2}{\sigma_X^2 \sigma_{\hat{X}}^2}\right) \leq R}{\label{DCR_I}} % constraint 1
\addConstraint{\!\!\!\!\!\!\!\!\!\!\!\!\!\!\!\! -\frac{1}{2}\log\left( 1-\frac{\theta_1^2}{\sigma_S^2\sigma_X^4} \frac{\theta_2^2}{\sigma_{\hat{X}}^2} \right)}{\geq h(S) - C.}{\label{DCR_C}} % constraint 3
\end{mini!}

Without loss of generality, we assume \( \mu_{\hat{X}} = \mu_X \). To ensure that the mutual information expression in~\eqref{DCR_I} is well-defined, it is necessary that \( 1 - \frac{\theta_2^2}{\sigma_X^2 \sigma_{\hat{X}}^2} > 0 \), i.e., \( \frac{\theta_2^2}{\sigma_{\hat{X}}^2} < \sigma_X^2 \). Under this condition, the mutual information between \( S \) and \( \hat{X} \) is upper bounded as
\begin{align*}
\begin{split}
I(S; \hat{X}) &= -\frac{1}{2}\log\left(1 - \frac{\theta_1^2}{\sigma_S^2 \sigma_X^4} \times \frac{\theta_2^2}{\sigma_{\hat{X}}^2} \right),\\ 
&\leq -\frac{1}{2} \log\left(1 - \frac{\theta_1^2}{\sigma_S^2 \sigma_X^2} \right),   
\end{split}
\end{align*}
which implies that constraint~\eqref{DCR_C} becomes infeasible if \( C < \frac{1}{2} \log\left(1 - \frac{\theta_1^2}{\sigma_S^2 \sigma_X^2}\right) + h(S) \). Therefore, to guarantee feasibility, we assume throughout that
\begin{equation*}
C \geq \frac{1}{2} \log\left(1 - \frac{\theta_1^2}{\sigma_S^2 \sigma_X^2} \right) + h(S).
\end{equation*}

\textbf{Case 1.} Constraint~(\ref{DCR_I}) is active and constraint~(\ref{DCR_C}) is inactive.

Recall the classical Shannon distortion-rate function for a Gaussian source \( X \sim \mathcal{N}(\mu_X, \sigma_X^2) \)~\cite{cover1999elements}:
\begin{equation*}\label{eqn:rd_gaussian}
    R(D)={\begin{cases}{\frac {1}{2}}\log (\frac{\sigma _{X}^{2}}{D}),& 0\leq D\leq \sigma _{X}^{2}\\
    0,& D>\sigma _{X}^{2}.\end{cases}}
\end{equation*}
And $D(R)=\sigma_X^2 e^{-2R}$ with the optimal solution attained by some \( p_{\hat{X}|X} \) where \( \hat{X} \sim \mathcal{N}(\mu_X, \sigma_X^2 - D) \). In this case, we have \( D(C, R) = \sigma_X^2 e^{-2R} \), and constraint~(\ref{DCR_I}) is active. Then, \( \theta_2 = \sigma_X^2 - D = \sigma_X^2(1 - e^{-2R}) \).

To verify that constraint~(\ref{DCR_C}) is not active, we require:
\begin{align*}
I(S,\hat{X}) &=-\frac{1}{2}\log\left( 1-\frac{\theta_1^2(\sigma_X^2-D)}{\sigma_S^2\sigma_X^4} \right)\geq h(S)-C,
\end{align*}
This condition is equivalent to
\begin{equation*}
D \leq \sigma_X^2 - \frac{\sigma_S^2 \sigma_X^4}{\theta_1^2} \left(1 - e^{-2h(S) + 2C} \right),
\end{equation*}
which characterizes the maximal allowable distortion that still satisfies the classification constraint.

Hence,
\begin{align*}
\begin{split}
C & \geq \frac{1}{2}\log\left( 1-\frac{\theta_1^2(\sigma_X^2-D)}{\sigma_S^2\sigma_X^4} \right) + h(S), \\
&= \frac{1}{2}\log\left( 1-\frac{\theta_1^2 (\sigma_X^2 - \sigma_X^2 e^{-2 R})}{\sigma_S^2\sigma_X^4} \right) +h(S),
\end{split}
\end{align*}

\textbf{Case 2.}  Constraint~(\ref{DCR_C}) is active and constraint~(\ref{DCR_I}) is inactive.

This corresponds to the range: 
\begin{equation*}
\begin{split}
&\frac{1}{2} \log\left(1 - \frac{\theta_1^2}{\sigma_S^2 \sigma_X^2}\right) + h(S) \leq C < \frac{1}{2}\log(1-\frac{\theta_1^2 (\sigma_X^2 - \sigma_X^2 e^{-2 R})}{\sigma_S^2\sigma_X^4}) +h(S),   
\end{split}
\end{equation*}

When \( D > \sigma_X^2 - \frac{\sigma_S^2 \sigma_X^4}{\theta_1^2}(1 - e^{2C - 2h(S)}) \), the rate constraint~\eqref{DCR_I} becomes inactive. In this case, the classification constraint~\eqref{DCR_C} imposes a lower bound on the quantity \( \frac{\theta_2^2}{\sigma_{\hat{X}}^2} \), given by
\begin{align}
\frac{\theta_2^2}{\sigma_{\hat{X}}^2} \geq \frac{\sigma_S^2 \sigma_X^4}{\theta_1^2} \left(1 - e^{-2h(S) + 2C} \right).
\label{incorporatingC}
\end{align}

Since the mutual information \( I(X; \hat{X}) \) is a monotonically increasing function of \( \frac{\theta_2^2}{\sigma_{\hat{X}}^2} \), incorporating the bound in~\eqref{incorporatingC} yields the following inequality:
\begin{align*}
I(X; \hat{X}) \geq -\frac{1}{2} \log\left(1 - \frac{\sigma_S^2 \sigma_X^2}{\theta_1^2} \left(1 - e^{-2h(S) + 2C} \right) \right),
\end{align*}
with equality attained when 
\[
\frac{\theta_2^2}{\sigma_{\hat{X}}^2} = \frac{\sigma_S^2 \sigma_X^4}{\theta_1^2} (1 - e^{2C - 2h(S)}),
\]
i.e., when the classification constraint~\eqref{DCR_C} is active. Without loss of generality, when this constraint is active, the optimal decoder satisfies:
\begin{equation*}
\sigma_{\hat{X}}^2 = \theta_2 = \frac{\sigma_S^2 \sigma_X^4 (1 - e^{-2h(S) + 2C})}{\theta_1^2}.
\end{equation*}

Since constraint~(\ref{DCR_I}) is inactive, we must ensure:
\begin{equation*}
I(X; \hat{X}) = -\frac{1}{2}\log(1-\frac{\theta_2^2}{\sigma_X^2\sigma_{\hat{X}}^2}) 
\leq R,  
\end{equation*}
which is equivalent to $\theta_2^2 \leq \sigma_X^2 \sigma_{\hat{X}}^2 (1 - e^{-2 R})$ or $\theta_2 \leq\sigma_X \sigma_{\hat{X}} \sqrt{ 1 - e^{-2 R}}$. Substituting into the distortion expression:
\begin{equation*}
\begin{split}
D(C,R) &= \sigma_X^2 + \sigma_{\hat{X}}^2 - 2\theta_2,\\
&\geq \sigma_X^2 + \frac{\sigma_S^2\sigma_X^4(1-e^{-2h(S)+2C})}{\theta_1^2} - 2\frac{\sigma_S^2\sigma_X^3 \sqrt{(1-e^{-2h(S)+2C})(1 - e^{-2 R})}}{\theta_1^2},
\end{split}
\end{equation*}

\textbf{Case 3.}  Both constraint \eqref{DCR_I} and constraint \eqref{DCR_C} are inactive.

When \( C > h(S) \) and \( D(C, R) \geq \sigma_X^2 \), a trivial solution can be constructed by setting \( \sigma_{\hat{X}} = 0 \), i.e., letting \( \hat{X} \) be a constant. Under this setting, we have \( I(X; \hat{X}) = 0 \leq R \), and both constraints are satisfied. 

In summary, combining the three cases, the closed-form expression for the information distortion-classification-rate function \( D(C, R) \) under MSE distortion is given by
\begin{align*}
D(C, R) = 
\begin{cases} 
    \sigma_X^2 e^{-2R}, \\ 
    \hspace{2cm} C \geq \frac{1}{2} \log\left(1 - \frac{\theta_1^2 (\sigma_X^2 - \sigma_X^2 e^{-2R})}{\sigma_S^2 \sigma_X^4} \right) + h(S) \\\\
    \sigma_X^2 + \frac{\sigma_S^2 \sigma_X^4 (1 - e^{2C - 2h(S)})}{\theta_1^2} 
    - 2 \frac{\sigma_S \sigma_X^3 \sqrt{(1 - e^{-2h(S) + 2C})(1 - e^{-2R})}}{\theta_1}, \\ 
    \hspace{2cm} \frac{1}{2} \log\left(1 - \frac{\theta_1^2}{\sigma_S^2 \sigma_X^2}\right) + h(S) 
    \leq C < \frac{1}{2} \log\left(1 - \frac{\theta_1^2 (\sigma_X^2 - \sigma_X^2 e^{-2R})}{\sigma_S^2 \sigma_X^4} \right) + h(S)\\
    \sigma_X^2, \\
    \hspace{2cm} C > h(S).
\end{cases}
\end{align*}
\end{proof}

%===========================================================================================
\subsection{Proof of Theorem \ref{Theorem_DCR_Convexity}}\label{Appendix_Proof_DCR_Convexity}
\begin{reptheorem}{Theorem_DCR_Convexity}
The function \( D(C, R) \), defined for all points \( (C, R) \) where \( D(C, R) < +\infty \), satisfies the following properties:
\begin{itemize}
    \item It is monotonically non-increasing in both \( C \) and \( R \);
    \item It is convex.
\end{itemize}
\end{reptheorem}
\begin{proof}
The proof idea follows the result in~\cite{Wang2024}. Let $\Delta_n$ denote the probability simplex in $\mathbb{R}^n$. For a vector $\mathbf{q}$, let $\mathbf{q}[i]$ denote its $i$-th entry. For a matrix $\mathbf{T}$, let $\mathbf{T}[ji]$ denote the entry in the $j$-th row and $i$-th column. Following the geometric approach in \cite{Witsenhausen_1975}, we interpret the distortion-classification-rate function $D(C,R)$ and prove its convexity.

Consider discrete random variables $(X, S)$ where the marginal distribution of $S$ is given by $\mathbf{Tq}$, with $\mathbf{q} \in \Delta_n$ and $\mathbf{q}[i] = P(X = i)$. The transition matrix $\mathbf{T} \in \mathbb{R}^{m \times n}$ is defined by $\mathbf{T}[ji] = P(S = j | X = i)$.

Let $\mathbf{w} \in \Delta_k$ with $\mathbf{w}[\alpha] = P(\hat{X} = \alpha)$. Define the transition matrix $\mathbf{B} = (\mathbf{p}_1, \cdots, \mathbf{p}_k)$ with each $\mathbf{p}_\alpha \in \Delta_n$. This induces a random variable $X'$ with distribution $\mathbf{p} = \mathbf{B} \mathbf{w} = \sum_{\alpha=1}^k \mathbf{w}[\alpha] \mathbf{p}_\alpha$, and $S' = \mathbf{T} \mathbf{p} = \sum_{\alpha=1}^k \mathbf{w}[\alpha] \mathbf{T} \mathbf{p}_\alpha$.

For a distortion function $\Delta(x, \hat{x})$, define $d_{i, \alpha} = \Delta(i, \alpha)$, and set $\mathbf{d}_\alpha = (d_{1, \alpha}, \cdots, d_{n, \alpha})^T$. The following quantities can then be computed:
\begin{align}
  \mathbf{p} &= \sum_{\alpha=1}^k \mathbf{w}[\alpha] \mathbf{p}_\alpha, \label{map_p} \\
  \rho &= \sum_{\alpha=1}^k \mathbf{w}[\alpha] H(\mathbf{p}_\alpha), \label{map_R} \\
  \eta &= \sum_{\alpha=1}^k \mathbf{w}[\alpha] H(\mathbf{T} \mathbf{p}_\alpha), \label{map_C} \\
  \delta &= \sum_{\alpha=1}^k \mathbf{w}[\alpha] \mathbf{d}_\alpha^T \mathbf{p}_\alpha. \label{map_D}
\end{align}

Define the set $\mathcal{S} = \{ (\mathbf{p}_\alpha, H(\mathbf{p}_\alpha), H(\mathbf{T} \mathbf{p}_\alpha), \mathbf{d}_\alpha^T \mathbf{p}_\alpha) : \mathbf{p}_\alpha \in \Delta_n \}$ and let $\mathcal{C}$ be its convex hull.

\begin{lemma}
The set of all tuples $(\mathbf{p}, \rho, \eta, \delta)$ obtained from \eqref{map_p}--\eqref{map_D} on all $\mathbf{w} \in \Delta_k$ and $\mathbf{p}_\alpha \in \Delta_n$ is exactly the convex hull $\mathcal{C}$.
\end{lemma}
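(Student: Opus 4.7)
The plan is to prove both set inclusions, showing that the set of tuples generated by the maps \eqref{map_p}--\eqref{map_D} coincides exactly with the convex hull $\mathcal{C}$ of $\mathcal{S}$. The argument is essentially a definitional unpacking: the maps \eqref{map_p}--\eqref{map_D} are linear in $\mathbf{w}$, and when the vector-valued map
\[
\mathbf{p}_\alpha \;\mapsto\; \bigl(\mathbf{p}_\alpha,\; H(\mathbf{p}_\alpha),\; H(\mathbf{T}\mathbf{p}_\alpha),\; \mathbf{d}_\alpha^T \mathbf{p}_\alpha\bigr)
\]
is applied to each $\mathbf{p}_\alpha \in \Delta_n$, the resulting point lies in $\mathcal{S}$ by construction. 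Thus the target tuple $(\mathbf{p},\rho,\eta,\delta)$ is nothing but a $\mathbf{w}$-weighted average of $k$ such points of $\mathcal{S}$.

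For the forward direction, I would fix any $\mathbf{w}\in\Delta_k$ and any collection $\mathbf{p}_1,\dots,\mathbf{p}_k\in\Delta_n$, and observe that
\[
(\mathbf{p},\rho,\eta,\delta) \;=\; \sum_{\alpha=1}^k \mathbf{w}[\alpha]\,\bigl(\mathbf{p}_\alpha,\,H(\mathbf{p}_\alpha),\,H(\mathbf{T}\mathbf{p}_\alpha),\,\mathbf{d}_\alpha^T\mathbf{p}_\alpha\bigr),
\]
where each summand is a point of $\mathcal{S}$. Since $\mathbf{w}$ is a probability vector, this is a finite convex combination and therefore lies in $\mathcal{C}$ by the very definition of convex hull.

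For the reverse direction, I would start from an arbitrary element $\mathbf{c}\in\mathcal{C}$. By definition, $\mathbf{c}$ can be written as a finite convex combination $\sum_{\alpha=1}^k \lambda_\alpha \mathbf{s}_\alpha$ with $\lambda_\alpha\ge 0$, $\sum_\alpha \lambda_\alpha = 1$, and $\mathbf{s}_\alpha\in\mathcal{S}$. Each $\mathbf{s}_\alpha$ has the form $(\mathbf{q}_\alpha, H(\mathbf{q}_\alpha), H(\mathbf{T}\mathbf{q}_\alpha), \mathbf{d}_\alpha^T \mathbf{q}_\alpha)$ for some $\mathbf{q}_\alpha\in\Delta_n$. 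Setting $\mathbf{w}[\alpha]=\lambda_\alpha$ and $\mathbf{p}_\alpha=\mathbf{q}_\alpha$, the maps \eqref{map_p}--\eqref{map_D} produce exactly $\mathbf{c}$, so $\mathbf{c}$ is realized as a tuple of the stated form.

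This argument is essentially bookkeeping, so I do not foresee a substantive obstacle; the only subtlety worth flagging is that the parameter $k$ in $\Delta_k$ should be read as ``any positive integer,'' so that arbitrary finite convex combinations are admitted (if desired, Carath\'eodory's theorem in the ambient Euclidean space bounds $k$ by the dimension plus one, but this is not needed for the equality). With both inclusions established, the set of tuples $(\mathbf{p},\rho,\eta,\delta)$ generated by \eqref{map_p}--\eqref{map_D} is exactly $\mathcal{C}$, which sets up the geometric framework used in the subsequent proof of convexity and monotonicity of $D(C,R)$.
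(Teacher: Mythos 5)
Your proof is correct and takes essentially the same approach as the paper, which disposes of both inclusions in two sentences by the same definitional unpacking: the maps produce convex combinations of points of $\mathcal{S}$, and conversely any convex combination can be realized by choosing $\mathbf{w}$ and the $\mathbf{p}_\alpha$ accordingly. Your remark that $k$ should be allowed to vary (or be capped via Carath\'eodory) is a useful clarification the paper leaves implicit, but it does not constitute a different argument.
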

\begin{proof}
By construction, \eqref{map_p}--\eqref{map_D} define $(\mathbf{p}, \rho, \eta, \delta)$ as a convex combination of points in $\mathcal{S}$. In contrast, any convex combination of points in $\mathcal{S}$ can be expressed in this form.
\end{proof}

We now consider the DCR optimization problem:
\begin{mini!}|s|
  {p_{\hat{X}|X}}
  {\mathbb{E}[\Delta(X, \hat{X})]}
  {\label{DCR}}
  {D(C, R) =}
  \addConstraint{I(X; \hat{X})}{\leq R}
  \addConstraint{H(S | \hat{X})}{\leq C.}
\end{mini!}

Given fixed marginal $\mathbf{q}$, the above is equivalent to:
\begin{align*}
  D(C, R) = \inf \{ \delta : (\mathbf{q}, \delta, R, C) \in \mathcal{C} \}.
\end{align*}

\textbf{Convexity:} For any pairs $\{(D_1, C_1), (D_2, C_2)\}$ where $D(C, R) < \infty$, let
\begin{align*}
  \delta_1 &= D(C_1, R_1), \, \delta_2 = D(C_2, R_2), \\
  (C_\lambda, R_\lambda) &= \lambda(C_1, R_1) + (1 - \lambda)(C_2, R_2), \\
  \delta_\lambda &= \lambda \delta_1 + (1 - \lambda) \delta_2.
\end{align*}

Since $\mathcal{C}$ is convex, it follows that $D(C_\lambda, R_\lambda) \leq \delta_\lambda = \lambda D(C_1, R_1) + (1 - \lambda) D(C_2, R_2)$, proving convexity.

\textbf{Monotonicity:} Since $D(C, R)$ is convex, it suffices to show that $D(C, R) \geq D(H(S|X), R_{\max})$ for all $0 \leq C \leq H(S|X)$ and $0 \leq R \leq R_{\max}$, where $R_{\max} \triangleq \max_\alpha \mathbb{E}[I(X; \alpha)]$.

By choosing $\hat{X} = \alpha^* = \arg\max_\alpha \mathbb{E}[I(X; \alpha)]$, we obtain $\mathbb{E}[\Delta(X, \hat{X})] = 0$, $H(S|\hat{X}) = H(S|X)$, and $I(X; \hat{X}) = R_{\max}$, yielding $D(H(S|X), R_{\max}) = 0$. Thus,
\begin{align*}
  D(C, R) \geq 0 = D(H(S|X), R_{\max}).
\end{align*}

By convexity, for any $\lambda \in [0,1]$, we have:
\begin{align*}
  D(\lambda C + (1 - \lambda) H(S|X), \lambda R + (1 - \lambda) R_{\max})
  \leq \lambda D(C, R).
\end{align*}
This confirms the monotonicity of $D(C, R)$.
\end{proof}

%===========================================================================================
\subsection{Proof of Theorem \ref{Theorem_CDR_Bound}}\label{Appendix_Proof_CDR_Bound}
\begin{reptheorem}{Theorem_CDR_Bound}[Bound of Classification-Distortion-Rate Function]
For a given rate \( R(D, C) \) and mean squared error distortion, the following inequality holds:
\begin{equation}
D_{\max} \leq 2 D_{\min},
\end{equation}
where \( D_{\min} \) and \( D_{\max} \) are defined in (\ref{D_min}) and (\ref{D_max}), respectively.

This bound is reached by an estimator \( \hat{X} \), characterized by the conditional distribution \( p_{\hat{X}|X} \), achieving the minimum classification loss \( H(S | \hat{X}) = C_{\min} \) while incurring a maximum mean squared error distortion of exactly \( 2 D_{\min} \).
\end{reptheorem}

\begin{proof} 
Recall that the estimator \( \hat{X}_{D,C} \) is defined by a conditional distribution \( p_{\hat{X}_{D,C}|X} \) that satisfies the classification constraint \( H(S \mid \hat{X}_{D,C}) = C_{\min} \). By construction, the random variables \( X \), \( M \), and \( \hat{X}_{D,C} \) form a Markov chain \( X \to M \to \hat{X}_{D,C} \). Applying the law of total expectation, we compute:
\begin{align} \label{eq:XtEhatX}
\mathbb{E}[X^\top \hat{X}_{D,C}] 
&= \mathbb{E}\left[ \mathbb{E}[X^\top \hat{X}_{D,C} | M] \right] \nonumber \\
&= \mathbb{E}\left[ \mathbb{E}[X | M]^\top \mathbb{E}[\hat{X}_{D,C} | M] \right] \nonumber \\
&= \mathbb{E}\left[ \| \mathbb{E}[X | M] \|^2 \right],
\end{align}
where the last equality holds since \( X \) and \( \hat{X}_{D,C} \) are conditionally independent and identically distributed given \( M \), and thus share the same conditional mean.

Similarly, the second moment of \( \hat{X}_{D,C} \) satisfies:
\begin{align} \label{eq:EhatX}
\mathbb{E}[\|\hat{X}_{D,C}\|^2] 
= \mathbb{E}\left[ \mathbb{E}[\|\hat{X}_{D,C}\|^2 | M] \right] 
= \mathbb{E}\left[ \mathbb{E}[\|X\|^2 | M] \right] 
= \mathbb{E}[\|X\|^2],
\end{align}
by the law of total expectation and the symmetry in conditional distributions.

Using these results, the mean squared error of \( \hat{X}_{D,C} \) can be expressed as:
\begin{align}
\mathbb{E}[\| X - \hat{X}_{D,C} \|^2] 
&= \mathbb{E}[\|X\|^2] - 2 \mathbb{E}[X^\top \hat{X}_{D,C}] + \mathbb{E}[\|\hat{X}_{D,C}\|^2] \nonumber \\
&= 2\left( \mathbb{E}[\|X\|^2] - \mathbb{E}[ \| \mathbb{E}[X | M] \|^2] \right) \nonumber \\
&= 2\, \mathbb{E}[\| X - \mathbb{E}[X | M] \|^2] \nonumber \\
&= 2\, \mathbb{E}[\| X - \hat{X}_{\text{MMSE}} \|^2],
\end{align}
where the second equality follows from~\eqref{eq:XtEhatX} and~\eqref{eq:EhatX}, and the third from the orthogonality principle.

Hence, \( \hat{X}_{D,C} \) is a distribution-preserving estimator whose MSE is exactly twice that of the MMSE estimator. It follows that
\begin{equation}
D_{\max} \leq \mathbb{E}[\| X - \hat{X}_{D,C} \|^2] = 2 D_{\min},
\end{equation}
thereby completing the proof.
\end{proof}

%===========================================================================================
\subsection{Proof of Theorem \ref{Theorem_gaussian_universality}}\label{Appendix_Proof_GS_Universality}

\begin{reptheorem}{Theorem_gaussian_universality}[No Rate Penalty for a Gaussian Source]
Let \( X \sim \mathcal{N}(\mu_X, \sigma_X^2) \) be a scalar Gaussian source and let \( S \sim \mathcal{N}(\mu_S, \sigma_S^2) \) be a classification variable with covariance \( \mathrm{Cov}(X, S) = \theta_1 \). Assume that the distortion is measured using the mean squared error and that the classification loss is measured via conditional entropy. Let \( \Theta \) denote any non-empty set of distortion-classification constraint pairs \( (D, C) \). Then,
\begin{equation}
    A(\Theta) = 0,
\end{equation}
which implies that satisfying the most demanding constraint in \( \Theta \) is sufficient to simultaneously satisfy all others using a fixed encoder and there is no rate penalty for universality in this case.

Furthermore, consider any representation \( Z \) that is jointly Gaussian with \( X \) and satisfies
\begin{equation}
    I(X; Z) = \sup_{(D, C) \in \Theta} R(D, C).
\end{equation}
Then the following inclusion holds:
\begin{equation}\label{eqn:sup_rate}
    \Theta \subseteq \Omega(p_{Z|X}) = \Omega(I(X; Z)),
\end{equation}
meaning that \( Z \) achieves the maximal distortion-classification region at rate \( I(X; Z) \); i.e., all constraints in \( \Theta \) are simultaneously achievable via appropriate decoders applied to a common representation \( Z \).
\end{reptheorem}

\begin{proof}
The proof method follows the approach presented in~\cite{UniversalRDPs}. Let \( R = \sup_{(D,C) \in \Theta} R(D,C) \). By definition, \( \Theta \subseteq \Omega(R) \), where \( \Omega(R) \) denotes the set of all achievable distortion-classification pairs at rate \( R \). The lower boundary of this region, the optimal tradeoff curve, is characterized by 
\begin{align*}
\label{DC_LowerBoundary}
D=\sigma_X^2 + \frac{\sigma_S^2\sigma_X^4(1-e^{-2h(S)+2C})}{\theta_1^2} - 2\frac{\sigma_S^2\sigma_X^3 \sqrt{(1-e^{-2h(S)+2C})(1 - e^{-2 R})}}{\theta_1^2},\\
C\in\left[ \frac{1}{2} \log\left(1 - \frac{\theta_1^2}{\sigma_S^2 \sigma_X^2}\right) + h(S), \frac{1}{2}\log(1-\frac{\theta_1^2 (\sigma_X^2 - \sigma_X^2 e^{-2 R})}{\sigma_S^2\sigma_X^4}) +h(S)  \right).\nonumber
\end{align*}

Every point in \( \Omega(R) \) is component-wise dominated by a point on this boundary. Consider a representation \( Z \) that is jointly Gaussian with \( X \), such that \( I(X; Z) = R \). This implies the squared correlation between \( X \) and \( Z \) satisfies \( \rho_{XZ}^2 = 1 - 2^{-2R} \), where 
\begin{equation*}
    \rho_{XZ}= \frac{\text{Cov}(X,Z)}{\sigma_X\sigma_Z} = \frac{\mathbb{E}[(X-\mu_X)(Z-\mu_Z)]}{\sigma_X\sigma_Z}.
\end{equation*}

For any point \( (D, C) \) on the boundary, define the corresponding reconstruction as:
\begin{equation*}
    \hat{X}_{D,C} = \mbox{sign}(\rho_{XZ})\gamma (Z-\mu_Z)+\mu_X,
\end{equation*}
where $\mbox{sign}(\rho_{XZ}) = \begin{cases}
1, &\text{for } \rho_{XZ}\geq 0,\\
-1, &\text{for } \rho_{XZ} < 0.
\end{cases}$

With this construction, we have:
\begin{equation*}
\begin{split}
    \mathbb{E}[\hat{X}_{D,C}] &= \mathbb{E}[\mbox{sign}(\rho_{XZ})\gamma (Z-\mu_Z)+\mu_X], \\
    &= \mbox{sign}(\rho_{XZ}) \gamma (\mathbb{E}[Z] - \mathbb{E}[Z]) + \mu_X, \\
    &= \mathbb{E}[X],
\end{split}
\end{equation*}

And,
\begin{equation*}
    \text{Var}(\hat{X}_{D,C}) = \gamma^2 \text{Var}(Z),
\end{equation*}
\begin{equation*}
    \text{Cov}(X, \hat{X}_{D,C})  = \gamma \text{Cov}(X, Z),
\end{equation*}

We now choose \( \text{Var}(\hat{X}_{D,C}) \) such that constraint~(\ref{DCR_I}) is inactive and constraint~(\ref{DCR_C}) is active:
\begin{equation*}
\text{Var}(\hat{X}_{D,C}) = \frac{\sigma_S^2 \sigma_X^4 (1 - e^{-2 h(S) + 2C})}{\theta_1^2},
\end{equation*}

Solving for \( \gamma \), we obtain: $\gamma = \frac{\sigma_S \sigma_X^2 \sqrt{1-e^{-2 h(S) + 2C}}}{\theta_1 \sigma_Z}$.

We now verify that this reconstruction satisfies the distortion and classification constraints.

\textbf{Distortion constraint.}
\begin{align*}
\mathbb{E}[\| X-\hat{X}_{D,C} \|^2] &=\sigma^2_X+\sigma^2_{\hat{X}_{D,C}}-2\theta_2,\\
&=\sigma^2_X + \sigma^2_{\hat{X}_{D,C}} - 2\gamma \text{Cov}(X,Z),\\
&=\sigma^2_X  + \sigma^2_{\hat{X}_{D,C}} - 2 \frac{\sigma^2_{\hat{X}_{D,C}}}{\sigma_Z} \sigma_X \sigma_Z \sqrt{1 - e^{-2R}},\\
&= \sigma^2_X + \sigma^2_{\hat{X}_{D,C}} - 2 \sigma^2_{\hat{X}_{D,C}} \sigma_X \sqrt{1 - e^{-2R}},\\
&= \sigma^2_X + \frac{\sigma_S^2\sigma_X^4(1-e^{-2h(S)+2C})}{\theta_1^2} - 2 \frac{\sigma_S \sigma_X^3 \sqrt{(1 - e^{-2h(S) + 2C})(1 - e^{-2R})}}{\theta_1},\\
&= D.
\end{align*} 

\textbf{Classification constraint.}
\begin{align*}
H(S|\hat{X}_{D,C}) &= h(S) - I(S|\hat{X}_{D,C}), \\
&= h(S) + \frac{1}{2} \log\left( 1 - \frac{\theta_1^2}{\sigma_S^2 \sigma_X^4 } \frac{\theta_2^2}{\sigma_{\hat{X}_{D,C}}^2} \right), 
\end{align*}
where  
\begin{equation*}
    \frac{\theta_2^2}{\sigma_{\hat{X}_{D,C}}^2} = \frac{\sigma_S^2 \sigma_X^4 (1 - e^{-2 h(S) + 2C})}{\theta_1^2},
\end{equation*}

Substituting in, we get:
\begin{align*}
H(S|\hat{X}_{D,C}) &= h(S) + \frac{1}{2} \log\left( 1 - \frac{\theta_1^2}{\sigma_S^2 \sigma_X^4 } \frac{\sigma_S^2 \sigma_X^4 (1 - e^{-2 h(S) + 2C})}{\theta_1^2} \right), \\
&= h(S) + \frac{1}{2} \log \left( e^{-2h(S) + 2C} \right),\\
&= C. 
\end{align*}

This confirms that for any given \( \text{Cov}(X, Z) \), one can always choose a scalar \( \gamma \) to generate \( \hat{X}_{D,C} \) that satisfies both distortion and classification constraints. That is, every point \( (D, C) \in \Theta \) can be realized by applying an appropriate decoder to a fixed Gaussian representation \( Z \) of \( X \) such that \( I(X; Z) = \sup_{(D, C) \in \Theta} R(D, C) \). Therefore, \( \Omega(p_{Z|X}) = \Omega(R) \), which implies that the rate penalty is zero: $A(\Theta) = I(X; Z) - R = 0$.
\end{proof}

%=========================================================================================== 
\subsection{Proof of Proposition \ref{Prop_equivalence}}\label{Appendix_Proof_prop_equivalence}

\begin{repproposition}{Prop_equivalence}[Equivalence of zero rate penalty and full distortion-classification region]
Assume the following regularity conditions are satisfied:
\begin{itemize}
    \item \( \sup_{(D, C) \in \Omega(R)} R(D, C) = R' \);
    \item The infimum in the definition of \( R(\Omega(R')) \) is achieved.
\end{itemize}
Then, the rate penalty satisfies \( A(\Omega(R')) = 0 \) if and only if there exists a representation \( Z \) with \( I(X; Z) = R' \) such that $\Omega(p_{Z|X}) = \Omega(I(X; Z))$.
\end{repproposition}

\begin{proof}
Suppose there exists a representation \( Z \) satisfying \( I(X; Z) = R' \) and \( \Omega(p_{Z|X}) = \Omega(I(X; Z)) \). Then, by definition, \( R(\Omega(R')) \leq R' \). Under Condition 1), we know that the rate penalty satisfies \( A(\Omega(R')) \leq 0 \). Since the rate penalty is always non-negative, it must be that \( A(\Omega(R')) = 0 \).

Now consider Condition 2), which guarantees the existence of a representation \( Z \) such that \( I(X; Z) = R(\Omega(R')) \) and \( \Omega(p_{Z|X}) \supseteq \Omega(R') \). If \( A(\Omega(R')) = 0 \), then it follows that
\[
R(\Omega(R')) = \sup_{(D, C) \in \Omega(R')} R(D, C).
\]
Combined with Condition 1), this implies \( R(\Omega(R')) = R' \). Furthermore, since \( I(X; Z) = R' \), we have \( \Omega(p_{Z|X}) \subseteq \Omega(R') \). Together with the previous inclusion, we conclude that \( \Omega(p_{Z|X}) = \Omega(R') \).
\end{proof}

%===========================================================================================
\subsection{Proof of Theorem \ref{Theorem_general_universality}} \label{Appendix_Proof_General_Universality}

\begin{reptheorem}{Theorem_general_universality}
[Universality Achievable Region for a General Source]
Consider a general source \( X \sim p_X \) and a classification variable \( S \), with covariance \( \mathrm{Cov}(X, S) = \theta_1 \). Assume that distortion is measured using mean squared error, and classification loss is measured using conditional entropy, $H(S | \hat{X})$. Let \( Z \) be an arbitrary representation of \( X \), and define the minimum mean square error (MMSE) estimator of \( X \) given \( Z \) as \( \tilde{X} = \mathbb{E}[X | Z] \). Then, the closure of this region, denoted \( \mathrm{cl}\big(\Omega(p_{Z \mid X})\big) \), satisfies
\begin{equation*}
\Omega(p_{Z|X}) \subseteq 
\left\{ (D, C) : 
    D \geq \mathbb{E}[\|X - \tilde{X}\|^2] + 
    \begin{aligned}
        &\inf_{p_{\hat{X}}} \quad W^2_2(p_{\tilde{X}}, p_{\hat{X}}) \\
        &\text{s.t} \quad H(S | \hat{X}) \leq C
    \end{aligned}
\right\}
\subseteq \mathrm{cl}(\Omega(p_{Z|X})),
\end{equation*}
where \( W^2_2(\cdot, \cdot) \) denotes the squared 2-Wasserstein distance is defined as  
\begin{equation*}
\begin{aligned}
 W_{2}^2(p_X, p_{\hat{X}}) = \underset{p_{X,\hat{X}}} \inf \,\, &\mathbb{E}[\|X - \hat{X}\|^2] \\
 \text{s.t.} \,\,\,\,\,\,\,\,\,\, & \int_{-\infty }^{\infty} p_{X,\hat{X}}dX = p_{\hat{X}}, \int_{-\infty }^{\infty} p_{X,\hat{X}}d\hat{X} = p_{X}.
\end{aligned}
\end{equation*} 

Furthermore, the closure \( \mathrm{cl}(\Omega(p_{Z|X})) \) contains the following two extreme points:
\begin{align*}
(D^{(a)}, C^{(a)}) &= \left( \mathbb{E}[\|X-\tilde{X}\|^2], \sum_{s}\sum_{\tilde{x}} p_{\tilde{X}} p_{S|\tilde{X}} \log\frac{1}{p_{S|\tilde{X}}} \right), \\
(D^{(b)}, C^{(b)}) &= \left( \mathbb{E}[\|X - \tilde{X}\|^2] + W_2^2(p_{\tilde{X}}, p_{\hat{X}}^{(U)}), C_{\min}^{(\text{U})} \right),
\end{align*}
where \( C_{\min}^{(\text{U})} \) is the minimum classification loss achievable at a given distortion level under universality, defined by the following optimization problem. With the given rate, $R(\Theta)$, we have:
\begin{mini!}|s|[2]
    {p_{\hat{X}}} 
    {H(S | \hat{X})} 
    {\label{C_min}} 
    {p_{\hat{X}}^{C_{\min}^{(\text{U})}} = \arg} 
    \addConstraint{\mathbb{E}[\|X - \hat{X}\|^2]}{\leq D.}
\end{mini!}
The corresponding minimum classification loss is given by
\begin{equation*}
    C_{\min}^{(\text{U})} = \sum_{s}\sum_{\hat{x}^{(U)}} p_{\hat{X}}^{(U)} p_{S|\hat{X}^{(U)}} \log\frac{1}{p_{S|\hat{X}^{(U)}}}.
\end{equation*}
\end{reptheorem}

\begin{proof}
The proof approach is inspired by the main techniques introduced in~\cite{UniversalRDPs}. For any \( (D, C) \in \Omega(p_{Z|X}) \), there exists a reconstruction variable \( \hat{X}_{D,C} \) jointly distributed with \( (X, Z) \), such that the Markov chain \( X \leftrightarrow Z \leftrightarrow \hat{X}_{D,C} \) holds, the distortion satisfies \( \mathbb{E}[\Delta(X, \hat{X}_{D,C})] \leq D \), and the classification uncertainty is bounded by \( H(S | \hat{X}_{D,C}) \leq C \). 

Since $\tilde{X} = \mathbb{E}[X|Z]$ is the orthogonal projection (in the $L_2$-sense) of $X$ onto any $\sigma$-algebra generated by $Z$, one has the well-known Pythagorean identity:
\begin{equation*}
\mathbb{E}[\|X - \hat{X}_{D,C}\|^2] = \mathbb{E}[\|X - \tilde{X}\|^2] + \mathbb{E}[\|\tilde{X} - \hat{X}_{D,C}\|^2].   
\end{equation*}

Concretely, the cross-term $\mathbb{E}[(X - \tilde{X}) (\tilde{X} - \hat{X}_{D,C})] = 0$
vanishes because $\tilde{X}$ is the conditional mean (the orthogonal projection argument). Then,
\begin{align*}
D\geq\mathbb{E}[\|X-\hat{X}_{D,C}\|^2]  =\mathbb{E}[\|X-\tilde{X}\|^2]+\mathbb{E}[\|\tilde{X}-\hat{X}_{D,C}\|^2].
\end{align*}

Now consider the Wasserstein-2 distance between the marginals \( p_{\tilde{X}} \) and \( p_{\hat{X}_{D,C}} \), which is defined as:
\[
W_2^2(p_{\tilde{X}}, p_{\hat{X}_{D,C}}) = \inf_{p_{\tilde{X}', \hat{X}'}} \mathbb{E}[\|\tilde{X}' - \hat{X}'\|^2],
\]
where \( \tilde{X}' \sim p_{\tilde{X}} \) and \( \hat{X}' \sim p_{\hat{X}_{D,C}} \). Since \( (\tilde{X}, \hat{X}_{D,C}) \) is one feasible coupling of these marginals,
\[
\mathbb{E}[\|\tilde{X} - \hat{X}_{D,C}\|^2] \geq W_2^2(p_{\tilde{X}}, p_{\hat{X}_{D,C}}).
\]

Therefore,
\[
D \geq \mathbb{E}[\|X - \tilde{X}\|^2] + W_2^2(p_{\tilde{X}}, p_{\hat{X}_{D,C}}).
\]

Since \( H(S | \hat{X}_{D,C}) \leq C \), the marginal distribution \( p_{\hat{X}_{D,C}} \) belongs to the constraint set \( \{ p_{\hat{X}} : H(S | \hat{X}) \leq C \} \). We have, $p_{\hat{X}_{D,C}}$ is one feasible distribution of the set $\left\{ p_{\hat{X}}:\;
\begin{aligned}
    &\inf_{p_{\hat{X}}}   W^2_2(p_{\tilde{X}},p_{\hat{X}}) \\
    &\text{s.t. } H(S|\hat{X}) \leq C
\end{aligned}
\right\}$, then
\[
W_2^2(p_{\tilde{X}}, p_{\hat{X}_{D,C}}) \geq 
\left\{
\begin{aligned}
&\inf_{p_{\hat{X}}} W_2^2(p_{\tilde{X}}, p_{\hat{X}}) \\
&\text{s.t. } H(S | \hat{X}) \leq C
\end{aligned}
\right.
\]

This leads to the outer bound:
\begin{equation*}
\Omega(p_{Z|X}) \! \subseteq \! \left\{ \! (D,C) \! : \! D  \geq  \mathbb{E}[{\|X-\tilde{X}\|^2}] \!+\! \begin{aligned}
    &\inf_{p_{\hat{X}}}  W^2_2(p_{\tilde{X}},p_{\hat{X}}) \\
    &\text{s.t. } H(S|\hat{X}) \leq C
\end{aligned} \! \right\}
\end{equation*}

Now, to show the approximate tightness of this bound, let \( (D', C') \) be any point in the above region. For any \( \epsilon > 0 \), there exists a distribution \( p_{\hat{X}'} \) such that:
\[
H(S |\hat{X}') \leq C', \quad D' + \epsilon \geq \mathbb{E}[\|X - \tilde{X}\|^2] + W_2^2(p_{\tilde{X}}, p_{\hat{X}'}).
\]

By the Markov condition, we can construct a random variable \( \hat{X}' \) such that \( X \to Z \to \hat{X}' \), and
\[
\mathbb{E}[\|\tilde{X} - \hat{X}'\|^2] \leq W_2^2(p_{\tilde{X}}, p_{\hat{X}'}) + \epsilon.
\]
Thus,
\[
\mathbb{E}[\|X - \hat{X}'\|^2] = \mathbb{E}[\|X - \tilde{X}\|^2] + \mathbb{E}[\|\tilde{X} - \hat{X}'\|^2] \leq D' + 2\epsilon.
\]

It follows that:
\begin{equation*}
\begin{split}
    \Omega(p_{Z|X}) \! &\subseteq \! \left\{ (D,C) : D \geq \mathbb{E}{\|X-\tilde{X}\|^2} \! + \! \begin{aligned}
    &\inf_{p_{\hat{X}}}   W^2_2(p_{\tilde{X}},p_{\hat{X}}) \\
    &\text{s.t. } H(S|\hat{X}) \leq C
\end{aligned} \right\}  \!  \subseteq \! \mbox{cl}(\Omega(p_{Z|X})).
\end{split}
\end{equation*}

Now consider the characterization of conditional entropy:
\begin{align*}
    H(S|\hat{X})\! &=\!\! \sum_{s}\sum_{\hat{x}} p_{S,\hat{X}}\log\frac{1}{p_{S|\hat{X}}},\\
    &=\!\! \sum_{s}\sum_{\hat{x}} p_{\hat{X}} p_{S|\hat{X}}\log\frac{1}{p_{S|\hat{X}}}.
\end{align*}

By choosing \( \hat{X} = \tilde{X} \), it follows that \( p_{\hat{X}} = p_{\tilde{X}} \), which yields:
\begin{equation*}
     H(S|\hat{X}) = H(S|\tilde{X}) = \sum_{s}\sum_{\tilde{x}} p_{\tilde{X}} p_{S|\tilde{X}} \log\frac{1}{p_{S|\tilde{X}}},
\end{equation*}
and define:
\begin{equation*}
(D^{(a)},C^{(a)})=\left( \mathbb{E}[\|X-\tilde{X}\|^2], \sum_{s}\sum_{\tilde{x}} p_{\tilde{X}} p_{S|\tilde{X}} \log\frac{1}{p_{S|\tilde{X}}} \right).   
\end{equation*}

Next, for a given rate \( R(\Theta) \), define $\hat{X}^{(U)} \sim p_{\hat{X}}^{(U)}$:
\begin{mini!}|s|[2] % mini! = minimize
{p_{\hat{X}}} % optimization variable
{ H(S|\hat{X})} % objective function
{\label{C_min_Appendix}} % label for optimization problem
{ p_{\hat{X}}^{(U)} = \arg} % optimization result} % optimization result
\addConstraint{\mathbb{E}[\| X-\hat{X} \|^2]}{\leq D.} % constraint 1
%\addConstraint{I(X;Z) = R(\Theta).}% constraint 3
\end{mini!}
and let:
\begin{equation*}
    C_{\min}^{(U)} = \sum_{s}\sum_{\hat{x}^{(U)}} p_{\hat{X}}^{(U)} p_{S|\hat{X}^{(U)}} \log\frac{1}{p_{S|\hat{X}^{(U)}}}.
\end{equation*}

Let $\hat{X} = \hat{X}^{(U)}$ implies \( p_{\hat{X}} = p_{\hat{X}}^{(U)} \), and define:
\[
(D^{(b)}, C^{(b)}) = \left( \mathbb{E}[\|X - \tilde{X}\|^2] + W_2^2(p_{\tilde{X}}, p_{\hat{X}}^{(U)}), C_{\min}^{(\text{U})} \right).
\]
Thus, by selecting \( p_{\hat{X}} = p_{\tilde{X}} \) and \( p_{\hat{X}} = p_{\hat{X}}^{(U)} \), we confirm that both \( (D^{(a)}, C^{(a)}) \) and \( (D^{(b)}, C^{(b)}) \) lie in this region: 
\[
\left\{ (D,C) : D \geq \mathbb{E}{\|X-\tilde{X}\|^2} + \begin{aligned}
    &\inf_{p_{\hat{X}}}  W^2_2(p_{\tilde{X}},p_{\hat{X}}) \\
    &\text{s.t. } H(S|\hat{X}) \leq C
\end{aligned} \right\}.
\]
\end{proof}

%===========================================================================================
\subsection{Proof of Theorem \ref{Theorem_Quantitative_Results}}\label{Appendix_Proof_Quantitative_Results}

\begin{reptheorem}{Theorem_Quantitative_Results}[Quantitative Characterization for General Sources]
Let \( \hat{X}_{D_1, C_1} \) be the optimal reconstruction associated with the point \( (D_1, C_1) \) on the rate-distortion-classification trade-off curve, i.e., in the conventional sense where $
I(X; \hat{X}_{D_1, C_1}) = R(D_1, C_1)$.
Then, the upper-left extreme point of the achievable region \( \Omega(p_{\hat{X}_{D_1, C_1} \mid X}) \) coincides with that of \( \Omega(R) \); specifically, $
(D^{(a)}, C^{(a)}) = (D_1, C_1)$. Consider the lower-right extreme points \( (D^{(b)}, C^{(b)}) \in \Omega(p_{\hat{X}_{D_1, C_1} \mid X}) \) and \( (D_3, C_3) \in \Omega(R) \), where \( C_3 = C_{\min}^{\text{(Conv)}} \) and \( R(D_3, C_3) = R(D_1, \infty) \). The distortion gap between these two points satisfies the following inequality:
\begin{equation}
D_3 - D^{(b)} \geq \sigma_X^2 + \sigma_{\hat{X}_{D_3, C_3}}^2 
- 2 \sigma_{\hat{X}_{D_3, C_3}} \sqrt{\sigma_X^2 - D_1} - 2D_1,
\end{equation}
and the corresponding distortion ratio is lower bounded as:
\begin{equation}
\frac{D_3}{D^{(b)}} \geq 
\frac{\sigma_X^2 + \sigma_{\hat{X}_{D_3, C_3}}^2 
- 2 \sigma_{\hat{X}_{D_3, C_3}} \sqrt{\sigma_X^2 - D_1}}{2D_1}.
\end{equation}

Moreover, in the special case where the squared 2-Wasserstein distance between \( p_X \) and \( p_{\hat{X}_{D_3, C_3}} \) is zero, i.e., \( W_2^2(p_X, p_{\hat{X}_{D_3, C_3}}) = 0 \), which implies \( \sigma_X^2 = \sigma_{\hat{X}_{D_3, C_3}}^2 \), the distortion gap becomes negligible under the following conditions:
\begin{equation}
D_3 - D^{(b)} \approx 0 \quad \text{when } D_1 \approx 0 \text{ or } D_1 \approx \sigma_X^2,
\end{equation}
\begin{equation}
\frac{D_3}{D^{(b)}} \approx 1 \quad \text{when } D_1 \approx \sigma_X^2.
\end{equation}
\end{reptheorem}

\begin{proof} 
The proof idea follows the result in~\cite{UniversalRDPs}. We begin by noting that \( C_3 = C_{\min}^{\text{(Conv)}} \), and from the hypothesis \( R(D_3, C_3) = R(D_1, \infty) \), we define $\hat{X}^{(\text{Conv})} \sim p_{\hat{X}}^{\text{(Conv)}}$:
\begin{mini!}|s|[2] % mini! = minimize
{p_{\hat{X}}} % optimization variable
{ H(S|\hat{X})} % objective function
{} % label for optimization problem
{ p_{\hat{X}}^{\text{(Conv)}} = \arg} % optimization result} % optimization result
\addConstraint{\mathbb{E}[\| X-\hat{X} \|^2]}{\leq D.} % constraint 1
%\addConstraint{I(X;\hat{X}) = R(D,C).}% constraint 3
\end{mini!}
The corresponding minimum classification uncertainty is given by:
\begin{equation*}
    C_{\min}^{\text{(Conv)}} = \sum_{s}\sum_{\hat{x}^{\text{(Conv)}}} p_{\hat{X}}^{\text{(Conv)}} p_{S|\hat{X}^{\text{(Conv)}}} \log\frac{1}{p_{S|\hat{X}^{\text{(Conv)}}}},
\end{equation*}

Next, observe that:
\begin{align*}
D_3&=\mathbb{E}[\|X-\hat{X}_{D_3,C_3}\|^2], \\
&= \sigma^2_X + \sigma_{\hat{X}_{D_3, C_3}}^2 -2 \text{Cov}(X, \hat{X}_{D_3,C_3}), \\
&=\sigma^2_X + \sigma_{\hat{X}_{D_3, C_3}}^2 -2\mathbb{E}[(X-\mu_X)^T(\hat{X}_{D_3,C_3}-\mu_{\hat{X}_{D_3, C_3}})],
\end{align*} 

Thus, we have:
\begin{equation*}
\label{Eq_D3}
\mathbb{E}[(X-\mu_X)^T(\hat{X}_{D_3,C_3}-\mu_{\hat{X}_{D_3, C_3}}) = \frac{\sigma^2_X + \sigma_{\hat{X}_{D_3, C_3}}^2 - D_3}{2}.
\end{equation*}

We now utilize the inequality \( I(X; \mathbb{E}[X | \hat{X}_{D_3,C_3}]) \leq I(X; \hat{X}_{D_3,C_3}) = R(D_1, \infty) \), which implies: $\mathbb{E}[\|X-\mathbb{E}[X|\hat{X}_{D_3,C_3}]\|^2]\geq D_1$. 

Observe that $\hat{X}_{D_3,C_3} - \mu_{\hat{X}_{D_3, C_3}}$ has a certain correlation with $X - \mu_X$, so we can use a linear predictor idea to get a simpler upper bound. By the orthogonality principle:
\begin{equation*}
\mathbb{E}[\|X  -  \mathbb{E}[X | \hat{X}_{D_3,C_3}]\|^2] \! \leq \! \mathbb{E}[\|X - \mu_X - c (\hat{X}_{D_3,C_3} - \mu_{\hat{X}_{D_3, C_3}})\|^2].
\end{equation*}
The inequality says the best predictor $\mathbb{E}[X | \hat{X}_{D_3,C_3}]$ is no worse (in MSE) than any fixed linear predictor of the form $\mu_X + c (\hat{X}_{D_3,C_3} - \mu_{\hat{X}_{D_3, C_3}})$, where
\begin{equation*}
\begin{split}
c \! &= \!\! \frac{\text{Cov}(X-\mu_X, \hat{X}_{D_3,C_3} - \mu_{\hat{X}_{D_3, C_3}})}{\text{Var}({\hat{X}_{D_3,C_3} - \mu_{\hat{X}_{D_3, C_3}}})},\\
& \!\! = \!\! \frac{\mathbb{E}[(X-\mu_X)^T(\hat{X}_{D_3,C_3}-\mu_{\hat{X}_{D_3, C_3}})]}{\sigma_{\hat{X}_{D_3, C_3}}^2} = \frac{\sigma^2_X + \sigma_{\hat{X}_{D_3, C_3}}^2 - D_3}{2\sigma_{\hat{X}_{D_3, C_3}}^2},
\end{split}
\end{equation*}

Therefore, the MSE of the optimal conditional expectation is at most the MSE of this linear estimator.  
\begin{align*}
D_1&\leq\mathbb{E}[\|X-\mathbb{E}[X|\hat{X}_{D_3,C_3}]\|^2],\\
&\leq\mathbb{E}[\|X-\mu_X-c(\hat{X}_{D_3,C_3}-\mu_{\hat{X}_{D_3, C_3}})\|^2],\\
&= \mathbb{E}[\|X - \mu_X\|^2] + c^2 \mathbb{E}[\|\hat{X}_{D_3,P_3} - \mu_{\hat{X}_{D_3, C_3}}\|^2] - 2c \mathbb{E}[(X - \mu_X)^T (\hat{X}_{D_3,P_3} - \mu_{\hat{X}_{D_3, C_3}})],\\
&=\sigma^2_X + c^2 \sigma_{\hat{X}_{D_3, C_3}}^2 \!\!\!\! - 2c\mathbb{E}[(X-\mu_X)^T(\hat{X}_{D_3,C_3} \! - \! \mu_{\hat{X}_{D_3, C_3}})],\\
&=\sigma^2_X + \left( \frac{\sigma^2_X + \sigma_{\hat{X}_{D_3, C_3}}^2 - D_3}{2\sigma_{\hat{X}_{D_3, C_3}}^2} \right)^2 \sigma_{\hat{X}_{D_3, C_3}}^2 - 2 \left( \frac{\sigma^2_X + \sigma_{\hat{X}_{D_3, C_3}}^2 - D_3}{2\sigma_{\hat{X}_{D_3, C_3}}^2} \right) \left( \frac{\sigma^2_X + \sigma_{\hat{X}_{D_3, C_3}}^2 - D_3}{2} \right),\\
&= \sigma^2_X - \frac{(\sigma^2_X + \sigma_{\hat{X}_{D_3, C_3}}^2 - D_3)^2}{4 \sigma_{\hat{X}_{D_3, C_3}}^2},
\end{align*}

Rearranging terms yields:
\begin{equation*}
(\sigma^2_X + \sigma_{\hat{X}_{D_3, C_3}}^2 - D_3)^2 \leq 4 \sigma_{\hat{X}_{D_3, C_3}}^2 (\sigma^2_X - D_1),   
\end{equation*}

Under the assumptions \( \sigma_X^2 + \sigma_{\hat{X}_{D_3,C_3}}^2 - D_3 \geq 0 \) and \( \sigma_X^2 - D_1 \geq 0 \), it follows that:
\begin{equation*}
\begin{split}
\sigma^2_X + \sigma_{\hat{X}_{D_3, C_3}}^2 - D_3 \leq 2 \sigma_{\hat{X}_{D_3, C_3}} \sqrt{\sigma_X^2 - D_1},\\
D_3 \geq \sigma^2_X + \sigma_{\hat{X}_{D_3, C_3}}^2 - 2 \sigma_{\hat{X}_{D_3, C_3}} \sqrt{\sigma_X^2 - D_1},
\end{split}
\end{equation*}

From Theorem \ref{Theorem_CDR_Bound}, we have:
\begin{equation*}
D^{(b)} \leq D_3 \leq 2D_1,
\end{equation*}
Hence, 
\begin{align*}
&D_3 - D^{(b)}\geq \sigma^2_X + \sigma_{\hat{X}_{D_3, C_3}}^2 - 2 \sigma_{\hat{X}_{D_3, C_3}} \sqrt{\sigma_X^2 - D_1} - 2D_1,\\
&\frac{D_3}{D^{(b)}}\geq \frac{\sigma^2_X + \sigma_{\hat{X}_{D_3, C_3}}^2 - 2 \sigma_{\hat{X}_{D_3, C_3}} \sqrt{\sigma_X^2 - D_1}}{2D_1}.
\end{align*}

\underline{For $D_1 = 0$:}
\begin{align*}
D_3 - D^{(b)} &\geq \sigma^2_X + \sigma_{\hat{X}_{D_3, C_3}}^2 - 2 \sigma_{\hat{X}_{D_3, C_3}} \sigma_X,
\end{align*}

Moreover, in the case of $W_2^2(p_X, p_{\hat{X}_{D_3, C_3}}) = 0$ (i.e., $\sigma_X^2 = \sigma_{\hat{X}_{D_3, C_3}}^2$), we have:
\begin{equation*}
D_3 - D^{(b)}\stackrel{D_1\approx0 \hspace{1mm} \text{and} \hspace{1mm} \sigma_X^2 = \sigma_{\hat{X}_{D_3, C_3}}^2}{\approx}0, 
\end{equation*}

\underline{For $D_1 = \sigma_X^2$:}
\begin{align*}
&D_3 - D^{(b)} \geq \sigma^2_X + \sigma_{\hat{X}_{D_3, C_3}}^2 - 2\sigma^2_X,\\
&\frac{D_3}{D^{(b)}} \geq \frac{\sigma^2_X + \sigma_{\hat{X}_{D_3, C_3}}^2}{2 \sigma^2_X}.
\end{align*}

Again, in the case of $W_2^2(p_X, p_{\hat{X}_{D_3, C_3}}) = 0$, then:
\begin{align*}
&D_3 - D^{(b)}\stackrel{D_1\approx\sigma_X^2 \hspace{1mm} \text{and} \hspace{1mm} \sigma_X^2 = \sigma_{\hat{X}_{D_3, C_3}}^2}{\approx}0, \\
&\frac{D_3}{D^{(b)}}\stackrel{D_1\approx\sigma_X^2 \hspace{1mm} \text{and} \hspace{1mm} \sigma_X^2 = \sigma_{\hat{X}_{D_3, C_3}}^2}{\approx} 1.
\end{align*}

A similar method can be utilized to bound the upper-left corner of the curve, i.e., the gap between $(D_1,C_1)$ and the upper-left extreme point $(\tilde{D}^{(a)},\tilde{C}^{(a)})$ of the blue curve in Fig.~\ref{Fig:Universal_GernalSource_CDR}. Let \( \tilde{D}^{(a)} = \mathbb{E}[\|X - \mathbb{E}[X \mid \hat{X}_{D_3,C_3}]\|^2] \). Then:
\[
\tilde{D}^{(a)} \leq \sigma_X^2 - \frac{(\sigma_X^2 + \sigma_{\hat{X}_{D_3,C_3}}^2 - D_3)^2}{4 \sigma_{\hat{X}_{D_3,C_3}}^2},
\]
which, together with \( D_1 \geq \frac{1}{2} D_3 \), yields:
\begin{align*}
&\tilde{D}^{(a)}-D_1 \leq \sigma^2_X - \frac{(\sigma^2_X + \sigma_{\hat{X}_{D_3, C_3}}^2 - D_3)^2}{4 \sigma_{\hat{X}_{D_3, C_3}}^2} - \frac{D_3}{2},\\
&\frac{\tilde{D}^{(a)}}{D_1}\leq \frac{\sigma^2_X - \frac{(\sigma^2_X + \sigma_{\hat{X}_{D_3, C_3}}^2 - D_3)^2}{4 \sigma_{\hat{X}_{D_3, C_3}}^2}}{D_3/2}.
\end{align*}

\underline{For $D_3 = 0$:}
\begin{align*}
\tilde{D}^{(a)} - D_1 &\leq \sigma^2_X - \frac{(\sigma^2_X + \sigma_{\hat{X}_{D_3, C_3}}^2)^2}{4 \sigma_{\hat{X}_{D_3, C_3}}^2},
\end{align*}

If of $W_2^2(p_X, p_{\hat{X}_{D_3, C_3}}) = 0$, then:
\begin{equation*}
\tilde{D}^{(a)}-D_1 \stackrel{D_3\approx0 \hspace{1mm} \text{and} \hspace{1mm} \sigma_X^2 = \sigma_{\hat{X}_{D_3, C_3}}^2}{\approx}0, 
\end{equation*}

\underline{For $D_3 = 2\sigma_X^2$:}
\begin{align*}
&\tilde{D}^{(a)}-D_1 \leq \sigma^2_X - \frac{( \sigma_{\hat{X}_{D_3, C_3}}^2 - \sigma^2_X)^2}{4 \sigma_{\hat{X}_{D_3, C_3}}^2} - \sigma^2_X,\\
&\frac{\tilde{D}^{(a)}}{D_1} \leq \frac{\sigma^2_X - \frac{( \sigma_{\hat{X}_{D_3, C_3}}^2 - \sigma^2_X)^2}{4 \sigma_{\hat{X}_{D_3, C_3}}^2}}{\sigma_X^2}.
\end{align*}

Similarly, in the case of $W_2^2(p_X, p_{\hat{X}_{D_3, C_3}}) = 0$, we have:
\begin{align*}
&\tilde{D}^{(a)}-D_1 \stackrel{D_3\approx 2\sigma_X^2 \hspace{1mm} \text{and} \hspace{1mm} \sigma_X^2 = \sigma_{\hat{X}_{D_3, C_3}}^2}{\approx}0, \\
&\frac{\tilde{D}^{(a)}}{D_1} \stackrel{D_3\approx 2\sigma_X^2 \hspace{1mm} \text{and} \hspace{1mm} \sigma_X^2 = \sigma_{\hat{X}_{D_3, C_3}}^2}{\approx} 1.
\end{align*}
\end{proof}

%===========================================================================================
\subsection{Proof of Theorem \ref{theorem_asymptotic_universality}}\label{Appendix_Proof_asymptotic_universality}
\begin{reptheorem}{theorem_asymptotic_universality}
For any set \( \Theta \), the asymptotic universal rate-distortion-classification function satisfies
\begin{equation*}
    R^{(\infty)}(\Theta) = R(\Theta).
\end{equation*}
\end{reptheorem}

\begin{proof}
The proof approach follows the method presented in~\cite{UniversalRDPs}. Let \( Z \) be jointly distributed with \( X \) such that for every \( (D, C) \in \Theta \), there exists a decoder \( p_{\hat{X}_{D,C}|Z} \) satisfying the constraints \( \mathbb{E}[\Delta(X,\hat{X}_{D,C})] \leq D \) and \( H(S| \hat{X}_{D,C}) \leq C \). Define the product distribution $p_{Z^n|X^n} \triangleq \prod_{i=1}^n p_{Z(i)|X(i)}$, where each \( p_{Z(i)|X(i)} = p_{Z|X} \) for \( i = 1,\dots,n \).

By the strong functional representation lemma~\cite{li2018strong}, there exists a random variable \( V^{(n)} \), independent of \( X^n \), and a deterministic function \( \phi^{(n)} \) such that \( Z^n = \phi(X^n, V^{(n)}) \) and
\[
    H(\phi(X^n, V^{(n)}) | V^{(n)}) \leq I(X^n; Z^n) + \log(I(X^n; Z^n) + 1) + 4.
\]
With access to \( V^{(n)} \) at both encoder and decoder, one can use a class of prefix-free codes indexed by \( V^{(n)} \) to losslessly encode \( Z^n \) with expected codeword length no greater than \( I(X^n; Z^n) + \log(I(X^n; Z^n) + 1) + 5 \).

Furthermore, by the standard functional representation lemma, there exist a random variable \( V_{D,C} \), independent of \( (X^n, V^{(n)}) \), and a deterministic function \( \psi_{D,C} \) such that
\[
    p_{\psi_{D,C}(Z(i), V_{D,C})|Z(i)} = p_{\hat{X}_{D,C}|Z}.
\]
Both \( V^{(n)} \) and \( V_{D,C} \) can be generated from a shared random seed \( U^{(n)} \). Define
\[
    \hat{X}_{D,C}(i) = \psi_{D,C}(Z(i), V_{D,C}), \quad i = 1, \dots, n.
\]
Then it holds that
\begin{align*}
    &\frac{1}{n}\sum\limits_{i=1}^n\mathbb{E}[\Delta(X(i), \hat{X}_{D,C}(i))] = \mathbb{E}[\Delta(X, \hat{X}_{D,C})] \leq D, \\
    &H\left(S \Bigm| \frac{1}{n}\sum\limits_{i=1}^n \hat{X}_{D,C}(i)\right) = H(S | \hat{X}_{D,C}) \leq C.
\end{align*}

In addition, observe that
\begin{align*}
    &\frac{1}{n} I(X^n; Z^n) + \frac{1}{n} \log(I(X^n; Z^n) + 1) + \frac{5}{n} \\
    &= I(X; Z) + \frac{1}{n} \log(n I(X; Z) + 1) + \frac{5}{n} \\
    &\xrightarrow[n \to \infty]{} I(X; Z),
\end{align*}
implying that \( R^{(\infty)}(\Theta) \leq R(\Theta) \).

Now consider any sequence of random variables \( U^{(n)} \), encoding functions \( f^{(n)}_{U^{(n)}}: \mathcal{X}^n \to \mathcal{C}^{(n)}_{U^{(n)}} \), and decoding functions \( g^{(n)}_{U^{(n)},D,C}: \mathcal{C}^{(n)}_{U^{(n)}} \to \hat{\mathcal{X}}^n \) for each \( (D, C) \in \Theta \), satisfying the constraints in~(\ref{eq:constraintD}) and~(\ref{eq:constraintP}). Then,
\begin{align*}
\frac{1}{n}\mathbb{E}[\ell(f^{(n)}_{U^{(n)}}(X^n))] 
&\geq \frac{1}{n} H(f^{(n)}_{U^{(n)}}(X^n) | U^{(n)}) \\
&= \frac{1}{n} I(X^n; f^{(n)}_{U^{(n)}}(X^n) | U^{(n)}) \\
&= \frac{1}{n} \sum\limits_{i=1}^n I(X(i); f^{(n)}_{U^{(n)}}(X^n) | U^{(n)}, X^{i-1}) \\
&= \frac{1}{n} \sum\limits_{i=1}^n I(X(i); f^{(n)}_{U^{(n)}}(X^n), U^{(n)}, X^{i-1}) \\
&\geq \frac{1}{n} \sum\limits_{i=1}^n I(X(i); f^{(n)}_{U^{(n)}}(X^n), U^{(n)}) \\
&= I(X(T); f^{(n)}_{U^{(n)}}(X^n), U^{(n)} | T) \\
&= I(X(T); f^{(n)}_{U^{(n)}}(X^n), U^{(n)}, T),
\end{align*}
where \( T \) is uniformly distributed over \( \{1, \dots, n\} \) and is independent of both \( X^n \) and \( U^{(n)} \). Note that \( \hat{X}_{D,C}(T) \) is a function of \( (f^{(n)}_{U^{(n)}}(X^n), U^{(n)}, T) \) for each \( (D,C) \in \Theta \).

Because
\begin{align*}
    &p_{X(T)} = p_X, \\
    &\mathbb{E}[\Delta(X(T), \hat{X}_{D,C}(T))] = \frac{1}{n} \sum_{i=1}^n \mathbb{E}[\Delta(X(i), \hat{X}_{D,C}(i))] \leq D, \\
    &H(S | \hat{X}_{D,C}(T)) = H\left(S \Bigm| \frac{1}{n} \sum_{i=1}^n \hat{X}_{D,C}(i) \right) \leq C,
\end{align*}
it follows that
\[
    I(X(T); f^{(n)}_{U^{(n)}}(X^n), U^{(n)}, T) \geq R(\Theta).
\]
This completes the proof.
\end{proof}

% %===========================================================================================
\section{Experiments}

\subsection{Architecture}
The architecture employed in our experiments is a stochastic autoencoder combined with GAN and pre-trained classifier regularization. Each model is composed of an encoder, decoder, discriminator, and classifier. The detailed network architectures for these components are summarized in Table \ref{tab:architecture}, with each row indicating a sequence of layers.

\begin{table}[h]
    \centering
    \caption{Detailed architecture for encoder, decoder, discriminator, and pre-trained classifier used in MNIST experiments. l-ReLU denotes Leaky ReLU activation.}
    \label{tab:architecture}
    \begin{tabular}[t]{|c|}
    \multicolumn{1}{c}{Encoder} \\
	\hline 
        Input  \\ 
        \hline 
        Flatten  \\
        \hline 
        Linear, BatchNorm2D, l-ReLU \\
        \hline 
        Linear, BatchNorm2D, l-ReLU \\
        \hline 
        Linear, BatchNorm2D, l-ReLU \\
        \hline 
        Linear, BatchNorm2D, l-ReLU \\
        \hline 
        Linear, BatchNorm2D, Tanh \\
        \hline
        Quantizer \\
    \hline
    \end{tabular}
    \quad
    \begin{tabular}[t]{ |c| }
    \multicolumn{1}{c}{Decoder} \\
	\hline 
        Input  \\
        \hline 
        Linear, BatchNorm1D, l-ReLU \\
        \hline 
        Linear, BatchNorm1D, l-ReLU \\
        \hline
        Unflatten \\ 
        \hline 
        ConvT2D, BatchNorm2D, l-ReLU \\
        \hline 
        ConvT2D, BatchNorm2D, l-ReLU \\
        \hline
        ConvT2D, BatchNorm2D, Sigmoid \\
    \hline
    \end{tabular}
    \quad
    \begin{tabular}[t]{ |c| }
    \multicolumn{1}{c}{Discriminator} \\
	\hline 
        Input  \\ 
        \hline 
        Conv2D, l-ReLU \\
        \hline 
        Conv2D, l-ReLU \\
        \hline
        Conv2D, l-ReLU \\
        \hline
        Linear  \\  
    \hline
    \end{tabular}
    \quad
    \begin{tabular}[t]{ |c| }
    \multicolumn{1}{c}{Classifier} \\
	\hline 
        Input  \\
        \hline 
        Conv2D (10 filters, kernel=5), ReLU \\
        \hline 
        MaxPool2D (kernel size=2) \\
        \hline
        Conv2D (10 filters, kernel=5), ReLU \\
        \hline
        MaxPool2D (kernel size=2) \\
        \hline
        Flatten \\
        \hline
        Linear, ReLU \\
        \hline
        Linear, Softmax \\
    \hline
    \end{tabular}
\end{table}

Table \ref{tab:hyperparameters} summarizes the hyperparameters used in training each component of the model across all experiments.

\begin{table}[!htb]
    \centering
    \caption{Training hyperparameters used across all experiments.}
    \label{tab:hyperparameters}
    \begin{tabular}{ccccc}
    \specialrule{.1em}{.05em}{.05em} 
    & $\alpha$ & $\beta_1$ & $\beta_2$ & $\lambda_{\text{GP}}$ \\ 
    \hline
    Encoder & $10^{-2}$ & $0.5$ & $0.9$ & - \\
    Decoder & $10^{-2}$ & $0.5$ & $0.9$ & - \\
    Critic & $2 \times 10^{-4}$ & $0.5$ & $0.9$ & $10$ \\
    Classifier & $10^{-3}$ & $0.9$ & $0.999$ & - \\
    \hline
    \end{tabular}
\end{table}

\end{document}